\documentclass[11pt,notitlepage,tightenlines,nofootinbib,superscriptaddress,aps,pra]{revtex4-2}
\usepackage[cal=boondox]{mathalfa}        
\usepackage[margin=1in]{geometry}
\usepackage{amsmath,amssymb,amsthm,microtype,mathtools}
\usepackage{enumitem}
\usepackage{graphicx}
\usepackage{float}
\usepackage[section]{placeins}
\usepackage{xcolor}
\usepackage{tikz}
\usepackage{pgfplots}
\usetikzlibrary{arrows.meta,calc,decorations.pathreplacing,positioning,fit,backgrounds,shapes.geometric,patterns,shadows,shapes.misc}
\usepackage{comment}
\usepackage{mathpazo}
\usepackage{braket}
\usepackage{booktabs}

\definecolor{mitred}{RGB}{163, 31, 52} 
\usepackage[colorlinks=true,linkcolor=mitred,citecolor=mitred,urlcolor=mitred]{hyperref}

\pgfplotsset{compat=1.18}
\pgfmathdeclarefunction{exactpt}{1}{%
  \pgfmathparse{0.5*(1 + floor(1/(#1))*(#1)^2 + (1-floor(1/(#1))*(#1))^2)}%
}

\numberwithin{equation}{section}

\newtheorem{theorem}{Theorem}[section]
\newtheorem{lemma}[theorem]{Lemma}
\newtheorem{claim}[theorem]{Claim}
\newtheorem{proposition}[theorem]{Proposition}
\newtheorem{corollary}[theorem]{Corollary}
\newtheorem{definition}[theorem]{Definition}

\DeclareMathOperator{\Id}{\mathbb{I}}
\DeclareMathOperator{\SYT}{SYT}

\newcommand{\C}{\mathbb C}
\newcommand{\abs}[1]{\left\lvert #1\right\rvert}
\newcommand{\norm}[1]{\left\lVert #1\right\rVert}
\newcommand{\tr}[1]{\textnormal{tr}\left[#1\right]}
\newcommand{\ketbra}[2]{\ket{#1}\!\!\bra{#2}}

\newcommand{\swap}{\mathbb{F}}

\newcommand{\eps}{\varepsilon}

\newcommand{\BC}{\mathcal{B}}

\newcommand{\EC}{\mathcal{E}}

\newcommand{\HC}{\mathcal{H}}
\newcommand{\cH}{\mathcal{H}}
\newcommand{\IC}{\mathcal{I}}

\newcommand{\LC}{\mathcal{L}}
\newcommand{\cL}{\mathcal{L}}

\newcommand{\PC}{\mathcal{P}}
\newcommand{\cP}{\mathcal{P}}
\newcommand{\QC}{\mathcal{Q}}
\newcommand{\cQ}{\mathcal{Q}}

\newcommand{\SC}{\mathcal{S}}
\newcommand{\TC}{\mathcal{T}}
\newcommand{\UC}{\mathcal{U}}
\newcommand{\cU}{\mathcal{U}}

\DeclareMathOperator{\End}{End}
\DeclareMathOperator{\trc}{tr}

\begin{document}

\title{Tight Universal Bounds on Quantum Data Hiding with Multipartite Werner States}
\author{Oren Akresh}
\affiliation{Central High School, Champaign, IL 61820, USA.}

\author{Jacob Beckey}
\email{jbeckey@illinois.edu}
\affiliation{Department of Mathematics, University of Illinois at Urbana-Champaign, Urbana, IL 61801, USA.}
\affiliation{Department of Physics, University of Rhode Island, Kingston, RI 02881, USA}

\author{Felix Leditzky}
\affiliation{Department of Mathematics, University of Illinois at Urbana-Champaign, Urbana, IL 61801, USA.}
\date{\today}

\begin{abstract}
Quantum data hiding concerns pairs of states which are highly distinguishable with global measurements, yet nearly indistinguishable when restricted to local operations and classical communication. More than two decades after Eggeling and Werner introduced a multipartite hiding scheme based on Werner states, the optimal dependence of its uniform security guarantee on the number of parties and local dimension remained unknown. We resolve this problem by showing that the distinguishing bias of any pair of $n$-qudit Werner states is $O(n^2/d)$ under measurements whose effects remain positive under partial transposition (PPT) across every bipartition. An explicit pair attains this scaling using only nonadaptive local measurements, establishing worst-case optimality and showing that the PPT relaxation preserves the optimal dependence on both the number of parties and the local dimension. At fixed security, this extends the certified hiding regime from $n=O(d^{1/4})$ to $n=O(\sqrt d)$. Beyond data hiding, the same bound implies that testing any nontrivial unitarily invariant property with adaptive single-copy measurements requires $\Omega(\sqrt d)$ copies, yielding separations for any property with dimension-independent sample complexity under collective measurements. Our proof reduces the distinguishing bias to trace norms of partially transposed operators and analyzes them using mixed Schur-Weyl duality, demonstrating the utility of representation theoretic methods developed for port-based teleportation to data hiding and quantum property testing.
\end{abstract}

\maketitle

\newpage
\tableofcontents
\clearpage

\section{Introduction}
Quantum state discrimination is one of the most fundamental tasks in quantum information processing~\cite{helstrom1969Quantum}. For the binary discrimination problem, where one is given either $\rho_0$ or $\rho_1$ with equal probability, Holevo-Helstrom tells us that the optimal distinguishing probability is 
\begin{align}
    p_{\rm succ} \leq \frac{1}{2} + \frac{1}{4}\|\rho_0-\rho_1\|_1,
\end{align}
giving the trace distance its operational meaning and proving that, if $\|\rho_0-\rho_1\|_1$ is large, there exists a \textit{global} measurement which can distinguish the two states. However, it has been known since the early days of quantum information that one's ability to discriminate two quantum states is highly dependent upon what type of measurements are permitted~\cite{peres1991Optimal}. The extreme version of this fact is that there exist states which are perfectly distinguishable by global measurements, but nearly indistinguishable by local measurements~\cite{bennett1999Quantum}. Around the turn of the century, Terhal, DiVincenzo, and Leung leveraged this fundamental observation to construct a cryptographic primitive now referred to as \textit{quantum data hiding}~\cite{terhal2001Hiding,divincenzo2002Quantum}. In quantum data hiding, one encodes a classical bit in the choice of one of two density operators such that they are readily distinguishable by global measurements while being highly indistinguishable using only local measurements (i.e. the classical data is ``hidden" unless one has access to the entire state). Since the original paper on the topic, many different quantum data hiding schemes have been devised~\cite{eggeling2002Hiding,hayden2004Randomizing,hayden2005Multiparty,lancien2013Distinguishing,lami2021Quantum,wang2025Gaussian,mele2025Optimising}. 

In this work, we focus on one of the longest standing data hiding schemes, due to Eggeling and Werner, in which one hides a classical bit in the choice of two multipartite mixed states satisfying collective unitary symmetry~\cite{eggeling2002Hiding,eggeling2003multipartite}. These so-called multipartite Werner states have a particularly simple form afforded by Schur-Weyl duality, making them amenable to rigorous security analysis~\cite{werner1989Quantum,eggeling2002Hiding}. Despite this fact, the optimal dependence of its uniform security guarantee on the number of parties and local dimension was not known, with the best known bound being due to Harrow and being proved using approximate orthogonality of permutation operators~\cite{harrow2023Approximate}. Before sketching brief history of data hiding with Werner states, let us state another fundamental motivation for this work.

Another ubiquitous application for quantum state discrimination arises in quantum learning theory, where one is concerned with determining the resources that are necessary and sufficient to characterize some property of an unknown quantum system. Particularly relevant to the present work is the field of quantum property testing, in which one is given an unknown quantum state and asked to determine whether it satisfies some property (i.e. purity, productness, etc.) or is far from any such state, given the promise that one of these is the case~\cite{montanaro2016Survey}. A canonical method for proving sample complexity lower bounds for property testers reduces binary quantum state discrimination to the property testing task at hand. The idea is quite simple: construct two states which, if a tester existed, could be distinguished by the tester. Then, a lower bound on the sample complexity of the discrimination task immediately implies a lower bound on the sample complexity of the testing task.

This connection provides very strong motivation for studying distinguishability with restricted measurements. In recent years, such reductions have been used to show that, for many learning and testing tasks, collective measurements require a number of copies independent of the dimension, while even adaptive measurements of one copy at a time have dimension-dependent sample complexity~\cite{chen2022Exponential,huang2022Quantum,liu2024Quantum,ye2025Exponential,noller2025infinite,beckey2025Product}. Testing lower bounds can also imply lower bounds for estimation tasks: an estimate accurate enough to separate states with and without the property would itself give a tester. Thus, restricted-measurement indistinguishability can therefore establish sample complexity separations for both property testing and estimation. These separations are theoretically interesting because they reveal how much collective measurements can outperform measurements on individual copies, but they also have practical relevance because testing and estimation underpin the characterization and benchmarking of quantum devices, which are essential to the development of quantum technologies.

Proving these sample complexity lower bounds against adaptive single-copy protocols is non-trivial, in large part, because of the role of adaptivity. To overcome the challenge of handling adaptive protocols, Refs.~\cite{aharonov2022Quantum, chen2022Exponential} developed a framework, inspired by standard tools in theoretical computer science, which allows one to derive sample complexity lower bounds in many situations. This so-called \textit{learning tree formalism} can sometimes lead to extremely technical analysis of classical probability distributions, requiring tools less familiar to a broad physics audience and, in some cases, being intractable~\cite{hinsche2025SingleCopy}. As we describe below, our methods complement the learning tree formalism, utilize tools more ubiquitously known throughout quantum information theory, and in several cases, provide stronger results with remarkably simple proofs. 

With these motivations in mind, we briefly review the history of data hiding using Werner states before informally stating our main results and sketching our proof technique.

\subsection{A Brief History of Quantum Data Hiding with Werner States}
To understand the working principle of Terhal, DiVincenzo, and Leung's original data hiding scheme, consider a hider, Charlie, trying to hide one classical bit from Alice and Bob by encoding the bit in a Bell state, e.g.
\begin{align}
    0 \mapsto \ket{\Phi^+}=\frac{\ket{00}+\ket{11}}{\sqrt{2}}, \quad 1 \mapsto \ket{\Psi^-} = \frac{\ket{01}-\ket{10}}{\sqrt{2}}.
\end{align}
Charlie gives one qubit of the Bell pair to Alice and the other to Bob. At first glance, the information seems secure: each party's reduced state is $\Id/2$, so neither party can obtain any information about the hidden bit from their local system alone. Once classical communication is allowed, however, Alice and Bob can distinguish the states perfectly. They simply measure their respective qubits in the computational basis and compare their outcomes over the phone: the outcomes agree for $\ket{\Phi^+}$ and disagree for $\ket{\Psi^-}$. This example illustrates a general result that precludes using two \textit{orthogonal} pure states for data hiding: any two orthogonal pure states can be perfectly distinguished by local operations and classical communication~\cite{walgate2000Local}.

This simple example demonstrates the need to encode the classical data into mixed states if one demands both perfect global recoverability and security against arbitrary LOCC protocols. To this end, Terhal, DiVincenzo, and Leung devise the following scheme~\cite{terhal2001Hiding,divincenzo2002Quantum}. Charlie uniformly samples $m$ Bell states, conditioned on the total number of singlets being even when $b=0$ and odd when $b=1$, and distributes one qubit from each pair to Alice and the other to Bob. Each triplet is a $+1$ eigenstate of the two-qubit swap, whereas the singlet is a $-1$ eigenstate. Consequently, the parity of the number of singlets determines whether the joint state lies in the symmetric or antisymmetric subspace under the global swap $\mathbb{F}_{AB}$ between Alice's and Bob's $m$-qubit registers. Letting $D \coloneq 2^m$, the resulting data-hiding states may be expressed as
\begin{align}
    \rho^{(m)}_0 = \frac{\Pi_{\rm sym}}{\tr{\Pi_{\rm sym}}}= \frac{\Id + \mathbb{F}_{AB}}{D(D+1)} \quad \text{and} \quad \rho^{(m)}_1 = \frac{\Pi_{\rm asym}}{\tr{\Pi_{\rm asym}}}= \frac{\Id - \mathbb{F}_{AB}}{D(D-1)},
\end{align}
which are the, now well-known, normalized symmetric and antisymmetric Werner states\footnote{For a pedagogical introduction to Werner states, including both the
two-qubit example and the general qudit construction, see
Ref.~\cite[Secs.~3.1 and 6.1]{leditzky2026Representation}.} on $\mathbb{C}^D \otimes \mathbb{C}^D$~\cite{werner1989Quantum}. Their supports are orthogonal, so the global projective measurement $\{\Pi_{\rm sym},\Pi_{\rm asym}\}$ distinguishes them perfectly. However, the authors of Ref.~\cite{terhal2001Hiding} show that any two-outcome POVM $\{M,\Id-M\}$ implementable by LOCC between Alice and Bob will yield a distinguishing bias satisfying
\begin{align}
   \abs{\tr{M(\rho_0^{(m)}-\rho_1^{(m)})}} \leq \frac{2}{D} = \frac{1}{2^{m-1}}.
\end{align}
For equal prior probabilities, Alice and Bob's optimal success probability is therefore bounded by $ p_{\rm succ}^{\rm LOCC} \leq \frac{1}{2}+\frac{1}{D}.$ Thus, their advantage over random guessing vanishes exponentially in the number of Bell pairs used to construct the hiding states.

Following this work, Eggeling and Werner generalized quantum data hiding to multipartite Werner states on $(\mathbb{C}^d)^{\otimes n}$~\cite{eggeling2002Hiding,eggeling2003multipartite}. In this scheme, they encode a classical bit $b \in \{0,1\}$ into one of two density operators $\rho_0$ and $\rho_1$ satisfying collective-unitary symmetry
\begin{align}
    [\rho_b,U^{\otimes n}] = 0 \quad \text{for all} \quad U \in \UC_d.
\end{align}
Eggeling and Werner showed that, for a fixed number of parties $n$, multipartite Werner states can be chosen to be nearly perfectly indistinguishable by global measurements but have distinguishing bias $O_n(1/d)$ under the measurement effects which remain positive under partial transposition (PPT) of any subset of indices $S \subseteq [n]$. This PPT relaxation is  physically motivated because LOCC is strictly contained within the set of PPT operations~\cite{bennett1999Quantum, chitambar2014Everything}. Moreover, unlike working directly with LOCC protocols, analyzing protocols under the PPT relaxation is often mathematically tractable. 

The central idea in their security proof, similar to Ref.~\cite{terhal2001Hiding}, is that, because the data-hiding pair satisfy a particular symmetry, it suffices to consider measurement effects satisfying this symmetry. Indeed, writing $\Delta \coloneq \rho_0 - \rho_1$, one may twirl any admissible effect $M$ without changing the bias
\begin{align}
    \tr{M \Delta} = \tr{\overline{M}\Delta}, \quad \text{where} \quad \overline{M} \coloneq \underset{U \in \UC_d}{\mathbb{E}}[U^{\otimes n}M (U^\dagger)^{\otimes n}].
\end{align}
The twirling operation preserves the PPT constraints, and by the left-invariance of the Haar measure, one can show that $[\overline{M}, U^{\otimes n}]$ for all $U \in \UC_d$. Schur-Weyl Duality (cf. Cor.~4 in Ref.~\cite{harrow2013Church}) then implies that 
\begin{align}\label{eq:M-expansion}
    \overline{M} = \sum_{\pi \in \SC_n} m_{\pi} P_d(\pi),
\end{align}
where $P_d(\pi)$ is the unitary representation of the permutation $\pi \in \SC_n$~\cite{harrow2013Church}. Using triangle inequality and H\"older's inequality, and noting $\tr{\Delta}=0$, one obtains
\begin{align}
    \abs{\tr{M\Delta}} &= \abs{\sum_{\pi \neq e} m_{\pi} \tr{P_d(\pi) \Delta}} \leq \sum_{\pi \neq e} \abs{m_{\pi}} \|P_d(\pi)\|_{\infty} \|\Delta \|_1 \leq 2 \sum_{\pi \neq e} \abs{m_{\pi}}.
\end{align}
Eggeling and Werner combine the pairwise approximate orthogonality of distinct permutation operators along with the PPT constraint, to show that $\abs{m_{\pi}} = O_n(1/d)$ for all $\pi \neq e$. Noting that $\tr{\Delta}=0$, this yields
\begin{align}
    \abs{\tr{M \Delta}} \leq 2 \sum_{\pi \neq e} \abs{m_{\pi}} = O_n(1/d).
\end{align}
Crucially, this bound only holds with $n$ fixed as $d \rightarrow \infty$. Harrow subsequently proved a uniform $n$-dependent upper bound on this bias as one application of his more general, collective notions of the approximate orthogonality of permutation operators~\cite{harrow2023Approximate}. Concretely, he showed that for any two multipartite Werner states $\rho_0, \rho_1$ and any PPT-BOTH measurement $\{M,\Id-M\}$, one has
\begin{align}
    \abs{\tr{M \Delta}} \leq \frac{6 n^2}{\sqrt{d}}.
\end{align}
Thus, any pair of globally distinguishable multipartite Werner states forms a data-hiding pair with vanishing bias whenever $n = o(d^{1/4})$. Moreover, because this holds for arbitrary Werner state pairs, Schur-Weyl duality and a dimension-counting argument imply the existence of $\sqrt{n!}$ mutually orthogonal pairs, implying that one could simultaneously encode $ \frac{1}{2}\log_2{n!}$ classical bits in multipartite Werner states with pairwise PPT indistinguishability guaranteed by this bound. Thus, Harrow's contribution was to replace the fixed-$n$ upper bound of Eggeling and Werner to an explicit bound that remains meaningful when $n$ grows with $d$.

The main result of this paper is a tightening of Harrow's upper bound to $O(n^2/d)$, which recovers the $1/d$ dependence of Eggeling and Werner while also retaining explicit polynomial dependence on $n$. Moreover, we exhibit a particular pair of multiparite Werner states for which a simple LOCC measurement achieves bias $\Omega(n^2/d)$. Thus, the $n^2/d$ scaling is optimal as a uniform bound over multipartite Werner states. As discussed in the previous section, this then has direct implications for the sample complexity of various property testing tasks. We now overview our main results and sketch their proofs at a very high level.

\subsection{Proof Sketch and Overview of Main Results} \label{sec:overview-main-results}
As mentioned above, a key step in each of the aforementioned security proofs is the use of the symmetry of the data-hiding pair to restrict ones attention to measurement effects with that same symmetry. As mentioned above, Harrow's approach proceeds by expanding a symmetrized PPT-BOTH measurement effect as in Eq.~\eqref{eq:M-expansion} and, after triangle inequality and H\"older's, bounding the magnitude of the individual permutation coefficients uniformly yields a bound on the bias of
\begin{align}
   \abs{\tr{M \Delta}} \leq 2 \left(1+\frac{n^2}{\sqrt{d}}\right) \sum_{\pi \neq e} d^{-\abs{\pi}/2}.
\end{align}
where $|\pi|$ is the minimum number of transpositions necessary to obtain $\pi$ from the identity permutation $e$~\cite{harrow2023Approximate}. To see why this bound cannot yield the desired $1/d$ dependence, note that there are $\binom{n}{2}$ transpositions satisfying $|\pi|=1$. Thus, the leading contribution to this sum is $O(n^2/\sqrt{d})$. To obtain the desired scaling via this term-wise bound, one would need $|m_{\pi}|= O( d^{-\abs{\pi}})$. In the appendix, we construct a simple PPT-BOTH measurement with $|m_{(123)}| = \Theta(1/d)$, while $d^{-\abs{(123)}}=d^{-2}$, precluding a universal coefficient bound of the form $|m_{\pi}|= O( d^{-\abs{\pi}})$.

Here, we take a distinctly different approach. Instead of utilizing the symmetry of the states to expand $M$ in the permutation basis, we will instead expand the difference of states in a telescoping sum and then apply the PPT constraint via two elementary operator identities which, taken together, reduce the problem to bounding the trace norm of a partially transposed operators that arises in this decomposition. To understand the working principle, let $\rho_0, \rho_1$ be multipartite Werner states, i.e. density operators on $(\mathbb{C}^d)^{\otimes n}$ satisfying $[\rho_b, U^{\otimes n}]=0$ for all unitaries $U \in \UC_d$. Let $\tau_d \coloneq \Id/d$ and denote the marginal obtained by tracing out qudits $m+1$ to $n$ as $ \rho^{(m)} \coloneq \operatorname{tr}_{m+1,\ldots,n}(\rho).$ To bound the bias, we will first add and subtract the maximally mixed state and use triangle inequality to write
\begin{align}
    \abs{\tr{M(\rho_0-\rho_1)}} & \leq \abs{\tr{M(\rho_0 -\tau_d^{\otimes n})}} + \abs{\tr{M(\rho_1 -\tau_d^{\otimes n})}}.
\end{align}
Our goal is now to bound the bias achievable when distinguishing a multipartite Werner state $\rho$ from the maximally mixed state. To this end, we use a telescoping sum to write
\begin{align}
    \rho- \tau_d^{\otimes n} &= \rho - \rho^{(n-1)}\otimes \tau_d  + \rho^{(n-1)}\otimes \tau_d - \dotsm - \rho^{(2)} \otimes \tau_d^{\otimes (n-2)}+\rho^{(2)} \otimes \tau_d^{\otimes (n-2)} - \tau_d^{\otimes n}.
\end{align}
Noting that $[\rho,U^{\otimes n}]=0 \implies \rho^{(1)} = \tau_d$ and $\rho^{(n)} \coloneq \rho$. Then, we may write this difference as
\begin{align}
    \rho -\tau_d^{\otimes n} = \sum_{m=2}^n (\rho^{(m)} - \rho^{(m-1)}\otimes \tau_d) \otimes \tau_d^{\otimes (n-m)}.
\end{align}
Note that because it is the difference of two quantum states, each summand is Hermitian and traceless. With this in mind, we may insert this expansion into the bias and use triangle inequality again to obtain
\begin{align}
     \abs{\tr{M(\rho_0-\rho_1)}} \leq \sum_{b=0}^1\sum_{m=2}^n \abs{\tr{M((\rho_b^{(m)} - \rho_b^{(m-1)}\otimes \tau_d) \otimes \tau_d^{\otimes (n-m)})}}.
\end{align}
To simplify further, note that partial transposition of any subset of indices $S \subseteq [n]$ does not change the inner product of two operators $\tr{X^{\Gamma_S} Y^{\Gamma_S}}=\tr{XY}$. Further, note that if $0 \preceq E \preceq \Id$ is an effect and $X$ is traceless, $\tr{EX} \leq \frac{1}{2}\|X\|_1$. We apply these identities after fixing $m$ in the sum above and partially transpose just the $m$-th leg of the summand to obtain
\begin{align}
     \abs{\tr{M(\rho_0-\rho_1)}} \leq \sum_{b=0}^1\sum_{m=2}^n \frac{1}{2}\|((\rho_b^{(m)})^{\Gamma_m} - \rho_b^{(m-1)}\otimes \tau_d)\|_1,
\end{align}
where we have used the fact that the PPT-BOTH condition ensures $0 \preceq M^{\Gamma_m} \preceq \Id$ is a valid effect to be able to apply the trace norm bound. This is the essential proof technique that is common to all of our main results: telescoping sum, invariance of the Hilbert-Schmidt inner product under partial transposition, and an elementary upper bound on the overlap between effects and traceless Hermitian operators. Given this, we may summarize our main results and provide a high-level overview of the remaining proof details.

\begin{theorem}[Symmetric Werner State Indistinguishiability -- Informal, see Thm.~\ref{thm:symmetric-Werner-distinguishing}] Any PPT-BOTH measurement $\{M,\Id-M\}$ used to distinguish symmetric Werner states $\rho_0, \rho_1$ will achieve bias at most
\begin{align}
    \abs{\tr{M(\rho_0-\rho_1)}} \leq \frac{3}{2}  \frac{n(n-1)}{d}.
\end{align}
\end{theorem}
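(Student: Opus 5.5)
The plan is to start from the reduction derived just above the statement:
\begin{align}
\abs{\tr{M(\rho_0-\rho_1)}} \leq \sum_{b=0}^1\sum_{m=2}^n \frac{1}{2}\norm{(\rho_b^{(m)})^{\Gamma_m} - \rho_b^{(m-1)}\otimes \tau_d}_1 .
\end{align}
Matching constants, the target $\frac{3}{2}\frac{n(n-1)}{d}$ follows if for every $m$ and every symmetric Werner state $\rho$ on $m$ qudits we have
\begin{align}
\norm{\rho^{\Gamma_m} - \rho^{(m-1)}\otimes\tau_d}_1 \leq \frac{3(m-1)}{d}.
\end{align}
Summing $\frac{3(m-1)}{2d}$ over $m=2,\dots,n$ and over $b\in\{0,1\}$ then gives exactly $\frac{3}{2}\frac{n(n-1)}{d}$.

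This reduction needs one preliminary check. I will verify that marginals of symmetric Werner states are again symmetric Werner states. Here I take ``symmetric'' to mean commuting with both $U^{\otimes n}$ and all $P_d(\pi)$. Invariance under $U^{\otimes n}$ and under permutations of the kept systems both survive the partial trace, so the per-$m$ bound can be applied to every term.

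For the per-$m$ bound, I would use that $\rho$ commutes with all $U^{\otimes m}$ and all $P_d(\pi)$. Then $\rho = \sum_\lambda p_\lambda \Pi_\lambda/\trc\Pi_\lambda$ is a convex combination of normalized Schur–Weyl isotypic projectors. Both $\rho\mapsto \rho^{\Gamma_m}$ and $\rho\mapsto\rho^{(m-1)}\otimes\tau_d$ are linear, so by convexity of the trace norm it suffices to prove the bound for a single normalized $\Pi_\lambda/\trc\Pi_\lambda$ with $\lambda\vdash m$ and at most $d$ rows.

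For such a $\lambda$, the branching rule gives
\begin{align}
\trc_m(\Pi_\lambda)/\trc\Pi_\lambda = \sum_{\mu=\lambda-\square} q_\mu\, \Pi_\mu/\trc\Pi_\mu,
\end{align}
with explicit weights $q_\mu$. Meanwhile $\Pi_\lambda^{\Gamma_m}$ commutes with $U^{\otimes(m-1)}\otimes\bar U$. By mixed Schur–Weyl duality, i.e.\ the walled Brauer algebra $B_{m-1,1}$, it therefore decomposes into blocks labelled by $\mu\vdash m-1$ (the non-contracted irreps) and $\nu\vdash m-2$ (the contracted irreps, which are images of the ideal generated by the contraction $d\,\Phi^+_{jm}$). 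This is exactly the analysis carried out for port-based teleportation, where the eigenvalues of partially transposed Young projectors are known in terms of contents.

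I will compute both operators block by block. On a non-contracted block $\mu$, $\Pi_\lambda^{\Gamma_m}$ should act as a multiple of $\Pi_\mu\otimes\Id$. Its scalar should differ from that of $\rho^{(m-1)}\otimes\tau_d$ by a relative factor of order $1 + c(\square)/d$, where $c(\square)$ is the content of the removed box and satisfies $\abs{c}\leq m-1$. Summed with weights $q_\mu$, this contributes at most $(m-1)/d$ in trace norm. On the contracted blocks, $\rho^{(m-1)}\otimes\tau_d$ also has support, and $\Pi_\lambda^{\Gamma_m}$ may carry negative eigenvalues there. Bounding the total weight in these blocks uses their dimension ratio, roughly $(m-1)\,d^{m-2}\cdot d^{-(m-1)}$, which gives the remaining $2(m-1)/d$: one share for each operator's mass there.

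The main obstacle is this last step: getting clean, uniform control of the eigenvalues of $\Pi_\lambda^{\Gamma_m}$ on the contracted irreps, where the ratios of hook-type dimensions must be estimated uniformly in $\lambda$ and $d$. I would first try to avoid explicit eigenvalues altogether. The idea is to write $\Pi_\lambda^{\Gamma_m} - \trc_m\Pi_\lambda\otimes\tau_d$ via the expansion of $\Pi_\lambda$ in $P_d(\sigma)$, using that $F_{jm}^{\Gamma_m} = d\,\Phi^+_{jm}$. One would then bound the contraction part by the triangle inequality over $j<m$, each term contributing $O(1/d)$ after normalization. Only if that loses constants would I fall back on the explicit port-based-teleportation spectra.
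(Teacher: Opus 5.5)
Your reduction is the paper's. You use the telescoping sum, Lemma~\ref{lem:overlap-to-one-norm-bound} after transposing the $m$-th leg, and convexity down to a single $\omega_\lambda$. You split into the non-polynomial sector (your non-contracted, weights $\beta^-$) and the polynomial sector (contracted, $\alpha\vdash m-2$). On the non-contracted blocks both operators are scalars with ratio $d/(d+c(\lambda\setminus\beta))$, and your content estimate gives at most $\frac{m-1}{d-m+1}$.

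The gap is the contracted sector, which you flag as the main obstacle but never bound. A dimension count cannot settle it. The normalized $\rho^{(m-1)}\otimes\tau_d$ carries only $O(m/d^2)$ weight there. Essentially the whole order-$(m-1)/d$ contribution comes from $\omega_\lambda^{\Gamma_m}$, whose eigenvalues on these blocks have indefinite sign and are up to $d$ times larger than those of $\rho^{(m-1)}\otimes\tau_d$.

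Your proposed way around explicit eigenvalues also fails, and by a power of $m$ rather than a constant. Take $\lambda=(m)$. Then $(\Pi^{(m)}_{\rm sym})^{\Gamma_m}=\frac1m\bigl(\Id+\sum_{j<m}\psi^+_{jm}\bigr)(\Pi^{(m-1)}_{\rm sym}\otimes\Id)$. The normalized contraction terms $T_j=\frac{1}{m\,d[m]}\psi^+_{jm}(\Pi^{(m-1)}_{\rm sym}\otimes\Id)$ satisfy $T_jT_j^\dagger=\frac{1}{(m\,d[m])^2}\trc_j(\Pi^{(m-1)}_{\rm sym})\otimes\psi^+_{jm}$, whence
\[
\norm{T_j}_1=\frac{\sqrt{(m-1)\,d}}{(d+m-1)\sqrt{d+m-2}}\approx\frac{\sqrt{m-1}}{d},
\]
not $O(1/d)$. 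The triangle inequality over $j$ therefore gives about $(m-1)^{3/2}/d$, while $\sum_jT_j\succeq0$ has trace norm exactly $\frac{m-1}{d+m-1}$. Summed over $m$, your route yields $O(n^{5/2}/d)$ instead of $\frac32\frac{n(n-1)}{d}$.

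What you are missing is one of two ingredients. The paper starts from the same coset expansion but evaluates the sum over $k$ coherently, block by block. It conjugates each $(k,m)$ to $(m-1,m)$ and uses the Vershik--Okounkov action of $(m-1,m)$ in the Gelfand--Tsetlin basis (Lemma~\ref{lem:character-formula}). Prop.~\ref{prop:exact-spectrum}(iii) then shows that $\omega_\lambda^{\Gamma_m}$ acts as the scalar $\frac{1}{q_\beta f_\lambda}\cdot\frac{1}{c(\lambda\setminus\beta)-c(\beta\setminus\alpha)}$ on the block $\QC^d_\alpha\otimes\PC_\beta$, which has dimension $q_\alpha f_\beta$. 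Since the content difference is a nonzero integer, Lemma~\ref{lem:one-box-dimension-bounds} bounds the polynomial sector by $(1+\frac1d)\frac{m-1}{d-m+2}$.

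Alternatively, you can avoid eigenvalues entirely with the operator inequality $-D\preceq\Omega^{\Gamma_m}\preceq D$, where $D=\trc_m\Omega\otimes\Id_m$. This is Claim~\ref{claim:partial-transpose-CP}, which follows from complete positivity of $X\mapsto\trc(X)\Id\pm X^T$. Compressing by the projector $M$ onto $\HC_M$ gives $\norm{M\Omega^{\Gamma_m}M}_1\le\trc(MDM)\le\frac{m-1}{d-m+2}$. This is how the paper handles general Werner states, and it applies verbatim here.

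Finally, converting $(1+\frac1d)\frac{m-1}{d-m+2}+\frac{m-1}{d-m+1}$ into $\frac{3(m-1)}{d}$ requires a hypothesis such as the theorem's $3n(n-1)<2d$.
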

Symmetric Werner states, defined properly below, are Werner states are also satisfy $[P_d(\pi),\rho]=0$ for all permutations $\pi \in \SC_n$. With both of these symmetries, Schur-Weyl duality and Schur's lemma imply that $\rho$, along with all of its marginals, can be expressed as a convex combination of even simpler states. Doing so and using the convexity of the trace norm, results in the need to bound the trace norm of a partially transposed operator that has so much structure we are able to exactly compute the spectrum (see Prop.~\ref{prop:exact-spectrum}). Once we have determined this, simple bounds from standard representation theory of the symmetric group yield the desired upper bound.

This result has several implications, the first of which has to do with quantum property testing~\cite{montanaro2016Survey}. Specifically, the above result will allow us to prove a non-trivial lower bound on any PPT-BOTH tester for a unitarily invariant property.

\begin{corollary}[Testing Unitarily Invariant Properties -- Informal, see Cor.~\ref{cor:testing-unitarily-invariant-properties}] Any PPT-BOTH tester for a non-trivial unitarily invariant property requires $\Omega(\sqrt{d})$ samples to achieve constant bias.
\end{corollary}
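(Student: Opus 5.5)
The plan is to reduce the corollary to the symmetric Werner state theorem by the standard testing-to-discrimination reduction. Let $\mathcal{P}$ be a nontrivial unitarily invariant property of states on $\C^d$: it is closed under $\sigma\mapsto U\sigma U^\dagger$, and some state in $\mathcal{P}$ and some state $\varepsilon$-far from $\mathcal{P}$ both exist. A tester uses $k$ copies and a PPT-BOTH two-outcome measurement $\{M,\Id-M\}$ on $(\C^d)^{\otimes k}$. It accepts every $\sigma\in\mathcal{P}$ with probability at least $2/3$ and rejects every $\sigma$ that is $\varepsilon$-far with probability at least $2/3$. Adaptive single-copy protocols are LOCC among the $k$ copies, so they are included in this class.

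First I average over the symmetry. Pick $\sigma_0\in\mathcal{P}$ and $\sigma_1$ $\varepsilon$-far from $\mathcal{P}$. Unitary invariance of $\mathcal{P}$ implies that $U\sigma_0U^\dagger\in\mathcal{P}$, and that $U\sigma_1U^\dagger$ is still $\varepsilon$-far for every $U$, since the trace distance to the set is unitarily invariant. The tester's guarantees therefore hold pointwise in $U$. Averaging over Haar-random $U$ gives
\begin{align}
\abs{\tr{M(\rho_0-\rho_1)}}\geq \tfrac13, \qquad \rho_b \coloneq \underset{U}{\mathbb{E}}\bigl[(U\sigma_bU^\dagger)^{\otimes k}\bigr].
\end{align}
Each $\rho_b$ commutes with $V^{\otimes k}$ for every $V$, because the Haar measure is invariant. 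Each $\rho_b$ also commutes with every $P_d(\pi)$, since it is a mixture of i.i.d. product states. So $\rho_0,\rho_1$ are symmetric Werner states on $k$ qudits.

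Next I apply the bound. The formal version of the symmetric theorem (Thm.~\ref{thm:symmetric-Werner-distinguishing}) gives
\begin{align}
\frac13\leq \abs{\tr{M(\rho_0-\rho_1)}}\leq \frac32\,\frac{k(k-1)}{d},
\end{align}
so $k(k-1)\geq 2d/9$ and hence $k=\Omega(\sqrt d)$. Running the same argument with a general constant bias $\delta$ in place of $1/3$ gives $k\geq\sqrt{2\delta d/3}$.

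The main point to get right is the averaging step. The tester's guarantee is stated per state, not on average, so I must check that the same inequality survives after averaging over $U$. This works because $\tr{M(\cdot)}$ is linear, so the inequalities $\tr{M\,(U\sigma_0U^\dagger)^{\otimes k}}\geq 2/3$ and $\tr{M\,(U\sigma_1U^\dagger)^{\otimes k}}\leq 1/3$ hold for every $U$ and pass to the Haar average. If the tester is randomized or adaptive, I would absorb that into a single effect $M$. This is legitimate because mixtures of PPT-BOTH effects remain PPT-BOTH, since the PPT-BOTH effects form a convex set.
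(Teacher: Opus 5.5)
Your proposal is correct and follows the paper's proof essentially step for step. You Haar-average a yes-instance and an $\varepsilon$-far instance into two symmetric Werner states, pass the pointwise acceptance guarantees through the average by linearity, and then invoke Theorem~\ref{thm:symmetric-Werner-distinguishing}. The only detail left implicit, here as in the paper, is that the theorem needs $k\ge 2$ and $3k(k-1)<2d$. If the latter fails, $k=\Omega(\sqrt d)$ holds already. The case $k=1$ cannot occur, because both averaged single-copy states equal $\tau_d$ and so have zero bias.
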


To understand the idea here, consider the concrete task of purity testing.  With equal probability, you are given either a pure state or a state which is $\eps$-far from all pure states. The goal is to determine, with probability of success at least $2/3$, which is the case. The standard reduction from distinguishing to testing asks one to determine whether they were given a Haar random pure state or the maximally mixed state. In our framework, this is equivalent to distinguishing between the maximally mixed state on the symmetric subspace and the maximally mixed state on the whole space, i.e.
\begin{align}
   \sigma^{(n)} \coloneq \frac{\Pi_{\rm sym}}{\tr{\Pi_{\rm sym}}} \quad \text{versus} \quad \tau_d^{\otimes n} = \left(\frac{\Id}{d}\right)^{\otimes n}.
\end{align}
The assumption that a PPT-BOTH distinguisher $\{M,\Id-M\}$ exists corresponds to a constant lower bound on the bias $\abs{\tr{M(\sigma^{(n)} - \tau_d^{\otimes n})}}$. Together with the upper bound in the theorem above, we have
\begin{align}
    \Omega(1) \leq \abs{\tr{M(\sigma^{(n)} - \tau_d^{\otimes n})}} \leq O\left(\frac{n^2}{d}\right) \implies n \geq \Omega(\sqrt{d}).
\end{align}
This recovers the known purity testing lower bound from Ref.~\cite{chen2022Exponential} which utilizes the learning tree formalism to model potentially adaptive single-copy algorithms. In a recent work by the first two authors of the present paper, it was shown that this purity testing bound can, remarkably, be directly proved using almost know representation theory at all~\cite{akresh2026Optimal}. Our Theorem~\ref{thm:symmetric-Werner-distinguishing}, while being more involved to prove, is far more general, implying a non-trivial lower bound for any unitarily invariant property. 

As another example, consider the problem of low-rank testing~\cite{childs2007Weak,odonnell2015Quantum}. Given copies of an unknown state $\rho$ on $\mathbb C^d$,the task is to distinguish
$\operatorname{rank}(\rho)\le r$ from
\begin{align}
 \inf_{\operatorname{rank}(\sigma)\le r}
 \frac12\|\rho-\sigma\|_1 \ge \varepsilon.
\end{align}
For $d\ge2r$ and $0<\varepsilon\le1/2$, consider states
that are maximally mixed on Haar-random subspaces of
dimensions $r$ and $2r$, respectively. Writing $P_s$ for
a fixed rank-$s$ orthogonal projector, the corresponding
averaged $n$-copy states are
\begin{align}
 \rho_{\mathrm{yes}}^{(n)}
 &= \int_{\UC_d}
 \left(\frac{UP_rU^\dagger}{r}\right)^{\otimes n}\,dU
 \quad \text{and} \quad
 \rho_{\mathrm{no}}^{(n)}
 = \int_{\UC_d}
 \left(\frac{UP_{2r}U^\dagger}{2r}\right)^{\otimes n}\,dU.
\end{align}
Here $U$ is drawn once and the resulting state is copied
$n$ times. Every state in the yes ensemble has rank $r$,
whereas every state in the no ensemble is trace distance
$1/2$ from the set of states of rank at most $r$.
Both averaged states are symmetric multipartite Werner
states: Haar averaging gives invariance under collective
unitaries, and each tensor power is invariant under
permutations of the copies. With unrestricted measurement access, O'Donnell and Wright give a tester using $O(r^2)$ copies for constant $\eps$~\cite{odonnell2015Quantum}.
The above corollary implies that $\Omega(\sqrt{d})$ copies
are required under PPT-BOTH measurements, strengthening the best-known single-copy lower bound (cf. Theorem 40 in Ref.~\cite{ye2025Exponential}). Thus, for
$d\gg r^4$, these bounds establish an asymptotic sample
complexity separation between unrestricted and PPT-BOTH
measurements. In particular, for fixed $r$, the collective
upper bound is independent of $d$, whereas the PPT-BOTH
lower bound grows as $\sqrt{d}$.

We now turn to our main result in which we settle the performance of the Eggeling-Werner data hiding scheme introduced over two decades ago.
\begin{theorem}[Werner State Indistinguishiability -- Informal, see Thm.~\ref{thm:multipartite-werner-indistinguishability}] Any PPT-BOTH measurement $\{M,\Id-M\}$ used to distinguish Werner states $\rho_0, \rho_1$ will achieve bias at most
\begin{align}
    \abs{\tr{M(\rho_0-\rho_1)}} \leq \frac{3}{2}  \frac{n(n-1)}{d}.
\end{align}
\end{theorem}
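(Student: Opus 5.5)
Following the reduction of Sec.~\ref{sec:overview-main-results}, I would first use the triangle inequality against $\tau_d^{\otimes n}$ and the telescoping decomposition, so that the bias is at most
\begin{align}
\sum_{b=0}^1\sum_{m=2}^n \frac12\big\|(\rho_b^{(m)})^{\Gamma_m} - \rho_b^{(m-1)}\otimes\tau_d\big\|_1 .
\end{align}
Here I use only unitary invariance, which every marginal $\rho^{(m)}$ inherits, and that $M^{\Gamma_m}$ is an effect by the PPT-BOTH assumption. It then suffices to prove, for every $m$-qudit Werner state $\rho$,
\begin{align}
\tfrac12\big\|\rho^{\Gamma_m} - \rho^{(m-1)}\otimes\tau_d\big\|_1 \le \tfrac{3}{2}\cdot\tfrac{m-1}{d}.
\end{align}
Summing over $m$ and $b$ would give $\tfrac32\sum_{m=2}^n\tfrac{m-1}{d}\cdot 2\cdot\tfrac12$, which is of order $\tfrac32\tfrac{n(n-1)}{d}$. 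The constant I aim for per term is whatever matches the symmetric case, Thm.~\ref{thm:symmetric-Werner-distinguishing}.

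Without permutation symmetry I cannot reduce to symmetric states. Instead, I would decompose via Schur–Weyl duality, $\rho=\bigoplus_{\lambda}\Id_{Q_\lambda}\otimes \omega_\lambda$, with $\omega_\lambda$ a state on the Specht module, i.e. an element of $\C[\SC_m]$ represented by $P_d$. The operator $X\mapsto X^{\Gamma_m}-\operatorname{tr}_m(X)\otimes\tau_d$ is linear and commutes with $U^{\otimes(m-1)}\otimes\overline U$. Its image therefore lives in the walled Brauer algebra $\mathcal{B}_{m-1,1}(d)$, and I would analyze it using mixed Schur–Weyl duality, as in port-based teleportation.

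Concretely, I would write $\rho=\sum_\pi c_\pi P_d(\pi)$. The transposition $\Gamma_m$ sends $P_d(\pi)$ to a permutation on $m-1$ legs times a contraction $(\Phi^+)$-type term whenever $\pi$ moves $m$. The subtraction of $\rho^{(m-1)}\otimes\tau_d$ exactly removes the component of $\rho^{\Gamma_m}$ in the ideal-free part, namely the terms where $m$ is fixed, which map to themselves. What remains is $\sum_{j<m} A_j\, \mathbb{F}_{jm}^{\Gamma_m}$-type terms, where $\mathbb{F}_{jm}^{\Gamma_m}=d\,\Phi^+_{jm}$, together with compensating corrections of size $1/d$. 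I would then compute the trace norm irrep-by-irrep in the mixed Schur–Weyl decomposition. The irreps of $\mathcal{B}_{m-1,1}$ are labeled by $\mu\vdash m-1$ (no contraction) and $\nu\vdash m-2$ (one contraction). The ideal generated by $\Phi^+_{jm}$ acts with the Gram matrix whose eigenvalues are $d+\mathrm{cont}(\text{box})$, which is the port-based teleportation structure.

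The key step, and the main obstacle, is bounding this trace norm by $O((m-1)/d)$ uniformly over the non-symmetric $\omega_\lambda$. In the symmetric case the spectrum was computed exactly (Prop.~\ref{prop:exact-spectrum}); here the operator carries an internal $\SC_m$ multiplicity. My first attempt would use convexity of the trace norm to reduce to extreme Werner states $\rho=\Id_{Q_\lambda}\otimes\ketbra{v}{v}/\dim Q_\lambda$. I would then show that in each mixed irrep the operator is a rank-$O(1)$ perturbation whose weight is governed by the branching $\lambda\to\lambda-\square$. The weight would be bounded by ratios $\dim Q_{\lambda-\square}/\dim Q_\lambda\cdot d$ and contents, and the sum over removable boxes controlled by $\sum_{\square}|\mathrm{cont}(\square)|/d\le (m-1)/d$.

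If convexity fails to reduce cleanly because of the internal multiplicity, I would fall back on a Hölder bound: the contraction term has operator norm controlled by the content of the Jucys–Murphy element $\sum_{j<m}(j\,m)$, whose eigenvalues are box contents and thus bounded by $m-1$ in absolute value, divided by $d$. This yields the same $(m-1)/d$ scaling.
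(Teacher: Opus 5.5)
\textbf{There is a genuine gap: the polynomial (one-contraction) sector is never controlled for non-symmetric Werner states.}

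\emph{What matches the paper.} Your reduction to the per-term target $\|\rho^{\Gamma_m}-\rho^{(m-1)}\otimes\tau_d\|_1\le 3(m-1)/d$ is the paper's. So are the convexity step and the mixed Schur--Weyl bookkeeping (no-contraction irreps $\QC^d_{\beta^-}$, one-contraction irreps $\QC^d_\alpha$). Your observation that the $\pi(m)=m$ terms of $\rho^{\Gamma_m}$ cancel exactly against $\rho^{(m-1)}\otimes\tau_d$ is correct. In the no-contraction sector your content heuristic also works. Distinct $\beta$ give inequivalent irreps $\QC^d_{\beta^-}$, so only the diagonal branch blocks $\sigma_\beta$ survive. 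That sector then contributes at most $\sum_{\beta\prec\lambda}t_\beta\,\frac{|c(\lambda\setminus\beta)|}{d+c(\lambda\setminus\beta)}\le\frac{m-1}{d-m+1}$, weighted by $t_\beta=\mathrm{tr}\,\sigma_\beta$ rather than summed over boxes.

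\emph{Where it breaks.} The gap is the step you flag as the main obstacle, the polynomial sector $\HC_M$. There all branches $\beta\succ\alpha$ share the same irrep $\QC^d_\alpha$. So the operator acts as $\Id_{\QC^d_\alpha}\otimes Y_\alpha$, where $Y_\alpha$ is a matrix on $\bigoplus_{\beta\succ\alpha}\PC_\beta$ that is not block diagonal in $\beta$, because it carries the coherences $J_{\lambda\to\beta}\sigma J_{\lambda\to\beta'}^\dagger$.
\begin{itemize}
\item Reducing to pure $\sigma=\ketbra{v}{v}$ does not remove these coherences, since a generic $v$ has components in several branches.
\item Nothing supports the ``rank-$O(1)$ perturbation'' claim.
\item The H\"older fallback fails as stated. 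Partial transposition does not preserve norms, so the content bound for $J_m=\sum_{j<m}P_d(j,m)$ on $\PC_\lambda$ says nothing about $J_m^{\Gamma_m}=\sum_{j<m}\psi^+_{jm}$. The eigenvalues of the latter on the $(\alpha,\beta)$ block are $d+c(\beta\setminus\alpha)$, which are of order $d$.
\item Concretely, write the remainder as $\sum_{j<m}(\psi^+_{jm}-\Id/d)(B_j\otimes\Id_m)$ with $B_j=\sum_{\pi_0\in\SC_{m-1}}c_{(j,m)\pi_0}P'_d(\pi_0)$. H\"older applied to this overshoots the true trace norm by a factor of order $d^2$ already for $\lambda=(2)$.
\item Even in the symmetric case, the polynomial eigenvalues are $\frac{1}{q_\beta f_\lambda}\bigl(\frac{1}{c(\lambda\setminus\beta)-c(\beta\setminus\alpha)}-\frac1d\bigr)$, and the bracket is $O(1)$. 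The $(m-1)/d$ comes from the small \emph{dimension} of the sector, $\sum_{\alpha\prec\beta}q_\alpha\le\frac{m-1}{d-m+2}\,q_\beta$, not from small eigenvalues.
\end{itemize}

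\emph{The missing idea.} You need a positivity bound that controls all coherences at once, with no spectral computation (Claim~\ref{claim:partial-transpose-CP}).
\begin{enumerate}
\item The map $X\mapsto\mathrm{tr}(X)\Id\pm X^T$ has Choi matrix $\Id\pm\swap\succeq0$, so it is completely positive.
\item Hence, for $\Omega=\Omega_{\lambda,\sigma}\succeq0$ and $D=\mathrm{tr}_m\Omega\otimes\Id_m$, one has $-D\preceq\Omega^{\Gamma_m}\preceq D$.
\item Compressing by the projector $M$ onto $\HC_M$ preserves this ordering. It gives $\|M\Omega^{\Gamma_m}M\|_1\le\mathrm{tr}(MDM)=\sum_{\beta\prec\lambda}t_\beta\sum_{\alpha\prec\beta}q_\alpha/q_\beta\le\frac{m-1}{d-m+2}$.
\item Therefore $\|MX_{\lambda,\sigma}M\|_1\le\bigl(1+\frac1d\bigr)\frac{m-1}{d-m+2}$.
\item Combined with the block-diagonal non-polynomial estimate, this yields $\|X_{\lambda,\sigma}\|_1\le 3(m-1)/d$.
\end{enumerate}
Without this or an equivalent ingredient, your proposal leaves the non-symmetric case unproven. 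That case is exactly what this theorem adds beyond Theorem~\ref{thm:symmetric-Werner-distinguishing}.
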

The proof uses the above telescoping argument to reduce the distinguishing bias to trace norms of partially transposed operators. Mixed Schur–Weyl duality separates these operators into polynomial and non-polynomial sectors. An operator-order inequality controls all coherences in the polynomial sector simultaneously, while the non-polynomial blocks can be evaluated explicitly.

At any fixed security threshold $\varepsilon>0$, the bound guarantees bias at most $\varepsilon$ for every pair when $n\leq c \sqrt{\varepsilon d}$, for suitabyl chosen constant $c$. As pointed out in Ref.~\cite{harrow2023Approximate}, Schur–Weyl duality and a dimension counting argument imply there exist $N=\sum_{\lambda\vdash n}f_\lambda=2^{\Theta(n\log n)}$ mutually orthogonal Werner states, giving $\binom{N}{2}$ hiding pairs within this family. Taking $n=\Theta(\sqrt d)$ therefore certifies $2^{\Theta(\sqrt d\log d)}$ globally distinguishable messages at fixed security. The following proposition implies that the uniform security guarantee cannot have better asymptotic scaling, although particular pairs may remain secure well beyond this regime (as we discuss below).

\begin{proposition}[Tightness of Bounds -- Informal, see Prop.~\ref{prop:haar-vs-max-mixed-distinguisher}] 
    There exists a non-adaptive product measurement $\{M,\Id-M\}$ which distinguishes Werner states $\sigma^{(n)}$ and $\tau_d^{\otimes n}$ with bias 
    \begin{align}
        \tr{M(\sigma^{(n)} - \tau_d^{\otimes n}) } \geq \Omega\left(\frac{n^2}{d}\right).
    \end{align}
    Thus, the bound in Theorem~\ref{thm:multipartite-werner-indistinguishability} is asymptotically tight.
\end{proposition}
This not only establishes the worst-case optimality of our uniform upper bound on the bias, but it shows that, remarkably, the PPT-BOTH relaxation is tight. That is, even if we could analyze an exact restriction to LOCC, the existence of the above LOCC distinguisher with bias $\Omega(n^2/d)$ implies that one could not hope to tighten the upper bound uniformly. This is surprising because the PPT relaxation can lead to drastically looser results, compared to LOCC, in various quantum information processing tasks~\cite{beigi2010Approximating,bennett1999Quantum,cheng2023Discrimination,liu2023Complexity}.

We note that, although our bound is worst-case optimal, particular pairs of Werner states can be substantially harder to distinguish. For even $n$, Lancien and Winter~\cite{lancien2013Distinguishing} construct orthogonal Werner states whose PPT-BOTH distinguishing bias is $O(d^{-n/2})$, with optimal dimension dependence up to factors depending on $n$. Thus, worst-case optimality leaves room for much stronger hiding in particular constructions, highlighting the substantial variation in distinguishability within the family of multipartite Werner states. 

Together, our results establish the optimal uniform security scaling of multipartite Werner states and imply sample complexity lower bounds for testing unitarily invariant properties quantum under restricted measurements. The proof provides a common framework for both applications, combining a telescoping reduction with mixed Schur–Weyl duality to control trace norms of partially transposed operators. We hope this framework opens a fruitful route for applying representation-theoretic tools developed for port-based teleportation and other applications in quantum information theory to the broader study of the power and limitations of restricted-measurement protocols.

\section{Preliminaries}
\subsection{Essential Notation, Definitions, and Elementary Results}
Let us first introduce some notation and basic facts. 
We denote the algebra of linear operators on a Hilbert space $\cH$ by $\cL(\cH)$. Generally, we work in the Hilbert space $(\C^d)^{\otimes n}$. Let $\tau_d \coloneqq \Id_d/d$ be the maximally mixed state on $\C^d$.
For two Hermitian operators $X,Y$ we write $X\preceq Y$ iff $Y-X\succeq 0$, that is, $Y-X$ is positive semidefinite.

The class of Local Operations and Classical Communication (LOCC), while being well-motivated experimentally, is notoriously difficult to work with mathematically~\cite{chitambar2014Everything}. LOCC is strictly contained in the set of separable measurements (SEP), which admit a simpler mathematical form and, in turn, are contained within the class of all PPT measurements~\cite{chitambar2014Everything,matthews2009Distinguishability,lancien2013Distinguishing}. We have the following strict relationships between these classes
\begin{align}
    \operatorname{LOCC} \subsetneq \operatorname{SEP} \subsetneq \operatorname{PPT}.
\end{align}
Thus, allowing for PPT measurements is genuine relaxation of LOCC. To understand how the PPT relaxation yields mathematically tractable constraints, let us formally define partial transpositon.
\begin{definition}[Partial Transposition and PPT Condition]
\label{def:partial-transpose-ppt-multioutcome}
    Consider an operator $M \in \LC((\mathbb{C}^d)^{\otimes n})$
    and a subset of qudits $S \subseteq [n]$. Fix an orthonormal
    basis on each subsystem and write
    \begin{align}
        M = \sum_{\boldsymbol{i},\boldsymbol{j}\in[d]^n}
        M_{\boldsymbol{i},\boldsymbol{j}}
        \bigotimes_{k=1}^n \ketbra{i_k}{j_k}^{(k)},
        \qquad
        M_{\boldsymbol{i},\boldsymbol{j}}
        = \bra{\boldsymbol{i}}M\ket{\boldsymbol{j}},
    \end{align}
    where the superscript $(k)$ denotes the $k$-th subsystem.
    The partial transpose of $M$ on subsystems $S$, denoted
    $M^{\Gamma_S}$, transposes those tensor factors and leaves
    the others unchanged:
    \begin{align}
        M^{\Gamma_S}
        = \sum_{\boldsymbol{i},\boldsymbol{j}\in[d]^n}
        M_{\boldsymbol{i},\boldsymbol{j}}
        \bigotimes_{k=1}^n
        \begin{cases}
            \ketbra{j_k}{i_k}^{(k)}, & k\in S,\\
            \ketbra{i_k}{j_k}^{(k)}, & k\notin S.
        \end{cases}
    \end{align}
    We say $M$ is PPT if
    \begin{align}
        M^{\Gamma_S}\succeq0
        \quad \text{for every } S\subseteq[n].
    \end{align}
    In other words, $M$ remains positive semidefinite under
    partial transposition across every bipartition of
    $(\mathbb{C}^d)^{\otimes n}$.
\end{definition}
The partial transposition operation is mathematically easy to work with because it satisfies a few simple, but important properties.
\begin{proposition}[Partial Transposition Properties]\label{prop:partial-transpose-properties}
    Let $X$ and $Y$ be arbitrary operators on $(\C^d)^{\otimes n}$ and $S \subseteq [n]$. The following statements about the partial transpose are all true. \begin{enumerate}
        \item \textit{Linearity.} $(c_1X + c_2 Y)^{\Gamma_S} = c_1X^{\Gamma_S} + c_2Y^{\Gamma_S}$ for constants $c_1$ and $c_2$.
        \item \textit{Diagonal Operator Invariance.} Diagonal operators are invariant, in particular, $\Id^{\Gamma_S} = \Id$.
        \item \textit{Trace Invariance.} $\tr{X} = \tr{X^{\Gamma_S}}$
        \item \textit{Hilbert-Schmidt Invariance.} $\tr{XY} = \tr{X^{\Gamma_S}Y^{\Gamma_S}}$.
    \end{enumerate}
\end{proposition}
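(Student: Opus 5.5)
All four properties follow from one observation: $\Gamma_S$ is the linear map on $\LC((\C^d)^{\otimes n})$ fixed by its action on the product matrix units. Write
\begin{align}
    E_{\boldsymbol{i},\boldsymbol{j}} \coloneqq \bigotimes_{k=1}^n \ketbra{i_k}{j_k}^{(k)}.
\end{align}
These operators form a basis of $\LC((\C^d)^{\otimes n})$. Definition~\ref{def:partial-transpose-ppt-multioutcome} sends $E_{\boldsymbol{i},\boldsymbol{j}}$ to the matrix unit in which $i_k$ and $j_k$ are exchanged for every $k\in S$. Equivalently, $\Gamma_S = \bigotimes_{k\in S} T_k \otimes \bigotimes_{k\notin S}\mathrm{id}_k$, where $T_k$ is the transpose on the $k$-th factor in the fixed basis. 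The plan is to prove each item first on matrix units and then extend by (bi)linearity.

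\emph{Linearity} is immediate. The coefficient of $E_{\boldsymbol{i},\boldsymbol{j}}$ in $c_1X+c_2Y$ is $c_1X_{\boldsymbol{i},\boldsymbol{j}}+c_2Y_{\boldsymbol{i},\boldsymbol{j}}$, and the definition only relabels matrix units while keeping these coefficients. This is the fact that licenses all later reductions to basis elements.

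\emph{Diagonal invariance.} Here "diagonal" means diagonal in the fixed product basis, so $X_{\boldsymbol{i},\boldsymbol{j}}=0$ unless $\boldsymbol{i}=\boldsymbol{j}$. For $\boldsymbol{i}=\boldsymbol{j}$, exchanging $i_k$ and $j_k$ does nothing, so every surviving term is fixed. In particular this gives $\Id^{\Gamma_S}=\Id$.

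\emph{Trace invariance.} By linearity it suffices to check matrix units. We have $\tr{E_{\boldsymbol{i},\boldsymbol{j}}}=\prod_k \delta_{i_k j_k}$. The transposed unit has trace $\prod_{k\in S}\delta_{j_k i_k}\prod_{k\notin S}\delta_{i_k j_k}$, which is the same number because $\delta$ is symmetric.

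\emph{Hilbert--Schmidt invariance.} The map $(X,Y)\mapsto \tr{XY}$ is bilinear, and by linearity so is $(X,Y)\mapsto\tr{X^{\Gamma_S}Y^{\Gamma_S}}$. It therefore suffices to take $X=E_{\boldsymbol{i},\boldsymbol{j}}$ and $Y=E_{\boldsymbol{k},\boldsymbol{l}}$. The trace factorizes over tensor factors, so
\begin{align}
    \tr{XY}=\prod_{m}\tr{\ketbra{i_m}{j_m}\ketbra{k_m}{l_m}}=\prod_m \delta_{j_m k_m}\delta_{l_m i_m}.
\end{align}
After partial transposition, each factor $m\in S$ becomes $\tr{\ketbra{j_m}{i_m}\ketbra{l_m}{k_m}}=\delta_{i_m l_m}\delta_{k_m j_m}$, which is the same pair of Kronecker deltas. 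So the two products agree term by term, and bilinearity extends the identity to all $X,Y$. Equivalently, this is the single-factor identity $\tr{A^TB^T}=\tr{(BA)^T}=\tr{BA}=\tr{AB}$ applied to each factor in $S$.

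No step is a real obstacle. The only point needing care is that items 2 and 4 depend on the fixed basis. Item 4 must be proved for arbitrary, not necessarily Hermitian, $X,Y$, which the matrix-unit argument handles with no positivity or Hermiticity assumptions.
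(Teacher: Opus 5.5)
Your proof is correct and complete. The paper gives no proof of this proposition: it lists the four items as standard facts and uses them directly. Your argument is the standard one that fills this in. You get linearity straight from the definition, check items 2--4 on the product matrix units $E_{\boldsymbol{i},\boldsymbol{j}}$, and extend by (bi)linearity, with no Hermiticity assumptions needed.

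Your tensor-factorized description $\Gamma_S=\bigotimes_{k\in S}T_k\otimes\bigotimes_{k\notin S}\mathrm{id}_k$ is the form the paper actually relies on later, beyond the four listed items. It uses it to pull $\Gamma_m$ through $\otimes\,\tau_d^{\otimes(n-m)}$ in the telescoping bound, and to write $\Gamma_B=\mathrm{id}_A\otimes T_B$ in the Choi-matrix argument. It also uses $\tr{X^{\Gamma_m}Y}=\tr{XY^{\Gamma_m}}$ in the sector-decomposition claim. That identity follows from item 4 together with involutivity $(X^{\Gamma_S})^{\Gamma_S}=X$, which is immediate from your matrix-unit picture.

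One small wording point: only item 2 genuinely depends on the basis, through the meaning of `diagonal'. Item 4 holds in any fixed product basis, provided the same basis is used to transpose both $X$ and $Y$.
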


When considering binary distinguishing tasks, it suffices to consider two-outcome POVMs $\{M, \Id - M\}$ on $(\mathbb{C}^d)^{\otimes n}$, where we adopt the convention of $M$ corresponding to the guess ``$\rho_0$" (thus, $\Id-M$ corresponds to ``$\rho_1$."
\begin{definition}[PPT-BOTH Condition]\label{def:PPT-both-condition}
    A two-outcome POVM $\{M, \Id - M\}$ is PPT-BOTH if $M$ and $\Id -M$ are PPT. Equivalently, it is PPT-BOTH if \begin{align}
        0 \preceq M^{\Gamma_S} \preceq \Id \quad \forall S \subseteq [n].
    \end{align}
\end{definition}
In fact, restricting to binary POVMs doesn't affect the strict containment of the measurement classes (see, for example, Appendix C of Ref.\cite{harrow_testing_2010})
\begin{align}
    \operatorname{LOCC} \subsetneq \operatorname{SEP-BOTH} \subsetneq \operatorname{PPT-BOTH}.
\end{align}
This has the following essential implication for our work. Treating each copy as a separate party, any adaptive single-copy learning/testing protocol can be viewed as one-way LOCC measurement $\{M,\Id-M\}$ on $\HC^{\otimes n}$ across the partition $\HC_1:\cdots:\HC_n$. Thus, our PPT-BOTH bounds will immediately imply bounds for the experimentally-motivated adaptive single-copy setting.
 
Many of our main results require the use of various operator norms, which we now define.
\begin{definition}[Schatten $p$-norms]
    Let $1\leq p<\infty$ and $X$ be an $m\times m$ operator.
    Denote the singular values of $X$ by
    $\sigma_1\geq\sigma_2\geq\cdots\geq\sigma_m\geq0$.
    The Schatten $p$-norm is defined by
    \begin{align}
        \|X\|_p \coloneq
        \left(\sum_{i=1}^m \sigma_i^p\right)^{1/p}.
    \end{align}
    As $p\to\infty$, this norm approaches the maximum singular
    value, so we define $\|X\|_\infty\coloneq\sigma_1$.
\end{definition}
The Schatten norms are unitarily invariant and multiplicative
under tensor products:
\begin{align}
    \|UXV\|_p &= \|X\|_p \quad \text{and} \quad
    \|X\otimes Y\|_p = \|X\|_p\|Y\|_p,
\end{align}
where $U$ and $V$ are unitaries and $1\leq p\leq\infty$.
The norm $\|\cdot\|_\infty$ is called the operator norm or
spectral norm, while $\|\cdot\|_1$ is called the trace norm. The trace norm is additive over orthogonal direct sums:
\begin{align}
    \left\|\bigoplus_j X_j\right\|_1
    = \sum_j\|X_j\|_1.
\end{align}
For $X\succeq0$, its singular values equal its eigenvalues,
so $\|X\|_\infty$ is its largest eigenvalue and
$\|X\|_1=\tr{X}$. In particular, every density operator
$\rho$ satisfies $\|\rho\|_1=1$. Next, prove a small lemma that is essentially a tightening of H\"older's inequality, $\abs{\tr{XY}} \leq \|X\|_\infty \|Y\|_1$, possible when $X$ is a valid effect and $Y$ is traceless Hermitian.

\begin{lemma}\label{lem:overlap-to-one-norm-bound}
    If $X$ is an effect $0 \preceq X \preceq \mathbb{I}$, and $Y$ is traceless and Hermitian, then $\tr{XY} \leq \frac{1}{2} \|Y\|_1$.
\end{lemma}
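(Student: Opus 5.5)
The plan is to use the Jordan decomposition of $Y$. Since $Y$ is Hermitian, write $Y = Y_+ - Y_-$, where $Y_\pm \succeq 0$ have orthogonal supports; these are the positive and negative parts coming from the spectral decomposition. Then $\|Y\|_1 = \tr{Y_+} + \tr{Y_-}$. The traceless assumption gives $\tr{Y_+} = \tr{Y_-}$, and hence
\begin{align}
    \tr{Y_+} = \tr{Y_-} = \tfrac{1}{2}\|Y\|_1 .
\end{align}
This identity is the only place tracelessness enters, and it is the source of the factor $\tfrac12$ compared with plain H\"older.

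Next we bound $\tr{XY} = \tr{XY_+} - \tr{XY_-}$ term by term. For the first term, $X \preceq \Id$ and $Y_+ \succeq 0$ give $\tr{XY_+} \leq \tr{Y_+}$. The justification is that $\tr{(\Id - X)Y_+} = \tr{Y_+^{1/2}(\Id-X)Y_+^{1/2}} \geq 0$. For the second term, $X \succeq 0$ and $Y_- \succeq 0$ give $\tr{XY_-} \geq 0$ by the same square-root argument. Combining the two,
\begin{align}
    \tr{XY} \leq \tr{Y_+} = \tfrac{1}{2}\|Y\|_1 .
\end{align}

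The same argument applied to $-Y$, or equivalently to the effect $\Id - X$ using $\tr{(\Id-X)Y} = -\tr{XY}$ since $\tr{Y}=0$, also yields $\abs{\tr{XY}} \leq \tfrac12\|Y\|_1$. We will note this two-sided version because the telescoping argument in Sec.~\ref{sec:overview-main-results} applies the lemma to absolute values, with $X = M^{\Gamma_m}$.

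No step is a genuine obstacle. The only point needing care is the positivity fact $\tr{AB}\geq 0$ for $A,B\succeq 0$, which we will state explicitly via the square-root factorization above.
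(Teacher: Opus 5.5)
Your proof is correct and is essentially the paper's argument: Jordan decomposition of $Y$, tracelessness giving $\tr{Y_+}=\tr{Y_-}=\frac12\|Y\|_1$, then dropping $\tr{XY_-}\geq 0$ and bounding $\tr{XY_+}\leq\tr{Y_+}$. Your extra remark that applying the same bound to the effect $\Id-X$ yields the two-sided estimate $\abs{\tr{XY}}\leq\frac12\|Y\|_1$ is a worthwhile addition, since the paper applies the lemma to absolute values without stating this step.
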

\begin{proof}
    Let $Y = Y_+ - Y_-$, where $Y_+$ and $Y_- $ are positive operators acting on the positive and negative eigenspaces, respectively. Note that $\tr{Y}=0 \implies \tr{Y_+} = \tr{Y_-} = \frac{1}{2} \|Y\|_1.$ We may then write
    \begin{align}
        \tr{XY} &= \tr{XY_+} - \tr{XY_-},\\
        &\leq \tr{XY_+}, \quad & 0 \preceq Y_- \implies \tr{XY_-} \geq 0\\
        &\leq \tr{Y_+}, \quad &X \preceq \mathbb{I}\\
        &= \frac{1}{2} \|Y\|_1,
    \end{align}
    as desired.
\end{proof}

\subsection{Standard Schur-Weyl Duality}\label{sec:schur-weyl}

Let $\SC_n$ denote the permutation group on $n$ objects. 
We consider the representation on $(\C^d)^{\otimes n}$ by permuting systems via the unitary permutation operators 
\begin{align}
\pi\mapsto P_d(\pi)= \sum_{i_1,\dots,i_n \in [d]}\ketbra{i_{\pi^{-1}(1)},\dots,i_{\pi^{-1}(n)}}{i_1,\dots,i_n}.
\label{eq:Sn-representation}
\end{align}

We use $\vee^n(\C^d) \leq (\C^d)^{\otimes n}$ to denote the \emph{symmetric subspace}, defined as the subspace invariant under the action of all permutations: 
\begin{align} 
    \vee^n(\C^d)=\lbrace |v\rangle\in (\C^d)^{\otimes n} : P_d(\pi)|v\rangle = |v\rangle \text{ for all $\pi\in S_d$}\rbrace
\end{align}
That subspace has a corresponding orthogonal projector that we denote by $\Pi_{\text{sym}}^{d,n}$, or simpy $\Pi_n$ when the local dimension $d$ is clear from context.
It is given by the expression
\begin{align}
    \Pi_{\text{sym}}^{d,n}  = \frac{1}{n!}\sum_{\pi \in \SC_n}P_d(\pi),
    \label{eq:symmetric-subspace-projector}
\end{align}
and the dimension of the symmetric subspace is equal to
\begin{align}
    d[n]\coloneqq \dim \vee^n(\C^d) = \trc \Pi_{\mathrm{sym}}^{d,n} = \binom{d+n-1}{n}.
\end{align}

Along with the symmetric group $\SC_n$, we also consider the representation on $(\C^d)^{\otimes n}$ of the group $\mathcal{U}_d$ of $d\times d$ unitary matrices, acting as
\begin{align}
    U \mapsto U^{\otimes n}.
    \label{eq:Ud-representation}
\end{align} 
As a compact group, $\UC_d$ possesses a unique probability measure called the Haar measure, which is invariant under left- and right-translations by a fixed unitary~\cite{mele2024Introduction}. 

We define the $n$-fold twirling channel on $(\mathbb{C}^d)^{\otimes n}$ as the average over conjugating by unitaries with respect to the Haar measure:
\begin{align}
    \TC_n(X) = \underset{U \in \mathcal{U}_d}{\mathbb{E}}[U^{\otimes n} X(U^\dagger)^{\otimes n}],
\end{align}
where $\mathbb{E}$ denotes the expectation with respect to the Haar measure.
Notably, the resulting operator from twirling a pure state is well-known.
\begin{lemma}[Twirling $n$ Copies of a Pure State, Prop.~6~\cite{harrow2013Church}] \label{lem:haar-twirl-symmetric-subspace}
For any $|\phi\rangle\in\mathbb{C}^d$,
    \begin{align}
        \TC_n(\ketbra{\phi}{\phi}^{\otimes n}) = \underset{U \in \UC_d}{\mathbb{E}}\left[U^{\otimes n} \ketbra{\phi}{\phi}^{\otimes n}(U^\dagger)^{\otimes n}\right] = \frac{\Pi_n}{\tr{\Pi_n}} = \frac{1}{d[n]} \Pi_n.
    \end{align}
    In words, the density matrix corresponding to $n$ copies of an unknown Haar-random pure state is the maximally mixed state on the symmetric subspace of $(\mathbb{C}^d)^{\otimes n}$.
\end{lemma}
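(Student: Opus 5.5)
The plan is to show that the twirl $\TC_n(\ketbra{\phi}{\phi}^{\otimes n})$ is an operator supported on $\vee^n(\C^d)$ and commuting with every $U^{\otimes n}$. An irreducibility argument then forces it to equal the normalized projector $\Pi_n/d[n]$.

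\emph{Step 1: support and trace.} Set $\rho \coloneq \TC_n(\ketbra{\phi}{\phi}^{\otimes n})$. Each $U^{\otimes n}\ket{\phi}^{\otimes n}=(U\ket{\phi})^{\otimes n}$ is invariant under every $P_d(\pi)$, so it lies in $\vee^n(\C^d)$. Averaging preserves this, so $\rho = \Pi_n\rho\Pi_n$. The twirl is trace preserving, so $\tr{\rho}=1$.

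\emph{Step 2: invariance.} Using left-invariance of the Haar measure, substitute $U\mapsto VU$ to get $V^{\otimes n}\rho (V^\dagger)^{\otimes n}=\rho$ for all $V\in\UC_d$. Hence $\rho$ commutes with the restriction of $V\mapsto V^{\otimes n}$ to $\vee^n(\C^d)$.

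\emph{Step 3: irreducibility, the main obstacle.} We need the representation of $\UC_d$ on $\vee^n(\C^d)$ to be irreducible. Schur's lemma then gives $\rho = c\,\Pi_n$, and Step 1 fixes $c = 1/d[n]$.

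I would prove irreducibility via the standard fact that $\vee^n(\C^d)=\operatorname{span}\{\ket{\psi}^{\otimes n}:\ket{\psi}\in\C^d\}$. This follows by polarization: expand $(\sum_k t_k\ket{e_k})^{\otimes n}$ and extract coefficients of the monomials in the $t_k$, which yields every symmetrized basis vector.

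Now suppose $Q\preceq\Pi_n$ is a nonzero projector commuting with all $V^{\otimes n}$. Pick $\ket{w}$ in the range of $Q$. Since product vectors span $\vee^n(\C^d)$, some $\ket{\psi}$ satisfies $\bra{\psi}^{\otimes n}\ket{w}\neq0$. The commutant of $\{V^{\otimes n}\}$ on $\vee^n(\C^d)$ is contained in the commutant of the group algebra they generate. The span of $\{V^{\otimes n}\}$ over $V\in\UC_d$ includes, by a density and analytic-continuation argument, all $A^{\otimes n}$ for $A\in\mathrm{GL}_d$. Its linear span contains every operator $\sum_k B_k^{\otimes n}$, and by polarization all operators in $\End(\vee^n(\C^d))$ restricted to symmetric space. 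This is exactly the "symmetric operators = span of $A^{\otimes n}$" fact.

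Therefore the commutant on $\vee^n(\C^d)$ is trivial, and $Q=\Pi_n$.

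Alternatively, to avoid the commutant argument, one could cite Cor.~4 of Ref.~\cite{harrow2013Church} together with Eq.~\eqref{eq:symmetric-subspace-projector}. Schur--Weyl duality gives $\rho=\sum_\pi m_\pi P_d(\pi)$. Since $P_d(\pi)\Pi_n=\Pi_n$ and $\rho=\Pi_n\rho\Pi_n$, we get $\rho=\bigl(\sum_\pi m_\pi\bigr)\Pi_n$. Normalization then finishes the proof.

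I would present this second route as the main proof, since it uses only earlier-cited results. The only delicate point is the identity $\Pi_n P_d(\pi)=\Pi_n$, which follows immediately from the group-averaging form of $\Pi_n$.
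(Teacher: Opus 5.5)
Your proposal is correct and follows the paper's own justification. The paper cites Harrow's Prop.~6 and explains it as a special case of the Schur--Weyl block form $\bigoplus_\lambda \Id_{\QC_\lambda^d}\otimes Y_\lambda$ of $U^{\otimes n}$-invariant operators: support on $\vee^n(\C^d)=\QC_{(n)}^d$ together with $\dim\PC_{(n)}=1$ forces a multiple of $\Pi_n$. Your main route ($\rho\in\operatorname{span}\{P_d(\pi)\}$ by Harrow's Cor.~4, then $P_d(\pi)\Pi_n=\Pi_n$ and normalization, which tacitly uses $\|\phi\|=1$) is that same argument written in the permutation basis rather than the isotypic decomposition.
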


Lemma \ref{lem:haar-twirl-symmetric-subspace} is a special case of a powerful duality known as Schur-Weyl duality \cite{goodman2009symmetry,etingof2011introduction,fulton2013representation}.
The starting point of this duality is to note that the two representations \eqref{eq:Sn-representation} of $\SC_n$ and \eqref{eq:Ud-representation} of $\cU_d$ commute with each other. 
Recall the definition of the commutant of $\SC_n$ in $\cL((\mathbb{C}^d)^{\otimes n})$ as the algebra of all operators $X\in \cL((\mathbb{C}^d)^{\otimes n})$ satisfying $[X,P_d(\pi)]=0$ for all $\pi\in \SC_n$.
The above statement then means that the commutant of $\SC_n$ \emph{contains} all $U^{\otimes n}$ for $U\in\cU_d$, and vice versa.
Succinctly, Schur-Weyl duality states that the two representations \eqref{eq:Sn-representation} and \eqref{eq:Ud-representation} \emph{generate} each other's commutant.
In other words, any $X\in \cL((\mathbb{C}^d)^{\otimes n})$ satisfying $[X,P_d(\pi)]=0$ for all $\pi\in \SC_n$ can be written as a linear combination of operators $U^{\otimes n}$ for $U\in \cU_d$, and a similar statement holds for any $Y\in \cL((\mathbb{C}^d)^{\otimes n})$ with $[Y,U^{\otimes n}]=0$ for all $U\in\cU_d$.

There is an equivalent, and for our purposes quite useful, formulation of Schur-Weyl duality in terms of a decomposition of the representation space $(\mathbb{C}^d)^{\otimes n}$:
\begin{align}
    (\mathbb{C}^d)^{\otimes n} &\overset{\UC_d \times \SC_n}{\cong} \bigoplus_{\lambda \vdash_d n} \QC_{\lambda}^d \otimes \PC_{\lambda}, \label{eq:schur-weyl-decomposition}\\
    U^{\otimes n} &\overset{\phantom{\UC_d \times \SC_n}}{\cong} \bigoplus_{\lambda \vdash_d n} q_\lambda(U) \otimes \Id_{\PC_{\lambda}}\\
    P_d(\lambda) &\overset{\phantom{\UC_d \times \SC_n}}{\cong} \bigoplus_{\lambda \vdash_d n} \Id_{\QC_{\lambda}^d} \otimes R_{\lambda} (\pi),
\end{align}
where $\lambda \vdash_d n$ denotes a partition of $n$ with at most $d$ rows, $(\QC_{\lambda}^d,q_\lambda)$ is an irreducible representation (irrep) of $\UC_d$, and $(\PC_{\lambda},R_\lambda)$ is an irrep of $\SC_n$~\cite{fulton2013representation}.
The spaces $\cQ_\lambda^d$ and $\cP_\lambda$ are called Weyl and Specht modules, respectively.
We will denote the dimensions of these spaces as $  q_{\lambda} = \dim{\QC_{\lambda}^d}$ and $f_{\lambda} = \dim{\PC_{\lambda}}$, respectively.\footnote{The dimension $q_\lambda$ depends on the local dimension $d$, but we suppress this dependence for the sake of readability.}

Next, let $V_\lambda\colon\QC_\lambda^d\otimes\PC_\lambda
\to(\mathbb C^d)^{\otimes n}$
be the isometry whose columns are the `Schur basis' vectors in the $\lambda$-block (used in \eqref{eq:schur-weyl-decomposition}) written in the standard basis. Thus $V_\lambda$ converts coefficients within that block into a vector in the physical tensor-product space. The projector onto the $\lambda$-isotypic subspace $\QC_\lambda^d\otimes\PC_\lambda$ is given by $\Pi_\lambda=V_\lambda V_\lambda^\dagger,$
which are mutually orthogonal, $\Pi_{\lambda} \Pi_{\lambda'} = \delta_{\lambda \lambda'} \Pi_{\lambda}$, and resolve the identity on the full space, 
\begin{align} 
    \sum_{\lambda \vdash_d n} \Pi_{\lambda} = \Id_{(\mathbb{C}^d)^{\otimes n}}.
\end{align}
Furthermore, as isotypical projectors the $\Pi_\lambda$ are invariant under both representations $\SC_n\ni\pi\mapsto P_d(\pi)$ and $\cU_d\ni U\mapsto U^{\otimes n}$.
We will later deal with these projectors on systems with different numbers of subsystems, and thus also use the notation $\Pi_{\lambda}^{(n)}$ to keep track of those.
For the one-row diagram $\lambda = (n,0,\dots,0)\equiv(n)$, the isotypical projector $\Pi_{(n)}$ is exactly the projector $\Pi_{\text{sym}}^{d,n}$ in \eqref{eq:symmetric-subspace-projector}.

The Schur-Weyl decomposition \eqref{eq:schur-weyl-decomposition} gives a succinct way to describe invariant objects via Schur's Lemma, which states that any linear map commuting with two irreducible representations of a group $G$ is either identically 0 if the two irreps are inequivalent, or proportional to the identity otherwise.
For example, any permutation-invariant operator $X\in\cL((\mathbb{C}^d)^{\otimes n})$ satisfying $[X,P_d(\pi)]=0$ for all $\pi\in \SC_n$ is of the form
\begin{align}
    X &\cong \bigoplus_{\lambda\vdash_d n} X_\lambda \otimes \Id_{\cP_\lambda}
\intertext{for some operators $X_\lambda\in \cL(\cQ_\lambda^d)$.
Likewise, any operator $Y\in\cL((\mathbb{C}^d)^{\otimes n})$ satisfying $[Y,U^{\otimes n}]=0$ for all $U\in\cU_d$ is of the form}
    Y &\cong \bigoplus_{\lambda\vdash_d n} \Id_{\cQ_\lambda^d} \otimes Y_\lambda
\end{align}
for some operators $Y_\lambda\in\cL(\cP_\lambda)$.
Lemma \ref{lem:haar-twirl-symmetric-subspace} is a special case of this observation: a product state $\ketbra{\phi}{\phi}^{\otimes n}$ is fully supported on the symmetric subspace $\vee^n(\mathbb{C}^d) = \cQ_{(n)}$, and $\dim\cP_{(n)}=1$, so that the twirling with respect to $U^{\otimes n}$ trivializes the operator $Y_{(n)}$ to a multiple of the identity $\Id_{\cQ_{(n)}^d} = \Pi_{\mathrm{sym}}^{d,n}$.
Taking both symmetries together, an operator $Z$ invariant under both $P_d(\pi)$ for all $\pi\in \SC_n$ and $U^{\otimes n}$ for all $U\in\cU_d$ is of the form
\begin{align}
    Z \cong \bigoplus_{\lambda\vdash_d n} z_\lambda \Id_{\cQ_{\lambda}^d}\otimes \Id_{\cP_\lambda} = \sum_{\lambda\vdash_d n} z_\lambda \Pi_\lambda.
\end{align}
In the quantum information theory literature, quantum states with $U^{\otimes n}$-symmetry are called \emph{multipartite Werner states}, and they are called \emph{multipartite symmetric Werner states} if they are additionally permutation-invariant~\cite{werner1989Quantum}.

\begin{definition}[Werner states]
\label{def:Werner-states}
A density operator $\rho$ on $(\mathbb C^d)^{\otimes n}$
is a \textit{multipartite Werner state} if
$[\rho,U^{\otimes n}]=0$ for every $U\in\UC_d$.
By Schur--Weyl duality, these are precisely the states
whose Schur-basis form is
\begin{align}
 \rho =
 \bigoplus_{\lambda\vdash_d n}
 p_\lambda
 \frac{\Id_{\QC_\lambda^d}}{q_\lambda}\otimes\sigma_\lambda,
\end{align}
where $\{p_\lambda\}_\lambda$ is a probability distribution
and each $\sigma_\lambda$ is a density operator on $\PC_\lambda$. If additionally $[\rho,P_d(\pi)]=0$ for every $\pi\in\SC_n$,
we call $\rho$ a \textit{symmetric multipartite Werner state}.
In this case, every block with nonzero weight has
$\sigma_\lambda=\Id_{\PC_\lambda}/f_\lambda$, so
\begin{align}
 \rho=\sum_{\lambda\vdash_d n}p_\lambda\omega_\lambda,
 \qquad
 \omega_\lambda=\frac{\Pi_\lambda}{q_\lambda f_\lambda}.
\end{align}
Thus $\omega_\lambda$ is the maximally mixed state
on the entire $\lambda$-isotypic subspace.
\end{definition}
While this block diagonal form of multipartite Werner states is already quite useful for us, it will also come in handy to have an explicit decomposition in terms of permutation operators. The following proposition gives just such an expansion. It follows from Schur--Weyl duality and matrix-coefficient orthogonality; we include a proof in Appendix~\ref{app:technical-lemma-proofs}.

\begin{proposition}[Fourier Expansion of Werner States] \label{prop:fourier-decomp-Werner-states}
    Let $\rho$ be a multipartite Werner state on $(\mathbb{C}^d)^{\otimes n}$. Then, $\rho$ lies in the linear span of permutation operators and admits the expansion
    \begin{align}
        \rho &= \frac{1}{n!} \sum_{\pi \in \SC_n} \left(\sum_{\lambda \vdash_d n} \frac{p_{\lambda} f_{\lambda}}{q_{\lambda}} \tr{\sigma_{\lambda} R_{\lambda}(\pi^{-1})}\right)P_d(\pi).
    \end{align}
    If $\rho$ is a symmetric multipartite Werner state, this expression simplifies to 
    \begin{align}
        \rho &= \frac{1}{n!} \sum_{\pi \in \SC_n} \left(\frac{p_{\lambda}}{q_{\lambda}} \chi_{\lambda}(\pi^{-1})\right) P_d(\pi),
    \end{align}
    where $\chi_{\lambda}(\pi) = \tr{R_{\lambda}(\pi)}$ is the character value of $\pi$.
\end{proposition}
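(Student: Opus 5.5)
The plan is to prove the first expansion by projecting onto the span of permutation operators using matrix-coefficient orthogonality. The second will then follow by specializing to $\sigma_\lambda=\Id_{\PC_\lambda}/f_\lambda$.

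First, Schur--Weyl duality (the double-commutant statement recalled above) shows that any operator commuting with all $U^{\otimes n}$ lies in the algebra generated by $\{P_d(\pi)\}_{\pi\in\SC_n}$. Hence $\rho=\sum_\pi c_\pi P_d(\pi)$ for some coefficients, which gives the first claim. To pin down a canonical choice of coefficients, I will work directly in the Schur basis of Definition~\ref{def:Werner-states}. For each $\lambda\vdash_d n$, define the candidate operator
\begin{align}
    X\coloneqq\frac{1}{n!}\sum_{\pi\in\SC_n}\left(\sum_{\mu\vdash_d n}\frac{p_\mu f_\mu}{q_\mu}\tr{\sigma_\mu R_\mu(\pi^{-1})}\right)P_d(\pi).
\end{align}
Substituting $P_d(\pi)\cong\bigoplus_\lambda \Id_{\QC_\lambda^d}\otimes R_\lambda(\pi)$, the $\lambda$-block of $X$ is $\Id_{\QC^d_\lambda}\otimes A_\lambda$, where
\begin{align}
    A_\lambda=\sum_\mu\frac{p_\mu f_\mu}{q_\mu}\,\frac{1}{n!}\sum_\pi \tr{\sigma_\mu R_\mu(\pi^{-1})}R_\lambda(\pi).
\end{align}

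The key step is the Schur orthogonality relation for matrix coefficients of unitary irreps. Choosing $R_\lambda$ unitary, so that $R_\mu(\pi^{-1})=R_\mu(\pi)^\dagger$, it reads
\begin{align}
    \frac{1}{n!}\sum_\pi \overline{R_\mu(\pi)_{ij}}\,R_\lambda(\pi)_{kl}=\frac{\delta_{\lambda\mu}\delta_{ik}\delta_{jl}}{f_\lambda}.
\end{align}
Expanding $\tr{\sigma_\mu R_\mu(\pi)^\dagger}=\sum_{i,j}(\sigma_\mu)_{ij}\overline{R_\mu(\pi)_{ij}}$, this yields $\frac{1}{n!}\sum_\pi\tr{\sigma_\mu R_\mu(\pi^{-1})}R_\lambda(\pi)=\delta_{\lambda\mu}\sigma_\lambda/f_\lambda$. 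Therefore $A_\lambda=\frac{p_\lambda}{q_\lambda}\sigma_\lambda$, so $X\cong\bigoplus_\lambda p_\lambda\frac{\Id_{\QC_\lambda^d}}{q_\lambda}\otimes\sigma_\lambda=\rho$, as required.

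The only real subtlety is that partitions with more than $d$ rows do not appear in the decomposition of $(\C^d)^{\otimes n}$. Orthogonality is applied within the full group algebra, and the irreps $R_\lambda$ with more than $d$ rows simply have no block in which to act. As a result, the inner sum ranges only over $\lambda\vdash_d n$ and no correction terms arise. A second point to handle is index conventions: the transpose versus conjugate in the matrix-coefficient relation, and the choice of $\pi$ versus $\pi^{-1}$. I expect this bookkeeping to be the main place for error, and I will settle it by checking the case $n=2$ against $\rho=(\Id\pm\swap)/(d(d\pm1))$.

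For the symmetric case, we have $\sigma_\lambda=\Id/f_\lambda$, so $\tr{\sigma_\lambda R_\lambda(\pi^{-1})}=\chi_\lambda(\pi^{-1})/f_\lambda$. This gives coefficient $\sum_\lambda\frac{p_\lambda}{q_\lambda}\chi_\lambda(\pi^{-1})$. The sum over $\lambda$ is implicit in the statement.
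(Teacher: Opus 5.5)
Your proposal is correct and follows essentially the same route as the paper's proof. Both arguments expand the candidate operator blockwise via Schur--Weyl duality, collapse the sum over $\pi$ with matrix-coefficient orthogonality, recover $\frac{p_\lambda}{q_\lambda}\sigma_\lambda$ in each block, and then specialize to $\sigma_\lambda=\Id/f_\lambda$. The paper writes orthogonality as $\frac{f_\lambda}{n!}\sum_\pi R_\mu(\pi)_{ij}R_\lambda(\pi^{-1})_{lk}=\delta_{\mu\lambda}\delta_{ik}\delta_{jl}$, which is equivalent to your unitary form. You are also right that the second displayed formula is missing the sum over $\lambda$; the appendix restatement includes it.
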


Throughout this work, we will need to work with an explicit form for the marginals of Werner states. Before deriving the form of the marginals, it will help to review some additional facts from the representation theory of the symmetric group.
\subsubsection{Branching Rules and Young Tableaux Identities} \label{sec:branching-rules-young-identities}
Consider operators on $(\mathbb{C}^d)^{\otimes m}$, where $2 \leq m \leq n$. Throughout this work, we denote $\alpha \vdash_d m-2$, $\beta \vdash_d m-1$, and $\lambda \vdash_d m$. If $\beta$ is obtained from $\lambda$ by removing one valid box from the Young diagram, we will write $\beta \prec \lambda$. Viewing $\SC_{m-1}$ as the subgroup of $S_m$ fixing $m$, the Specht branching rule~\cite{fulton2013representation} gives
\begin{align}
    \mathrm{Res}_{\SC_{m-1}}^{\SC_m} \PC_{\lambda} \cong \bigoplus_{\beta \prec \lambda} \PC_{\beta}.
    \label{eq:branching-rule}
\end{align}
In words, when the $m$-th site is fixed, the Specht module $\PC_{\lambda}$ splits into irreducible blocks, each occurring once and indexed by Young diagrams obtained by removing one valid box from $\lambda$. Note that this multiplicity-free branching rule implies the following relationship between irrep dimensions $f_{\lambda} = \sum_{\beta \prec \lambda} f_{\beta}.$

We will also need a closed-form relationship between the Weyl modules before and after this restriction. To this end, let $u =(i,j)$ be a box in a Young diagram and define its \textit{content} by $c(u) \coloneq j-i$. Letting the \textit{hook length} for a given box $u$ within a valid Young diagram $\lambda \vdash_d m$ be denoted as $h_{\lambda}(u)$, we may write the dimensions of the Specht and Weyl modules as  
\begin{align}
    f_{\lambda} = \frac{m!}{\prod_{u \in \lambda} h_{\lambda}(u)} \quad \text{and} \quad q_{\lambda} = \prod_{u \in \lambda} \frac{d + c(u)}{h_{\lambda}(u)}.
\end{align}
These so-called hook-length and hook-content formulas are standard in the literature, but we refer the reader to Ref.~\cite{leditzky2026Representation} for a pedagogical treatment with examples. Throughout the proofs of our main results, we will make use of the following bounds which follow from taking ratios of the hook-length and hook-content formulas, simplifying, and taking an elementary upper bound.

\begin{lemma}[One-box Dimension Bounds]\label{lem:one-box-dimension-bounds} Let $m \geq 2$ and $\lambda \vdash_d m$. Consider Young diagrams satisfying $\alpha \prec \beta \prec \lambda$. The following bounds then hold:
\begin{enumerate}
    \item If $d > m-1$, then 
    \begin{align}
        \sum_{\beta \prec \lambda} \frac{q_{\beta}}{q_{\lambda}} \leq \frac{m}{d-m+1}.
    \end{align}
    \item If $d > m-2$, then,
    \begin{align}
        \sum_{\alpha \prec \beta} \frac{q_{\alpha}}{q_{\beta}} \leq \frac{m-1}{d-m+2}.
    \end{align}
\end{enumerate}
\end{lemma}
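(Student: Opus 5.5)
Both bounds reduce to the hook-content formula. For $\beta \prec \lambda$, obtained by deleting a removable corner $u=(i,j)$ from $\lambda$, only the box $u$ and the hook lengths in row $i$ and column $j$ change. Hence
\begin{align}
    \frac{q_\beta}{q_\lambda} = \frac{1}{d+c(u)}\cdot \frac{\prod_{v\in\lambda} h_\lambda(v)}{\prod_{v\in\beta} h_\beta(v)} = \frac{1}{d+c(u)}\cdot\frac{m\, f_\beta}{f_\lambda},
\end{align}
where the second equality uses the hook-length formula, since $\prod_{v\in\lambda}h_\lambda(v)/\prod_{v\in\beta}h_\beta(v) = \frac{m!/f_\lambda}{(m-1)!/f_\beta} = m f_\beta/f_\lambda$. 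The first step is therefore to establish this clean ratio identity, which avoids handling the row and column hook products by hand.

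Next we bound the content from below. A corner box $u=(i,j)$ of $\lambda\vdash m$ has $c(u)=j-i\geq 1-i$. Since column $1$ has at least $i$ boxes, $i\leq \ell(\lambda)\leq m$, so $c(u)\geq 1-m$. For $d>m-1$ this gives $d+c(u)\geq d-m+1>0$, and so
\begin{align}
    \frac{q_\beta}{q_\lambda}\leq \frac{m}{d-m+1}\cdot\frac{f_\beta}{f_\lambda}.
\end{align}
Summing over $\beta\prec\lambda$ and using the branching identity $\sum_{\beta\prec\lambda} f_\beta = f_\lambda$ from \eqref{eq:branching-rule} gives part 1.

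Part 2 is the same statement with $m$ replaced by $m-1$ and $\lambda$ by $\beta\vdash_d m-1$. We apply the identity with $\alpha\prec\beta$ and $|\beta|=m-1$, which requires $d>m-2$, and obtain $\sum_{\alpha\prec\beta} q_\alpha/q_\beta\leq (m-1)/(d-m+2)$. The main point needing care is the ratio identity itself: we must check that the hook-content numerator ratio contributes exactly the single factor $1/(d+c(u))$ (because contents of the remaining boxes are unchanged) and that the hook denominators match those in the hook-length formula. Both facts are immediate once written out.

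One subtlety is that $\beta$ should range only over diagrams with at most $d$ rows. Removing a box never increases the number of rows, so every $\beta\prec\lambda$ is automatically valid and nothing changes. When $d<m$, some $q$ could vanish, but the hypothesis $d>m-1$ rules this out.
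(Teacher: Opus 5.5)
Your proposal is correct and follows essentially the same route as the paper's proof. The paper also writes $q_\beta/q_\lambda = m f_\beta/\bigl(f_\lambda\,(d+c(\lambda\setminus\beta))\bigr)$, bounds the content of the removed box below by $-(m-1)$, and sums using $\sum_{\beta\prec\lambda} f_\beta = f_\lambda$, handling part 2 identically; the one detail it flags that you leave implicit is the convention $q_{\emptyset}=f_{\emptyset}=1$, needed in part 2 when $m=2$.
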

\begin{proof}
    Both identities follow from the observation that a box in a Young diagram with $r$ boxes has content at least $-(r-1)$. Thus, in the first case, we have 
    \begin{align}
        \frac{1}{m} \sum_{\beta \prec \lambda} \frac{q_{\beta}}{q_{\lambda}} = \sum_{\beta \prec \lambda} \frac{f_{\beta}}{f_{\lambda}[d+c(\lambda \setminus \beta)]} \leq \frac{1}{d - m + 1} \sum_{\beta \prec \lambda} \frac{f_{\beta}}{f_{\lambda}} = \frac{1}{d-m+1},
    \end{align}
    and multiplying both sides by $m$ yields the desired result. In the last line we used $  f_{\lambda} = \sum_{\beta \prec \lambda} f_{\beta}.$ The second case is proved identically. Note that, when $m=2$, the latter result requires the definition $q_{\emptyset} = f_{\emptyset} =1.$
\end{proof}
With these identities in place, we are prepared to derive the explicit form of the marginals needed in the proofs of our main results.
\subsubsection{Marginals of Werner States}
 Fix $m \ge 2$ and a partition $\lambda\vdash_d m$ with predecessor $\beta \prec \lambda$. Let $J_{\lambda\to\beta}:\mathcal P_\lambda\to\mathcal P_\beta$ be a map satisfying
\begin{align}
J_{\lambda\to\beta}J_{\lambda\to\beta}^\dagger
 =\Id_{\mathcal P_\beta},
 \qquad
 \sum_{\beta\prec\lambda}
 J_{\lambda\to\beta}^\dagger J_{\lambda\to\beta}
 =\Id_{\mathcal P_\lambda}.
\end{align}
That is, this map extracts the coordinates belonging to the $\beta$-block and its adjoint places vectors back into that block, with zeros on all other blocks.  Finally, define the sub-normalized diagonal branch states and their weights by
\[
 \sigma_\beta
 =
 J_{\lambda\to\beta}\sigma J_{\lambda\to\beta}^\dagger,
 \qquad
 t_\beta=\operatorname{tr}\sigma_\beta.
\]
Then $\sigma_\beta\succeq0$ and
$\sum_{\beta\prec\lambda}t_\beta=1$.
The state $\sigma$ may also have off-diagonal branch blocks
$J_{\lambda\to\beta}\sigma J_{\lambda\to\gamma}^\dagger$
for $\beta\ne\gamma$; we impose no restriction on these. With this notation in place, we are ready to determine a form of the marginal states.

\begin{lemma}[Marginal of a Schur--Weyl block]
\label{lem:general-marginal}
With the notation above,
\begin{equation}
\label{eq:marginal-Schur-block}
 \operatorname{tr}_m\Omega_{\lambda,\sigma}
 \cong
 \bigoplus_{\beta\prec\lambda}
 \frac{I_{\mathcal Q_\beta^d}}{q_\beta}\otimes\sigma_\beta.
\end{equation}
Here the identification is the fixed Schur--Weyl
decomposition on the first $m-1$ subsystems, and the
blocks indexed by $\beta\not\prec\lambda$ are zero.
\end{lemma}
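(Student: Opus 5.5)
The statement concerns $\Omega_{\lambda,\sigma}$, which from context is the single-block Werner state $\Omega_{\lambda,\sigma}\cong \frac{\Id_{\QC_\lambda^d}}{q_\lambda}\otimes\sigma$ supported on the $\lambda$-isotypic subspace of $(\C^d)^{\otimes m}$. I will use the restriction of the $\UC_d\times\SC_m$ representation to the subgroup $\UC_d\times\SC_{m-1}$, with $\SC_{m-1}$ fixing site $m$.

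\textbf{Step 1: symmetry of the marginal.} The partial trace $\operatorname{tr}_m\Omega_{\lambda,\sigma}$ commutes with $U^{\otimes(m-1)}$, because $\operatorname{tr}_m[(U^{\otimes m})X(U^\dagger)^{\otimes m}]=U^{\otimes(m-1)}(\operatorname{tr}_m X)(U^\dagger)^{\otimes(m-1)}$. By standard Schur--Weyl duality it therefore has the form $\bigoplus_{\beta\vdash_d m-1}\Id_{\QC^d_\beta}\otimes Y_\beta$ for some $Y_\beta$. What remains is to identify each $Y_\beta$.

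\textbf{Step 2: identify the blocks via the Pieri rule.} The key input is the decomposition of $\QC^d_\lambda\otimes\PC_\lambda$, viewed inside $(\C^d)^{\otimes(m-1)}\otimes\C^d$, as a $\UC_d\times\SC_{m-1}$ representation.
\begin{itemize}
\item As a $\UC_d$ representation, $\QC_\lambda^d$ restricted along $(\C^d)^{\otimes (m-1)}\otimes \C^d$ decomposes by Pieri: $\QC_\beta^d\otimes\C^d\cong\bigoplus_{\lambda\succ\beta}\QC^d_\lambda$, multiplicity free.
\item Dually, the branching rule \eqref{eq:branching-rule} gives $\Pi^{(m)}_\lambda=\sum_{\beta\prec\lambda}\Pi_\lambda^{(m)}(\Pi^{(m-1)}_\beta\otimes\Id_d)$, and these pieces are orthogonal.
\end{itemize}
Concretely, in a Schur basis adapted to the chain $\SC_{m-1}\subset\SC_m$ (a Gelfand--Tsetlin/Young basis), there is an isometric identification
\[
\QC^d_\lambda\otimes\PC_\lambda\cong\bigoplus_{\beta\prec\lambda}\QC^d_\lambda\otimes\PC_\beta\hookrightarrow\bigoplus_{\beta\prec\lambda}(\QC^d_\beta\otimes\C^d)\otimes\PC_\beta .
\]
Here $J_{\lambda\to\beta}$ acts on the Specht factor, and the Weyl factor embeds via the Clebsch--Gordan isometry $W_{\beta,\lambda}:\QC^d_\lambda\to\QC^d_\beta\otimes\C^d$.

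\textbf{Step 3: compute the trace.} Writing $\sigma=\sum_{\beta,\gamma}J^\dagger_{\lambda\to\beta}J_{\lambda\to\beta}\,\sigma\,J^\dagger_{\lambda\to\gamma}J_{\lambda\to\gamma}$ and tracing out site $m$ (the $\C^d$ factor):
\begin{itemize}
\item Cross terms with $\beta\neq\gamma$ land in $\Hom(\QC_\gamma\otimes\PC_\gamma,\QC_\beta\otimes\PC_\beta)$ and commute with $U^{\otimes(m-1)}$. Since $\QC_\beta\not\cong\QC_\gamma$, Schur's lemma kills them. This is the reason no assumption on the off-diagonal branch blocks is needed.
\item Each diagonal term gives $\operatorname{tr}_{\C^d}[W_{\beta,\lambda}W^\dagger_{\beta,\lambda}]/q_\lambda\otimes\sigma_\beta$. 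The operator $\operatorname{tr}_{\C^d}[W W^\dagger]$ is $\UC_d$-invariant on the irreducible $\QC^d_\beta$, so it is a multiple of the identity. Its trace is $q_\lambda$, hence it equals $(q_\lambda/q_\beta)\Id_{\QC_\beta}$.
\end{itemize}
This yields $\frac{\Id_{\QC_\beta^d}}{q_\beta}\otimes\sigma_\beta$, and blocks with $\beta\not\prec\lambda$ never appear.

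\textbf{Main obstacle.} The delicate step is Step 2: making rigorous that the fixed Schur--Weyl decomposition on $m-1$ sites, combined with the one on $m$ sites, realizes $J_{\lambda\to\beta}$ precisely as the Specht-side branching map, with the $\UC_d$ side carried by the Pieri isometry. I would justify this by noting that the $\SC_{m-1}$- and $\UC_d$-isotypic decomposition of $\Pi_\lambda^{(m)}(\Pi_\beta^{(m-1)}\otimes\Id)$ is a copy of $\QC_\lambda\otimes\PC_\beta$, and fixing the identification of that copy with the branching component of $\PC_\lambda$ is exactly the choice that defines $J_{\lambda\to\beta}$.
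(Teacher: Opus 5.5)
Your proof is correct, but it takes a different route from the paper's.

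\textbf{The paper's route.} The paper never writes down an embedding of $\cQ_\lambda^d$ into $\cQ_\beta^d\otimes\C^d$. Instead it verifies the identity weakly. It pairs both sides against $U^{\otimes(m-1)}$-invariant test operators $X\cong\bigoplus_{\alpha}\Id_{\cQ_\alpha^d}\otimes X_\alpha$, which suffices because both sides are invariant. It then uses Pieri and a regrouping of summands to write $X\otimes\Id_m\cong\bigoplus_{\mu\vdash_d m}\Id_{\cQ_\mu^d}\otimes\bigl(\bigoplus_{\alpha\prec\mu}X_\alpha\bigr)$. With this, $\trc[(X\otimes\Id_m)\Omega_{\lambda,\sigma}]$ collapses to $\sum_{\alpha\prec\lambda}\trc(X_\alpha\sigma_\alpha)$. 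In that route the off-diagonal branch blocks of $\sigma$ drop out automatically, because the test operator is block diagonal on $\cP_\lambda$. The normalization $1/q_\beta$ never has to be computed; it is simply read off from the right-hand side.

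\textbf{Your route.} Your direct computation makes both of these facts explicit. The cross terms vanish by Schur's lemma, since $\cQ_\beta^d$ and $\cQ_\gamma^d$ are inequivalent. The factor $q_\lambda/q_\beta$ comes from Schur's lemma applied to $\trc_{\C^d}[W_{\beta,\lambda}W_{\beta,\lambda}^\dagger]$ together with a trace count. This is the same mechanism behind the identity $\trc_m\Pi_\lambda^{(m)}=q_\lambda\sum_{\mu\prec\lambda}\Pi_\mu^{(m-1)}/q_\mu$, which the paper later imports from the literature.

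\textbf{Shared ingredient.} Both arguments rest on the same fact: the Pieri decomposition on the Weyl side is compatible with the branching decomposition on the Specht side. The paper hides this in its ``regrouping'' step, while you correctly isolate it as your main obstacle. Your justification there can be sharpened to one line. The space $(\C^d)^{\otimes m}$ is multiplicity-free as a $\UC_d\times\SC_{m-1}$ representation. Hence the two isometric embeddings of $\cQ_\lambda^d\otimes\cP_\beta$ onto the range of $\Pi_\lambda^{(m)}(\Pi_\beta^{(m-1)}\otimes\Id)$ differ by a unitary self-intertwiner of an irreducible representation, i.e.\ a phase, which you can absorb into $W_{\beta,\lambda}$. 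This step requires that $J_{\lambda\to\beta}$ be an $\SC_{m-1}$-intertwiner and not merely an isometry onto the $\beta$-coordinates. The paper's proof needs the same assumption, though it leaves it implicit.
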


\begin{proof}
    We will show that, for all $X$, we have
    \begin{align}
        \trc\left(X 
        \trc_m \Omega_{\lambda,\sigma}\right) = \trc\left(X \left(\bigoplus\nolimits_{\beta\prec\lambda} q_\beta^{-1}\Id_{\cQ_\beta^d}\otimes \sigma_\beta \right)\right),
        \label{eq:to-prove}
    \end{align}
    from which the claim of the lemma follows.
    Since both $\trc_m \Omega_{\lambda,\sigma}$ and $\bigoplus\nolimits_{\beta\prec\lambda} q_\beta^{-1}\Id_{\cQ_\beta^d}\otimes \sigma_\beta$ are invariant under $U^{\otimes m-1}$ for all $U\in U(d)$, we can assume without loss of generality that $X$ in \eqref{eq:to-prove} is invariant too, and thus
    \begin{align}
        X \cong \bigoplus_{\alpha\vdash_d m-1} \Id_{\cQ_\alpha^d} \otimes X_\alpha
    \end{align}
    for some $X_\alpha$ supported on $\cP_\alpha$. For the right-hand side of \eqref{eq:to-prove}, the direct sum structures of $X$ and $\bigoplus\nolimits_{\beta\prec\lambda} q_\beta^{-1}\Id_{\cQ_\beta^d}\otimes \sigma_\beta$ imply that
    \begin{align}
        \trc\left(X \left(\bigoplus\nolimits_{\beta\prec\lambda} q_\beta^{-1}\Id_{\cQ_\beta^d}\otimes \sigma_\beta \right)\right) &= \sum_{\beta\prec\lambda} \frac{1}{q_\beta} \trc\left( \Id_{\cQ_\beta^d}\otimes X_\beta\sigma_\beta\right) = \sum_{\beta\prec\lambda} \trc(X_\beta\sigma_\beta).
    \end{align}
    For the left-hand side, we have
    \begin{align}
        \trc\left(X \trc_m \Omega_{\lambda,\sigma}\right) = \trc\left( (X\otimes \Id_m) \Omega_{\lambda,\sigma} \right) &= \frac{1}{q_\lambda} \trc\left[ \left(\left(\bigoplus\nolimits_{\alpha\vdash_d m-1}\Id_{\cQ_\alpha^d} \otimes X_\alpha\right) \otimes \Id_m\right) \left(\Id_{\cQ_\lambda^d} \otimes\sigma\right)\right]\\
        &= \frac{1}{q_\lambda} \trc\left[ \left(\bigoplus\nolimits_{\alpha\vdash_d m-1} \left(\Id_{\cQ_\alpha^d} \otimes \Id_m\right) \otimes  X_\alpha\right)  \left(\Id_{\cQ_\lambda^d} \otimes\sigma\right)\right]\\
        &= \frac{1}{q_\lambda} \trc\left[ \left(\bigoplus\nolimits_{\alpha\vdash_d m-1} \left(\Id_{\cQ_\alpha^d} \otimes \Id_{\cQ^d_{(1)}}\right) \otimes  X_\alpha\right)  \left(\Id_{\cQ_\lambda^d} \otimes\sigma\right)\right],
        \label{eq:LHS-trace}
    \end{align}
    where we interpreted $\Id_m\cong \cQ_{(1)}^d \otimes \cP_{(1)}\cong  \cQ_{(1)}^d$ as the $m=1$ case of Schur-Weyl duality.
    The Pieri rule yields the decomposition
    \begin{align}
        \cQ_{\alpha}^d \otimes \cQ_{(1)}^d \cong \sum_{\mu \succ \alpha} \cQ_{\mu}^d.
    \end{align}
    Using the corresponding identity for the projectors onto the $\cQ_*^d$ in \eqref{eq:LHS-trace}, 
    \begin{align}
        &\frac{1}{q_\lambda} \trc\left[ \left(\bigoplus\nolimits_{\alpha\vdash_d m-1} \left(\Id_{\cQ_\alpha^d} \otimes \Id_{\cQ^d_{(1)}}\right) \otimes  X_\alpha\right)  \left(\Id_{\cQ_\lambda^d} \otimes\sigma\right)\right] \\
        &= \frac{1}{q_\lambda} \trc\left[ \left(\bigoplus\nolimits_{\alpha\vdash_d m-1} \left(\sum\nolimits_{\mu\succ\alpha} \Id_\mu\right) \otimes  X_\alpha\right)  \left(\Id_{\cQ_\lambda^d} \otimes\sigma\right) \right]\\
        &= \frac{1}{q_\lambda} \trc\left[ \left(\bigoplus\nolimits_{\mu\vdash_d m} \Id_{\cQ_\mu^d} \otimes  \left(\bigoplus\nolimits_{\alpha\prec\mu} X_\alpha\right) \right)   \left(\Id_{\cQ_\lambda^d} \otimes\sigma\right) \right] \label{eq:reshuffled}\\
        &= \frac{1}{q_\lambda} \trc\left[ \Id_{\cQ_\lambda^d} \otimes \left(\bigoplus\nolimits_{\alpha\prec\lambda} X_\alpha\right) \sigma \right] \label{eq:sigma-blockform}\\
        &= \sum_{\alpha\prec\lambda} \trc(X_\alpha \sigma_\alpha),
    \end{align}
    where in \eqref{eq:reshuffled} we regrouped the direct summands, and in \eqref{eq:sigma-blockform} we used the block form $\sigma \cong (\sigma_{\beta,\beta'})_{\beta,\beta'\vdash_d m-1}$ of $\sigma$ with respect to the branching rule $\cP_\lambda \cong \bigoplus_{\beta\prec\lambda} \cP_\beta$ obtained by restricting the tensor representation of $S_m$ to $\SC_{m-1}$.
    With $\sigma_\beta\equiv \sigma_{\beta,\beta}$, this proves equality in \eqref{eq:to-prove}, and hence the claim.
\end{proof}

\begin{proposition}[Marginals of Werner States]
\label{prop:Werner-marginals}
Let $\rho$ be a multipartite Werner state on
$(\mathbb C^d)^{\otimes m}$, with Schur-basis form
\begin{align}
 \rho =
 \bigoplus_{\lambda\vdash_d m}
 p_\lambda\frac{\Id_{\QC_\lambda^d}}{q_\lambda}
 \otimes\sigma_\lambda.
\end{align}
Then, tracing out the $m$-th site yields a new Werner state on $(\mathbb{C}^d)^{\otimes (m-1)}$
\begin{align}
\label{eq:Werner-marginal-general}
 \operatorname{tr}_m\rho
 &=
 \bigoplus_{\beta\vdash_d m-1} \frac{1}{q_{\beta}}
 \Id_{\QC_\beta^d}\otimes \sigma_\beta \quad \text{where} \quad 
 {\sigma}_\beta =
 \sum_{\substack{\lambda\vdash_d m\\\beta\prec\lambda}}
 p_\lambda
 J_{\lambda\to\beta}\sigma_\lambda
 J_{\lambda\to\beta}^\dagger.
\end{align}
Here $\sigma_\beta\succeq0$ and
$\sum_\beta\operatorname{tr}S_\beta=1$. If $\rho$ is additionally permutation invariant, write
$\rho=\sum_{\lambda\vdash_d m}p_\lambda\omega_\lambda$,
where $\omega_\lambda=\Pi_\lambda/(q_\lambda f_\lambda)$.
Then
\begin{align}
\label{eq:Werner-marginal-symmetric}
 \operatorname{tr}_m\rho
 &=
 \sum_{\beta\vdash_d m-1}p'_\beta\omega_\beta \quad \text{where} \quad  \operatorname{tr}_m\omega_\lambda
 =
 \sum_{\beta\prec\lambda}
 \frac{f_\beta}{f_\lambda}\,\omega_\beta \quad \text{and} \quad
 p'_\beta
 =
 \sum_{\substack{\lambda\vdash_d m\\\beta\prec\lambda}}
 p_\lambda\frac{f_\beta}{f_\lambda}.
\end{align}

\end{proposition}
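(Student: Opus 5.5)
The plan is to reduce everything to Lemma~\ref{lem:general-marginal} by linearity of the partial trace. Write $\rho=\sum_{\lambda\vdash_d m} p_\lambda\,\Omega_{\lambda,\sigma_\lambda}$, where $\Omega_{\lambda,\sigma}$ is the Schur--Weyl block operator $\frac{\Id_{\QC_\lambda^d}}{q_\lambda}\otimes\sigma$ embedded via $V_\lambda$ and then extended by zero. Applying the lemma to each block gives
\begin{align}
 \operatorname{tr}_m\Omega_{\lambda,\sigma_\lambda}\cong\bigoplus_{\beta\prec\lambda}\frac{\Id_{\QC_\beta^d}}{q_\beta}\otimes J_{\lambda\to\beta}\sigma_\lambda J_{\lambda\to\beta}^\dagger .
\end{align}
All these images are expressed in the same fixed Schur--Weyl decomposition of $(\mathbb C^d)^{\otimes(m-1)}$, so summing over $\lambda$ with weights $p_\lambda$ and collecting the terms sharing a given $\beta$ yields \eqref{eq:Werner-marginal-general}, with $\sigma_\beta=\sum_{\lambda\succ\beta}p_\lambda J_{\lambda\to\beta}\sigma_\lambda J_{\lambda\to\beta}^\dagger$.

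Next I will check the stated properties of the new blocks. Positivity $\sigma_\beta\succeq0$ holds because each summand is a congruence of a positive operator and the $p_\lambda$ are nonnegative. For normalization, exchange the order of summation:
\begin{align}
 \sum_\beta\operatorname{tr}\sigma_\beta=\sum_\lambda p_\lambda\operatorname{tr}\Bigl(\sigma_\lambda\sum_{\beta\prec\lambda}J^\dagger_{\lambda\to\beta}J_{\lambda\to\beta}\Bigr)=\sum_\lambda p_\lambda=1 .
\end{align}
Equivalently, this is just trace preservation of $\operatorname{tr}_m$. The resulting operator is again a Werner state, since $\operatorname{tr}_m$ intertwines $U^{\otimes m}$ with $U^{\otimes(m-1)}$. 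The block form also exhibits this directly: each block is of the form $\Id_{\QC_\beta^d}\otimes(\cdot)$.

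For the permutation-invariant case I will specialize to $\sigma_\lambda=\Id_{\PC_\lambda}/f_\lambda$. Then $J_{\lambda\to\beta}\sigma_\lambda J_{\lambda\to\beta}^\dagger=\Id_{\PC_\beta}/f_\lambda$ by the isometry property $JJ^\dagger=\Id_{\PC_\beta}$. Hence
\begin{align}
 \operatorname{tr}_m\omega_\lambda=\sum_{\beta\prec\lambda}\frac{1}{f_\lambda}\frac{\Id_{\QC_\beta^d}\otimes\Id_{\PC_\beta}}{q_\beta}=\sum_{\beta\prec\lambda}\frac{f_\beta}{f_\lambda}\,\omega_\beta .
\end{align}
Linearity then gives the stated coefficients $p'_\beta$. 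As a consistency check, $\sum_\beta p'_\beta=\sum_\lambda p_\lambda\sum_{\beta\prec\lambda}f_\beta/f_\lambda=1$ by the branching identity $f_\lambda=\sum_{\beta\prec\lambda}f_\beta$.

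The only delicate point is bookkeeping rather than computation. Lemma~\ref{lem:general-marginal} is stated for a single block with an arbitrary density operator $\sigma$, and I must make sure the maps $J_{\lambda\to\beta}$ for different $\lambda$ all land in the same copy of $\PC_\beta$, namely the one fixed by the Schur--Weyl decomposition on $m-1$ sites. Only then can the contributions from different $\lambda$ be added inside one $\beta$-block. I would settle this by choosing the $J_{\lambda\to\beta}$ compatibly with the branching rule \eqref{eq:branching-rule}, i.e.\ through a Young--Yamanouchi-adapted Schur basis. Partitions $\beta$ with more than $d$ rows never arise, since removing a box cannot increase the number of rows.
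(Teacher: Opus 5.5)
Your proposal is correct and follows the paper's proof essentially line for line. You apply Lemma~\ref{lem:general-marginal} blockwise and collect the $\beta$-blocks, obtain positivity and normalization from $\sum_{\beta\prec\lambda}J_{\lambda\to\beta}^\dagger J_{\lambda\to\beta}=\Id_{\PC_\lambda}$, and then specialize to $\sigma_\lambda=\Id_{\PC_\lambda}/f_\lambda$ using $J_{\lambda\to\beta}J_{\lambda\to\beta}^\dagger=\Id_{\PC_\beta}$ and $f_\lambda=\sum_{\beta\prec\lambda}f_\beta$. Your extra remark is a point the paper leaves implicit: the $J_{\lambda\to\beta}$ must be $\SC_{m-1}$-equivariant identifications with the fixed copy of $\PC_\beta$, which determines them up to a phase that cancels in $J\sigma J^\dagger$.
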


\begin{proof}
Applying Lemma~\ref{lem:general-marginal} to each
$\lambda$-block and collecting contributions to the same
$\beta$-block gives \eqref{eq:Werner-marginal-general}.
Each $S_\beta$ is positive semidefinite, and
\begin{align}
 \sum_\beta\operatorname{tr}S_\beta
 &=
 \sum_\lambda p_\lambda
 \operatorname{tr}\!\left[
 \sigma_\lambda
 \sum_{\beta\prec\lambda}
 J_{\lambda\to\beta}^\dagger J_{\lambda\to\beta}
 \right]
 =1.
\end{align}
In the symmetric case, we may take
$\sigma_\lambda=\Id_{\PC_\lambda}/f_\lambda$, so
\begin{align}
 J_{\lambda\to\beta}\sigma_\lambda
 J_{\lambda\to\beta}^\dagger
 =
 \frac{\Id_{\PC_\beta}}{f_\lambda} \quad \text{implying} \quad 
 S_\beta
 =
 \left(
 \sum_{\substack{\lambda\vdash_d m\\\beta\prec\lambda}}
 p_\lambda\frac{f_\beta}{f_\lambda}
 \right)
 \frac{\Id_{\PC_\beta}}{f_\beta},
\end{align}
which gives \eqref{eq:Werner-marginal-symmetric}.
The coefficients $p'_\beta$ form a probability distribution because $\sum_{\beta\prec\lambda}f_\beta=f_\lambda$.
\end{proof}

\subsection{Mixed Schur-Weyl Duality} \label{sec:mixed-Schur-Weyl}

The representation $U\mapsto U^{\otimes n}$ of the unitary group $\cU_d$ featuring in Schur-Weyl duality can be generalized to the following representation on $(\mathbb{C}^d)^{\otimes n}$:
\begin{align}
    \cU_d\ni U \mapsto U^{\otimes n-k} \otimes \overline{U}^{\otimes k},
    \label{eq:mixed-unitary-representation}
\end{align}
where $\overline{U}$ denotes the unitary obtained from $U$ by complex-conjugating each entry.
Describing the irreducible representations of \eqref{eq:mixed-unitary-representation} and determining the commutant of the algebra generated by it has been a recent topic of interest in quantum information theory known as `mixed Schur-Weyl duality', see for example \cite{Kopszak2021multiportbased,mozrzymas2021optimalmultiport,studzinski2022efficient,grinko2024linear,fei2023Efficient,nguyen2023mixed,grinko2023GelfandTsetlin}.

In this work, we focus on the case $k=1$ in \eqref{eq:mixed-unitary-representation}, which can be described using the standard representation theory of the symmetric and unitary groups discussed in Sec.~\ref{sec:schur-weyl}.
A well-known example of a task in which the symmetry $U^{\otimes n-1}\otimes \overline{U}$ is relevant is port-based teleportation \cite{ishizaka2008asymptotic,ishizaka2009quantum,studzinski2017port,mozrzymas2018optimal,christandl2021Asymptotic,leditzky2022Optimality,kim2026resourcetheory}, which features operators of the form 
\begin{align} 
    \sum_{i=1}^{n-1} \Id_{1}\otimes \dots\otimes \Id_{i-1}\otimes \Id_{i+1}\otimes \Id_{n-1} \otimes \psi^+_{i,n}.
    \label{eq:PBT-operator}
\end{align}
Here, $\psi^+$ denotes a maximally entangled state with $(U\otimes \overline{U})$-symmetry, which in turn equips the operator in \eqref{eq:PBT-operator} with a $U^{\otimes n-1}\otimes \overline{U}$-symmetry.

A second example, and the one relevant for the present work, is the discussion of partially transposed Werner states:
If $X\in\cL((\mathbb{C}^d)^{\otimes n})$ satisfies $U^{\otimes n} X (U^\dagger)^{\otimes n} = X$ for all $U\in \cU_d$, then also
\begin{align}
    X^{\Gamma_n} = \left(U^{\otimes n} X (U^\dagger)^{\otimes n}\right)^{\Gamma_n} = \left( U^{\otimes n-1}\otimes \overline{U}\right) X^{\Gamma_n} \left( U^{\otimes n-1}\otimes \overline{U}\right)^\dagger \quad \text{for all $U\in\cU_d$}
    \label{eq:partially-transposed-werner-state}
\end{align}
by elementary properties of the partial trace.

Partially transposed Werner states satisfying the relation in \eqref{eq:partially-transposed-werner-state} are crucial in this work.
Thus, in order to efficiently describe them, we are interested in finding a decomposition of the representation space $(\mathbb{C}^d)^{\otimes n}$ in terms of the irreps of the unitary group, in analogy to the Schur-Weyl decomposition \eqref{eq:schur-weyl-decomposition}.
This can be achieved using the following method employed in \cite[App.~A]{christandl2021Asymptotic} (see also \cite{leditzky2022Optimality,fei2023Efficient}).
One starts with considering $(\mathbb{C}^d)^{\otimes n}$ as a representation space for $\cU_d\times\cU_d$ acting as $U^{\otimes n-1}\otimes V$ and applies the Schur-Weyl decomposition to the first $n-1$ systems:
\begin{align} 
    (\mathbb{C}^d)^{\otimes n} &= (\mathbb{C}^d)^{\otimes n-1} \otimes (\mathbb{C}^d)^*\\
    &\cong \left( \bigoplus_{\beta\vdash_d n-1} \cQ_{\beta}^d\otimes \cP_\beta \right) \otimes (\mathbb{C}^d)^*,
    \label{eq:mixed-decomposition-primer}
\end{align} 
where $(\mathbb{C}^d)^*$ denotes the \emph{dual} representation of the defining (irreducible) representation $\mathbb{C}^d = \cQ_{(1)}^d$ of $\cU_d$.
Now, setting $U=V$, the tensor representation $\cQ_{\beta}^d \otimes (\mathbb{C}^d)^* \cong \cQ_{\beta}^d \otimes (\cQ_{(1)}^d)^*$ decomposes into $\cU_d$-irreps according to the dual Pieri rule \cite{fulton2013representation}:
\begin{align}
    \cQ_{\beta}^d \otimes (\cQ_{(1)}^d)^* = \bigoplus_{i\colon \beta_i>\beta_{i+1}} \cQ_{\alpha^{(i)}}, 
    \label{eq:dual-Pieri-rule}
\end{align}
where $\alpha^{(i)} = (\beta_1,\dots,\beta_{i-1},\beta_i-1,\beta_{i+1},\dots,\beta_d)$ is the weight obtained from $\beta$ by removing $1$ in position $i$ (provided $\beta_i>\beta_{i+1})$. 
Note that we set $\beta_{d+1}=-\infty$ so that $\alpha^{(d)}$ always appears in this direct sum. The new weight $\alpha^{(i)}$ is a partition (corresponding to a Young diagram) if $(\alpha^{(i)})_{d}\geq 0$.
Using \eqref{eq:dual-Pieri-rule} in \eqref{eq:mixed-decomposition-primer} now yields the desired decomposition of $(\mathbb{C}^d)^{\otimes m}$ as a representation of $\cU_d\times S_{m-1}$, where $U\in\cU_d$ acts as $U^{\otimes n-1}\otimes \overline{U}$, and $S_{m-1}$ acts on the first $n-1$ systems in $(\mathbb{C}^d)^{\otimes m}$:
\begin{align} \label{eq:mixed-schur-weyl-decomposition}
    (\mathbb{C}^d)^{\otimes n} \overset{\UC_d \times \SC_{n-1}}{\cong} \bigoplus_{\beta\vdash_d n-1} \,\bigoplus_{i\colon \beta_i>\beta_{i+1}} \cQ_{\alpha^{(i)}}^d \otimes \cP_{\beta},
\end{align}
where $\cQ_{\alpha^{(i)}}^d$ and $\cP_\beta$ are irreducible representations of $U(d)$ and $\SC_{m-1}$, respectively.
We denote their dimensions by $q_{\alpha^{(i)}}\equiv q_{\alpha^{(i)}}^d$ and $f_\beta$, respectively, and the action of $\pi\in \SC_{m-1}$ on $\cP_\beta$ by $p_\beta$.

\section{Distinguishing Symmetric Multipartite Werner States with PPT-BOTH Measurements}
With our preliminaries in place we are prepared to state our first theorem, which says that, when $n^2 \ll d$ PPT-BOTH measurements can only distinguish symmetric Werner states with a vanishingly small bias. 
\begin{theorem}\label{thm:symmetric-Werner-distinguishing}[Indistinguishability of Symmetric Werner States] Consider density matrices $\rho_b$ on $(\mathbb{C}^d)^{\otimes n}$ such that $[\rho_b,U^{\otimes n}] = [\rho_b,P_d(\pi)]=0$ for $b \in \{0,1\}$. If $n \geq 2$ and $3n(n-1) < 2d$, any PPT-BOTH measurement $\{M,\Id-M\}$ used to distinguish $\rho_0$ from $\rho_1$ will achieve bias at most
\begin{align}
    \abs{\tr{M(\rho_0-\rho_1)}} \leq \frac{3}{2}  \frac{n(n-1)}{d}.
\end{align}
\end{theorem}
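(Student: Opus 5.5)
The plan is to run the telescoping argument of Sec.~\ref{sec:overview-main-results} rigorously and then bound each resulting trace norm using the structure of symmetric Werner states. First, insert $\tau_d^{\otimes n}$ and apply the triangle inequality. Then write $\rho_b-\tau_d^{\otimes n}=\sum_{m=2}^n(\rho_b^{(m)}-\rho_b^{(m-1)}\otimes\tau_d)\otimes\tau_d^{\otimes(n-m)}$, using $\rho_b^{(1)}=\tau_d$.

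For each summand, apply the partial transpose $\Gamma_m$ to both factors of the trace. Proposition~\ref{prop:partial-transpose-properties} makes the trace invariant under this, and $\Gamma_m$ fixes $\rho_b^{(m-1)}\otimes\tau_d$. Next, trace out sites $m+1,\dots,n$ against $\tau_d^{\otimes(n-m)}$; this turns $M^{\Gamma_m}$ into an effect on $m$ sites, which is valid by the PPT-BOTH condition. Lemma~\ref{lem:overlap-to-one-norm-bound} then gives
\begin{align}
\abs{\tr{M(\rho_0-\rho_1)}}\leq \sum_{b=0}^1\sum_{m=2}^n\frac12\left\|(\rho_b^{(m)})^{\Gamma_m}-\rho_b^{(m-1)}\otimes\tau_d\right\|_1 .
\end{align}
By Proposition~\ref{prop:Werner-marginals}, $\rho_b^{(m)}=\sum_{\lambda\vdash_d m}p_\lambda\omega_\lambda$ and $\rho_b^{(m-1)}=\operatorname{tr}_m\rho_b^{(m)}$. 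Both sides are linear in $\rho_b^{(m)}$, so convexity of the trace norm reduces everything to a single block. It therefore suffices to show
\begin{align}
\left\|\omega_\lambda^{\Gamma_m}-(\operatorname{tr}_m\omega_\lambda)\otimes\tau_d\right\|_1\leq \frac{3(m-1)}{d}\quad\text{for all }\lambda\vdash_d m .
\end{align}
Summing $\sum_{b}\sum_{m=2}^n\frac12\cdot\frac{3(m-1)}{d}=\frac32\frac{n(n-1)}{d}$ then gives the theorem.

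For the single-block bound, both operators commute with $U^{\otimes m-1}\otimes\overline U$ and with $\SC_{m-1}$ acting on the first $m-1$ sites. Hence both are block diagonal in the mixed Schur--Weyl decomposition \eqref{eq:mixed-schur-weyl-decomposition}. There they act as scalars on each $\QC^d_{\alpha}\otimes\PC_\beta$ with $\beta\prec\lambda$, since the multiplicity of each $\alpha$ for fixed $\beta$ is one. The second operator $(\operatorname{tr}_m\omega_\lambda)\otimes\tau_d$ equals $\sum_{\beta\prec\lambda}\frac{f_\beta}{f_\lambda}\,\omega_\beta\otimes\tau_d$. On the $\beta$-sector it is the constant $\frac{1}{f_\lambda q_\beta d}$, spread over $\bigoplus_i\QC_{\alpha^{(i)}}$.

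For $\omega_\lambda^{\Gamma_m}$, I would write $\Pi_\lambda$ restricted to a $\beta$-block using the Jucys--Murphy element $X_m=\sum_{j<m}(j\,m)$. Its partial transpose is the port-based-teleportation operator $d\sum_{j<m}\psi^+_{j,m}$, which is supported only on the polynomial sectors ($\alpha=\alpha^{(i)}$ a genuine partition, $\alpha\prec\beta$). In the non-polynomial sector $\alpha^{(d)}$, the partial transpose contributes the ``diagonal'' part only. I expect to compute its eigenvalue there to be exactly $\frac{1}{f_\lambda q_\beta d}$ up to a correction of relative size $O(m/d)$ coming from contents. In the polynomial sectors, $\omega_\lambda^{\Gamma_m}$ has eigenvalues of order $\frac{1}{f_\lambda q_\lambda}$ times a factor from $(d+c)$, while the subtracted term is $\frac{1}{f_\lambda q_\beta d}$. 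The weighted dimension of these sectors is $\sum_{\alpha\prec\beta}q_\alpha f_\beta$.

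The contributions would then be controlled with Lemma~\ref{lem:one-box-dimension-bounds}:
\begin{itemize}[leftmargin=*]
\item the polynomial sector costs at most about $2\sum_{\beta\prec\lambda}\frac{f_\beta}{f_\lambda}\sum_{\alpha\prec\beta}\frac{q_\alpha}{q_\beta}\lesssim \frac{2(m-1)}{d-m+2}$;
\item the non-polynomial correction costs about $\frac{m-1}{d}$.
\end{itemize}
The hypothesis $3n(n-1)<2d$ will be used to absorb the denominators $d-m+2$ into $d$, bringing the total to $3(m-1)/d$. The main obstacle is this exact spectral computation of $\Pi_\lambda^{\Gamma_m}$ on each $(\alpha,\beta)$ sector; this is where Proposition~\ref{prop:exact-spectrum} would sit. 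I would attempt it first by expressing $\Pi_\lambda$ on $\PC_\beta\subset\PC_\lambda$ as a polynomial in $X_m$, which has content eigenvalues, and then use the known PBT-operator spectrum $d\,q_\alpha f_\beta/(q_\beta f_\alpha)$-type eigenvalues.
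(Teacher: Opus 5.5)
Your reduction matches the paper's. It uses the telescoping sum, $\Gamma_m$ on the $m$-th leg, Lemma~\ref{lem:overlap-to-one-norm-bound}, convexity down to a single $\omega_\lambda$, and scalar action on each mixed Schur--Weyl sector. Your value $\frac{1}{d f_\lambda q_\beta}$ for $(\operatorname{tr}_m\omega_\lambda)\otimes\tau_d$ is correct. So is your expectation for the non-polynomial eigenvalue, which is exactly $\frac{1}{mq_\lambda f_\beta}=\frac{1}{(d+c(\lambda\setminus\beta))q_\beta f_\lambda}$.

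The gap is the polynomial sector, i.e.\ Proposition~\ref{prop:exact-spectrum}(iii). You leave it open, and the route you sketch does not work. Writing $\Pi_\lambda(\Pi^{(m-1)}_\beta\otimes\Id)$ as a polynomial in $X_m$ times $\Pi^{(m-1)}_\beta\otimes\Id$ is fine. But $\Gamma_m$ is not multiplicative, so $(X_m^k)^{\Gamma_m}\neq(X_m^{\Gamma_m})^k$, and the PBT spectrum controls only the linear term. For instance, $(j\,m)(k\,m)=(j\,k)(j\,m)$ gives
\begin{align}
(X_m^2)^{\Gamma_m}=(m-1)\Id+\sum_{j\neq k}P_d((j\,k))\,P_d((j\,m))^{\Gamma_m}.
\end{align}
This acts as $(m-1)\Id$ on the non-polynomial sector, where $X_m^{\Gamma_m}$ vanishes, because the sum is supported on $\mathcal{H}_M$. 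Without the polynomial eigenvalues, your estimate for that sector is unsupported. There the eigenvalue of $\omega_\lambda^{\Gamma_m}$ can exceed the subtracted term $\frac{1}{df_\lambda q_\beta}$ by a factor of order $d$, so everything hinges on its precise size.

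There are two ways to close this. The first uses duality instead of a polynomial. On the $(\alpha,\beta)$ sector, $X_m^{\Gamma_m}=\sum_{j<m}P_d((j\,m))^{\Gamma_m}$ acts as $(m-1)\frac{q_\beta f_\alpha}{q_\alpha f_\beta}=d+c(\beta\setminus\alpha)$. Inserting this and undoing $\Gamma_m$ term by term gives
\begin{align}
&\tr{\Pi_\lambda^{\Gamma_m}(\Pi^{(m-1)}_\beta\otimes\Id)\Theta^d_\alpha}\nonumber\\
&\qquad=\frac{m-1}{d+c(\beta\setminus\alpha)}\,\tr{\Pi_\lambda(\Pi^{(m-1)}_\beta\otimes\Id)(\Pi^{(m-2)}_\alpha\otimes\Id\otimes\Id)P_d((m-1\;m))}.
\end{align}
Young's orthogonal form evaluates the last trace as $q_\lambda f_\alpha/(c(\lambda\setminus\beta)-c(\beta\setminus\alpha))$; this is the same input as Lemma~\ref{lem:character-formula}. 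Hence $\omega_\lambda^{\Gamma_m}$ has eigenvalue $\frac{1}{q_\beta f_\lambda}\cdot\frac{1}{c(\lambda\setminus\beta)-c(\beta\setminus\alpha)}$ on that sector.

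The second avoids the spectrum entirely. Claim~\ref{claim:partial-transpose-CP} gives $-(\operatorname{tr}_m\omega_\lambda)\otimes\Id\preceq\omega_\lambda^{\Gamma_m}\preceq(\operatorname{tr}_m\omega_\lambda)\otimes\Id$. So every sector eigenvalue has modulus at most $\frac{1}{q_\beta f_\lambda}$, as in the proof of Theorem~\ref{thm:multipartite-werner-indistinguishability}.

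Either way, Lemma~\ref{lem:one-box-dimension-bounds} gives a polynomial cost of $(1+\frac1d)\frac{m-1}{d-m+2}$. Your ``about twice'' that is too weak for the stated constant. The non-polynomial sector alone can cost more than $\frac{m-1}{d}$: for $\lambda=(1,1)$ it is $\frac{d+1}{d^2}$. So your estimates sum to more than $\frac{3(m-1)}{d}$, and no hypothesis on $d$ can absorb the excess. With the factor $1+\frac1d$ the total is about $\frac{2(m-1)}{d}$, and $3n(n-1)<2d$ leaves enough slack.
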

\begin{proof} The high-level idea of this proof, as outlined in Sec.~\ref{sec:overview-main-results}, is to rewrite the difference of the two states using a telescoping sum. This will allow us to boil the entire problem down to bounding the trace norm of a partially transposed operator which depends on the marginals of the symmetric Werner states. Due to their symmetry, these states admit a rather simple decomposition which, in turn, allows us to exactly compute the spectrum of the resultant operator. This exact spectrum calculation is non-trivial and somewhat lengthy, so we only sketch it in the main text. A detailed proof is provided in Appendix~\ref{app:proofs-of-main-results}. Finally, simple bounds from Sec.~\ref{sec:branching-rules-young-identities} allow us to conclude the desired result.

To begin, recall that we denote the maximally mixed state on $\mathbb{C}^d$ as $\tau_d = \Id/d.$ Now, for any density matrix $\rho$ on $(\mathbb{C}^d)^{\otimes n}$, define the marginal obtained by tracing out qudits $m+1$ to $n$ as 
\begin{align}
    \rho^{(m)} \coloneq \operatorname{tr}_{m+1,\ldots,n}(\rho).
\end{align}
Now, consider the difference $\rho - \tau_d^{\otimes n}$. We can rewrite this as a telescoping sum to obtain
\begin{align}
    \rho- \tau_d^{\otimes n} &= \rho - \rho^{(n-1)}\otimes \tau_d  + \rho^{(n-1)}\otimes \tau_d - \dotsm - \rho^{(2)} \otimes \tau_d^{\otimes (n-2)}+\rho^{(2)} \otimes \tau_d^{\otimes (n-2)} - \tau_d^{\otimes n}.
\end{align}
When $\rho$ is a Werner state, we may simplify further because $[\rho,U^{\otimes n}]$ for all $U \in \UC_d$ implies that $\rho^{(1)}=\tau_d$. This, along with the identification $\rho = \rho^{(n)}$, allows us to write the difference in a compact form as 
\begin{align}
    \rho -\tau_d^{\otimes n} = \sum_{m=2}^n (\rho^{(m)} - \rho^{(m-1)}\otimes \tau_d) \otimes \tau_d^{\otimes (n-m)}.
\end{align}
This holds for any (not necessarily symmetric) Werner state. Now, let $\rho_0$ and $\rho_1$ be two distinct symmetric Werner states. Using $\rho_0 - \rho_1 = \rho_0 - \tau_d^{\otimes n} - (\rho_1 - \tau_d^{\otimes n})$, we may write
\begin{align}
    \rho_0 - \rho_1 = \sum_{b=0}^1 (-1)^b \sum_{m=2}^n (\rho_b^{(m)} - \rho_b^{(m-1)}\otimes \tau_d)\otimes \tau_d^{\otimes (n-m)}.
\end{align}
Having rewritten our difference of states in this fashion, we are ready to upper-bound the magnitude of the bias in terms the trace norm of a partially transposed operator. Using triangle inequality, we may begin upper-bounding the bias as
\begin{align}
    \abs{\tr{M(\rho_0-\rho_1)}} &\leq \sum_{b=0}^1 \sum_{m=2}^n \abs{\tr{M ((\rho_b^{(m)} - \rho_b^{(m-1)}\otimes \tau_d)\otimes \tau_d^{\otimes (n-m)})}}.
\end{align}
Now, $\rho_0$ and $\rho_1$ are completely arbitrary symmetric Werner states, so any uniform upper bound will apply to both. As, we will suppress the dependence on $b$ carry the resultant factor of two through the subsequent steps. Continuing, we have
\begin{align}
    \abs{\tr{M(\rho_0-\rho_1)}} &\leq 2 \sum_{m=2}^n \abs{\tr{M ((\rho^{(m)} - \rho^{(m-1)}\otimes \tau_d)\otimes \tau_d^{\otimes (n-m)})}} \\
    &= 2 \sum_{m=2}^n \abs{\tr{M^{\Gamma_m} ((\rho^{(m)} - \rho^{(m-1)}\otimes \tau_d)\otimes \tau_d^{\otimes (n-m)})^{\Gamma_m}}} \\
    &\leq \sum_{m=2}^n \|(\rho^{(m)} - \rho^{(m-1)}\otimes \tau_d)\otimes \tau_d^{\otimes (n-m)})^{\Gamma_m} \|_1 \quad \text{(by Lemma~\ref{lem:overlap-to-one-norm-bound})}\\
    &= \sum_{m=2}^n \|((\rho^{(m)})^{\Gamma_m} - \rho^{(m-1)}\otimes \tau_d)\otimes \tau_d^{\otimes (n-m)} \|_1\\
    &= \sum_{m=2}^n \|(\rho^{(m)})^{\Gamma_m} - \rho^{(m-1)}\otimes \tau_d \|_1,
\end{align}
where in the penultimate equality we have used that $\Gamma_m$ acts non-trivially only on the $m$-th tensor site and in the last inequality we used the multiplicativity of the trace norm and the fact that $\|\tau_d\|_1=1$. We also note that the application of Lemma~\ref{lem:overlap-to-one-norm-bound} is valid only for PPT-BOTH effects (i.e. those satisfying $0 \preceq M^{\Gamma_S} \preceq \Id$ for all $S \subseteq [n]$. In our case, we actually use the weaker condition that the effects remain positive under the partial transpose of any single tensor site. 

To simplify further, we must use the symmetry of $\rho$. Because $\rho$ is a symmetric Werner state, every marginal
$\rho^{(m)}$ is a symmetric Werner state on the smaller space. That is, the marginals remain invariant under the collective $\UC_d$ action and
the $\SC_m$ action. 
Recalling the Schur--Weyl decomposition as discussed in Sec.~\ref{sec:schur-weyl},
\begin{align}
      (\mathbb{C}^d)^{\otimes m}
    \cong
    \bigoplus_{\lambda\vdash_d m}
    \mathcal{Q}_\lambda^d\otimes\mathcal{P}_\lambda,
    \label{eq:schur-weyl}
\end{align}
and the orthogonal projections $\Pi_\lambda^{(m)}$ onto the isotpyical components
$\mathcal{Q}_\lambda^d\otimes\mathcal{P}_\lambda$, we set
\begin{align}
    q_\lambda\coloneqq\dim\mathcal{Q}_\lambda^d,
    \qquad
    f_\lambda\coloneqq\dim\mathcal{P}_\lambda,
    \qquad
    \omega_\lambda^{(m)}
    \coloneqq
    \frac{\Pi_\lambda^{(m)}}{q_\lambda f_\lambda}.
\end{align}
With this notation in place, all marginals of symmetric Werner states admit a convex combination of the form (see also Def.~\ref{def:Werner-states})
\begin{align}
    \rho^{(m)} = \sum_{\lambda \vdash_d m} p_m(\lambda) \omega_{\lambda}^{(m)}, \quad \text{where} \quad p_m(\lambda) \geq 0 \quad \text{and} \quad \sum_{\lambda \vdash_d m} p_m(\lambda) =1.
\end{align}
To use this decomposition, note that $\rho^{(m-1)} = \operatorname{tr}_m[\rho^{(m)}]$ and that both the partial trace and partial transpose are linear. Using these facts along with the convexity of the trace norm, we obtain
\begin{align}
    \abs{\tr{M(\rho_0-\rho_1)}} 
    &\leq \sum_{m=2}^n \sum_{\lambda \vdash_d m} p_m(\lambda) \|(\omega_{\lambda}^{(m)})^{\Gamma_m} - \operatorname{tr}_m[\omega_{\lambda}^{(m)}] \otimes \tau_d\|_1.
\end{align}
We have now reduced the entire problem to computing or upper bounding this trace norm. In Proposition~\ref{prop:exact-spectrum} and Corollary~\ref{cor:exact-trace-norm-bound}, we will show that 
\begin{align}
    \|X_{\lambda}\|_1 \leq \frac{3(m-1)}{d}, \quad \text{where} \quad X_{\lambda} \coloneq (\omega_{\lambda}^{(m)})^{\Gamma_m} - \operatorname{tr}_m[\omega_{\lambda}^{(m)}] \otimes \tau_d,
\end{align}
which we can substitute back into our bias bound to obtain
\begin{align}
    \abs{\tr{M(\rho_0-\rho_1)}} 
    &\leq \sum_{m=2}^n \sum_{\lambda \vdash_d m} p_m(\lambda) \frac{3 (m-1)}{d} = \frac{3}{2}\frac{n(n-1)}{d},
\end{align}
where we have used the fact that, for fixed $m$, $\sum_{\lambda \vdash_d m} p_m(\lambda) =1$. This completes the proof.
\end{proof}
We must now prove the essential trace norm bound. In the case of symmetric Werner states, we can exactly compute the trace norm that arises. This computation leverages facts from both the standard and mixed Schur-Weyl duality reviewed in Section~\ref{sec:schur-weyl} and Section.~\ref{sec:mixed-Schur-Weyl}, respectively. Recall from this section that $\lambda\vdash_d m$ be a Young diagram with $m$ boxes and at most $d$ rows, and consider the associated isotypical projector onto the $\lambda$-isotypical component,
\begin{align}
    \Pi_\lambda = \frac{f_\lambda}{m!}\sum_{\pi\in S_m} \chi_\lambda(\pi) P_d(\pi) \in \End((\mathbb{C}^d)^{\otimes m}),
\end{align}
where $P_d(\pi)$ denotes the tensor permutation representation of $S_m$ on $(\mathbb{C}^d)^{\otimes m}$. Having recalled this notation, we may write the operator whose trace norm we wish to bound as
\begin{align}
    X_\lambda \coloneqq \frac{1}{q_\lambda f_\lambda} \left(\Pi_\lambda^{\Gamma_m} - \trc_m \Pi_\lambda \otimes \tau_d\right).
\end{align}
Crucially, it is invariant under the action of $U(d) \times \SC_{m-1}$ defined in Sec.~\ref{sec:mixed-Schur-Weyl}, and hence 
\begin{align}
    X_\lambda \cong \bigoplus_{\beta\vdash_d m-1} \,\bigoplus_{i\colon \beta_i>\beta_{i+1}} x_{\beta,i} \Id_{\cQ_{\alpha^{(i)}}^d}\otimes \Id_{\cP_\beta}
    \label{eq:X_lambda-decomposition}
\end{align}
for some coefficients $x_{\beta,i}\in\mathbb{R}$.
These coefficients are the eigenvalues of $X_\lambda$ with multiplicities $q_{\alpha^{(i)}} f_\beta$.

In the following result, we write $\beta \prec \lambda$ for partitions $\beta\vdash m-1$ and $\lambda\vdash m$ if $\beta$ is obtained from $\lambda$ by removing a single box, and we denote by $c(\lambda\setminus\beta) = j-i$ the content of the box at position $(i,j)$ in $\lambda$ that was removed to obtain $\beta$. Given this set-up, we may now state and sketch the proof of the essential proposition. A detailed proof is given in Appendix~\ref{app:proofs-of-main-results}.
\begin{proposition} \label{prop:exact-spectrum}
    For $\lambda\vdash_d m$, the coefficients $x_{\beta,i}$ in \eqref{eq:X_lambda-decomposition} are determined as follows:
    \begin{enumerate}[label=(\normalfont\roman*)]
        \item If $\beta \not\prec \lambda$, then $x_{\beta,i} = 0$.
        \item If $\beta\prec\lambda$ and $\alpha = \beta-\varepsilon_i$ is not a Young diagram (i.e., $\alpha_{d} = -1$), then
        \begin{align}
            x_{\beta,i} = \frac{1}{m q_\lambda f_\beta} - \frac{1}{d f_\lambda q_\beta}.
        \end{align}
        \item If $\beta\prec\lambda$ and $\alpha = \beta-\varepsilon_i$ is a Young diagram (i.e., $\alpha_d\geq 0$), then
        \begin{align}
            x_{\beta,i} = \frac{1}{m q_\lambda f_\beta} \left( 1 + \frac{(m-1)q_\beta f_\alpha}{f_\beta\, q_\alpha (c(\lambda\setminus \beta)-c(\beta\setminus\alpha))} \right) - \frac{1}{df_\lambda q_\beta}.
        \end{align}
    \end{enumerate}
\end{proposition}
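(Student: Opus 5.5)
The plan is to compute the two pieces of $X_\lambda$ separately in the mixed Schur--Weyl basis \eqref{eq:mixed-schur-weyl-decomposition}, and then subtract.

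\emph{The second piece.} By Proposition~\ref{prop:Werner-marginals},
\begin{align}
\frac{1}{q_\lambda f_\lambda}\trc_m\Pi_\lambda\otimes\tau_d = \sum_{\beta\prec\lambda}\frac{f_\beta}{f_\lambda}\,\frac{\Pi^{(m-1)}_\beta}{q_\beta f_\beta}\otimes\frac{\Id}{d}.
\end{align}
Each summand $\Pi^{(m-1)}_\beta\otimes\Id$ acts as the identity on the whole block $\bigoplus_i \cQ^d_{\alpha^{(i)}}\otimes\cP_\beta$. So this piece contributes $\frac{1}{d f_\lambda q_\beta}$ to every $x_{\beta,i}$ with $\beta\prec\lambda$, and $0$ when $\beta\not\prec\lambda$.

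\emph{The first piece.} It remains to find the spectrum of $\Pi_\lambda^{\Gamma_m}/(q_\lambda f_\lambda)$. I would use the standard trick that $\Pi_\lambda$ lies in the image of the group algebra. Decompose $\Pi_\lambda$ with respect to the coset decomposition of $\SC_m$ over $\SC_{m-1}$:
\begin{align}
\Pi_\lambda=A\otimes\Id_m+\sum_{k<m}B_k\,\swap_{k,m},
\end{align}
with $A$ and $B_k$ built from permutations of the first $m-1$ sites. Partial transposition sends $\swap_{k,m}$ to $d\,\psi^+_{k,m}$, which produces exactly the port-based-teleportation-type operators \eqref{eq:PBT-operator}. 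The term $A\otimes\Id$ is the "polynomial" part and is diagonal in $\beta$. The terms involving $\psi^+$ map the $\beta$-block into $\cQ_\alpha\otimes\cP_\beta$ through $(m-2)$-site data, so they vanish on components with $\alpha^{(i)}$ not a Young diagram (case (ii)).

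To get the eigenvalues without computing these sums explicitly, I would argue as follows. By Schur's lemma $x_{\beta,i}$ equals the trace of $\Pi_\lambda^{\Gamma_m}$ against the projector $Q_{\beta,i}$ onto the $(\beta,i)$ block, divided by $q_{\alpha^{(i)}}f_\beta$. By Hilbert--Schmidt invariance (Proposition~\ref{prop:partial-transpose-properties}) this trace equals $\tr{\Pi_\lambda Q_{\beta,i}^{\Gamma_m}}$. For $i$ with $\alpha$ not a Young diagram, $Q_{\beta,i}^{\Gamma_m}$ is the projector onto the complement of the $\psi^+$-reachable part. This yields $\frac{1}{mq_\lambda f_\beta}$, consistent with the average $\frac{1}{q_\lambda f_\lambda}\cdot\frac{f_\lambda}{mf_\beta}$ coming from restricting $\Pi_\lambda$ to $\beta$.

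For a genuine $\alpha$, the extra term comes from the operator $\sum_k\psi^+_{k,m}$. Its eigenvalues on $\cQ_\alpha$-components are known from the PBT literature \cite{studzinski2017port,christandl2021Asymptotic}: they are $\frac{(m-1)q_\beta f_\alpha}{q_\alpha f_\beta}$ up to normalization. Combining this with the action of the Jucys--Murphy element $\sum_{k<m}\swap_{k,m}$ produces the content difference $c(\lambda\setminus\beta)-c(\beta\setminus\alpha)$. That element acts on $\cP_\lambda$ restricted to $\cP_\beta$ by the scalar $c(\lambda\setminus\beta)$, and its partial transpose acts on the $\alpha$-component via $c(\beta\setminus\alpha)$-type eigenvalues. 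The resulting resolvent-like factor $1/(c_\lambda-c_\alpha)$ then arises when the isotypic projector $\Pi_\lambda$ is written as a polynomial in that element.

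\emph{Main obstacle.} The hardest step is case (iii): extracting the exact coefficient, including the factor $(c(\lambda\setminus\beta)-c(\beta\setminus\alpha))^{-1}$. To handle it, I would first verify the formula on the smallest cases, $m=2$ with $\lambda=(2),(1,1)$, where $\Pi_\lambda^{\Gamma_2}=(\Id\pm d\psi^+)/2$. I would then write the general computation through Gelfand--Tsetlin paths $\alpha\prec\beta\prec\lambda$, following \cite{grinko2023GelfandTsetlin}.
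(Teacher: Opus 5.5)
Your plan handles cases (i) and (ii) essentially correctly, but it leaves case (iii), the substantive part, unproved.

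\textbf{Cases (i) and (ii).} The subtracted term is right: by Proposition~\ref{prop:Werner-marginals}, $\frac{1}{q_\lambda f_\lambda}\trc_m\Pi_\lambda\otimes\tau_d$ contributes $-\frac{1}{df_\lambda q_\beta}$ to $x_{\beta,i}$ when $\beta\prec\lambda$ and nothing otherwise. You also observe that the $\psi^+_{km}$ terms annihilate the non-polynomial blocks. Together with the branching rule $\frac{f_\lambda}{m!}\sum_{\pi(m)=m}\chi_\lambda(\pi)P_d(\pi)=\frac{f_\lambda}{m}\sum_{\mu\prec\lambda}f_\mu^{-1}\Pi^{(m-1)}_\mu\otimes\Id$, this gives case (ii). There are two smaller problems here:
\begin{itemize}
\item The justification you actually wrote for (ii) is false. $Q_{\beta,i}^{\Gamma_m}$ is not a projector: for $m=2$, $\beta=(1)$, the non-polynomial projector is $\Id-\psi^+/d$, and its partial transpose $\Id-\swap/d$ is not idempotent.
\item For (i), you never show that the $\psi^+$ part of $\Pi_\lambda^{\Gamma_m}$ vanishes when $\beta\not\prec\lambda$. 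It follows from your own Hilbert--Schmidt idea. We have $Q_{\beta,i}=(\Pi^{(m-1)}_\beta\otimes\Id)Q_{\beta,i}(\Pi^{(m-1)}_\beta\otimes\Id)$, this sandwich survives $\Gamma_m$, and $(\Pi^{(m-1)}_\beta\otimes\Id)\Pi_\lambda=0$ unless $\beta\prec\lambda$.
\end{itemize}

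\textbf{Case (iii): why the heuristic fails.} With $\psi^+$ unnormalized, the $\psi^+$ part of $\Pi_\lambda^{\Gamma_m}$ is not a multiple of $\sum_k\psi^+_{km}$. It is $\frac{f_\lambda}{m!}\sum_{k<m}\sum_{\pi(m)=m}\chi_\lambda((k,m)\pi)\,\psi^+_{km}P_d(\pi)$. All the $\lambda$-dependence sits in these character weights. So the PBT spectrum of $\sum_k\psi^+_{km}$ cannot determine it; that spectrum is $(m-1)q_\beta f_\alpha/(q_\alpha f_\beta)=d+c(\beta\setminus\alpha)$ on the $(\beta,\alpha)$ block. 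Writing $\Pi_\lambda(\Pi^{(m-1)}_\beta\otimes\Id)$ as a polynomial in $J_m=\sum_{k<m}\swap_{k,m}$ does not help either, because $\Gamma_m$ is not multiplicative. Already for $m=2$, $(J_2^2)^{\Gamma_2}=\Id$ while $(J_2^{\Gamma_2})^2=d\,\psi^+_{12}$. Moreover, $J_m$ only sees $c(\lambda\setminus\beta)$; the difference of contents comes from the adjacent transposition.

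\textbf{The missing ingredient} is Young's seminormal form: in the Gelfand--Tsetlin basis, $\bra{T}R_\lambda((m-1,m))\ket{T}=1/(c_T(m)-c_T(m-1))$.
\begin{itemize}
\item \emph{The paper's route.} It uses $\psi^+_{km}\Theta_\alpha=\psi^+_{km}(\Pi^{(m-2)}_\alpha\otimes\Id_{km})$, traces out site $m$, and conjugates by $(k,m-1)$ to reduce every coset to $(m-1,m)$. It then evaluates $\sum_{\pi(m)=m}\chi_\lambda((m-1,m)\pi)P_d(\pi)$ by Schur orthogonality (Lemma~\ref{lem:character-formula}).
\item \emph{Completing your own Hilbert--Schmidt route.} For a polynomial block, $Q_{\beta,\alpha}$ is $\frac{q_\alpha f_\beta}{q_\beta f_\alpha}$ times the $\SC_{m-1}$-twirl of $(\Pi^{(m-1)}_\beta\otimes\Id)(\Pi^{(m-2)}_\alpha\otimes\psi^+_{m-1,m})(\Pi^{(m-1)}_\beta\otimes\Id)$. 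Hence $Q_{\beta,\alpha}^{\Gamma_m}$ is the same twirl with $\swap_{m-1,m}$ in place of $\psi^+_{m-1,m}$. Summing over tableaux with path $\alpha\prec\beta\prec\lambda$, $\tr{\Pi_\lambda Q_{\beta,\alpha}^{\Gamma_m}}=\frac{q_\alpha f_\beta}{q_\beta f_\alpha}\,q_\lambda\sum_{T}\bra{T}R_\lambda((m-1,m))\ket{T}=\frac{q_\lambda q_\alpha f_\beta}{q_\beta(c(\lambda\setminus\beta)-c(\beta\setminus\alpha))}$. Dividing by $q_\alpha f_\beta\,q_\lambda f_\lambda$ gives the permutation-part eigenvalue $\frac{1}{q_\beta f_\lambda(c(\lambda\setminus\beta)-c(\beta\setminus\alpha))}$. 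After subtracting $\frac{1}{df_\lambda q_\beta}$, this is (iii) via the one-box identities.
\end{itemize}
Without one of these computations, your closing reference to Gelfand--Tsetlin paths names the right setting but not what must be computed there.
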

\begin{proof}[Proof Sketch]
    The proof strategy uses the following simple observation: Let $\cH\cong \bigoplus_i \cH_i$ be a direct sum decomposition of a Hilbert space $\cH$ with associated orthogonal projections $P_i$ onto $\cH_i$.
    If an operator is of the form $A \cong \bigoplus_i a_i \Id_{\cH_i} = \sum_i a_i P_i$, then we have $\trc(P_iA) = a_i\trc P_i = a_i \dim\cH_i$, 
    which allows us to determine the coefficients $a_i$.

    Here, the direct sum decomposition is given by the isotypical decomposition \eqref{eq:mixed-schur-weyl-decomposition} of $(\mathbb{C}^d)^{\otimes m}$ with projectors $(\Pi^{(m-1)}_\beta\otimes \Id_m) \Theta^d_{\alpha^{(i)}}$, where $\Pi^{(m-1)}_\beta$ is defined in Sec.~\ref{sec:schur-weyl}, and the $\Theta^d_{\alpha^{(i)}}$ are the isotypical projectors for the representation $U\mapsto U^{\otimes m-1}\otimes\overline{U}$ of $\UC_d$ discussed in Sec.~\ref{sec:mixed-Schur-Weyl}.
    The operator $A$ is of the form $A = c_1 (\Pi^{(m)}_\lambda)^{\Gamma_m} + c_2 \trc_m (\Pi^{(m)}_\lambda) \otimes \Id_m$.
    The $\lambda$-isotypical projector $\Pi^{(m)}_\lambda$ can be expressed using the character formula
    \begin{align} 
        \Pi^{(m)}_\lambda = \frac{f_\lambda}{m!} \sum_{\pi\in \SC_m} \chi_\lambda(\pi) P_d(\pi).
        \label{eq:isotypical-projector-character-formula}
    \end{align}
    
    To compute the partial transpose and the partial trace of $\Pi^{(m)}_\lambda$, it is useful to isolate the last system by splitting up the summation in \eqref{eq:isotypical-projector-character-formula} into two parts: one over the subgroup $G\cong S_{m-1}$ of those permutations fixing $m$, and one over the left cosets of $G$ of the form $(k,m)G$ for $k=1,\dots,m-1$. 
    For the summation over $G$ we make use of the branching rule \eqref{eq:branching-rule} for the symmetric group, whereas for the summation over the left cosets of $G$ we use the well-known fact that $(P_d(k,m))^{\Gamma_m}$ is proportional to a maximally entangled state on systems $k$ and $m$.
    These observations allow us to evaluate the traces of $(\Pi^{(m)}_\lambda)^{\Gamma_m}$ and $\trc_m (\Pi^{(m)}_\lambda)$ against the projectors $(\Pi^{(m-1)}_\beta\otimes \Id_m) \Theta^d_{\alpha^{(i)}}$, and thus determine the coefficients $x_{\beta,i}$ in \eqref{eq:X_lambda-decomposition}.
\end{proof}

Starting from the exact spectrum above, we can compute the exact trace norm and then use basic branching/dimension identities to recover the upper bound we obtain below.
\begin{corollary} \label{cor:exact-trace-norm-bound}In the setting of the above proposition, let $s_{\beta} \coloneq d q_{\beta} - \sum_{\alpha \prec \beta} q_{\alpha}$. We can then show
\begin{align}
    \|X_{\lambda}\|_1 &= \sum_{\beta \prec \lambda} \sum_{\alpha \prec \beta} \frac{q_{\alpha} f_{\beta}}{q_{\beta} f_{\lambda}} \abs{\frac{1}{c(\lambda \setminus \beta) - c(\beta \setminus \alpha)} - \frac{1}{d}} + \frac{1}{md q_{\lambda} } \sum_{\beta \prec \lambda} s_{\beta} \abs{c(\lambda \setminus \beta)}.
\end{align}
If we further assume $d > 3n(n-1)$ (which is the regime in which Theorem~\ref{thm:symmetric-Werner-distinguishing} is non-trivial), we can show that 
\begin{align}
    \|X_{\lambda}\|_1 \leq (1+\frac{1}{d}) \cdot \frac{m-1}{d-m+2} + \frac{m-1}{d-m+1} \leq \frac{3(m-1)}{d}.
\end{align}
\end{corollary}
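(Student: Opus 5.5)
The plan is to read the trace norm directly off the block decomposition \eqref{eq:X_lambda-decomposition}. Since the blocks are orthogonal and $X_\lambda$ is a scalar on each one, we have $\|X_\lambda\|_1=\sum_{\beta,i}|x_{\beta,i}|\,q_{\alpha^{(i)}}f_\beta$. By Proposition~\ref{prop:exact-spectrum}(i), only $\beta\prec\lambda$ contributes. For fixed $\beta\prec\lambda$ the admissible indices $i$ fall into two groups:
\begin{itemize}
\item those with $\alpha^{(i)}=\alpha$ a genuine Young diagram, which are exactly the $\alpha\prec\beta$ and carry dimension $q_\alpha$;
\item the single index $i=d$ with $\alpha_d=-1$, which is present only when $\beta$ has $d$ rows.
\end{itemize}
The dual Pieri rule \eqref{eq:dual-Pieri-rule} gives $\sum_i q_{\alpha^{(i)}}=\dim(\QC_\beta^d\otimes\C^d)=dq_\beta$. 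So the non-Young block has dimension exactly $s_\beta=dq_\beta-\sum_{\alpha\prec\beta}q_\alpha$, and this formula correctly gives $s_\beta=0$ when that block is absent.

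Next, I will simplify the coefficients of Proposition~\ref{prop:exact-spectrum} using the hook-length/hook-content ratio that already appears in the proof of Lemma~\ref{lem:one-box-dimension-bounds}:
\begin{align}
q_\lambda=\frac{q_\beta f_\lambda\,(d+c(\lambda\setminus\beta))}{m f_\beta},
\qquad\text{so}\qquad
\frac{1}{mq_\lambda f_\beta}=\frac{1}{q_\beta f_\lambda(d+c(\lambda\setminus\beta))}.
\end{align}
In case (ii) this gives
\begin{align}
x_{\beta,d}=\frac{1}{q_\beta f_\lambda}\Big(\frac{1}{d+c}-\frac1d\Big)=-\frac{c(\lambda\setminus\beta)}{d\,q_\beta f_\lambda(d+c)}.
\end{align}
Multiplying by the block dimension $s_\beta f_\beta$ and rewriting $q_\beta f_\lambda(d+c)/f_\beta=mq_\lambda$ yields exactly the second sum in the claimed formula.

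In case (iii) I multiply $x_{\beta,i}$ by $q_\alpha f_\beta$. The coherence term becomes $\frac{(m-1)q_\beta f_\alpha}{m q_\lambda f_\beta(c_\lambda-c_\alpha)}$, where I abbreviate $c_\lambda=c(\lambda\setminus\beta)$ and $c_\alpha=c(\beta\setminus\alpha)$. I will apply the same ratio identity one level down, $q_\beta/q_\alpha=f_\beta(d+c_\alpha)/((m-1)f_\alpha)$, and then collect everything over the common prefactor $q_\alpha f_\beta/(q_\beta f_\lambda)$. The target is the combination $\frac{1}{c_\lambda-c_\alpha}-\frac1d$. I expect the leftover piece $\frac{1}{d+c_\lambda}$ to combine with $\frac{d+c_\alpha}{(d+c_\lambda)(c_\lambda-c_\alpha)}$ into $\frac{1}{c_\lambda-c_\alpha}$, and I will check this bookkeeping carefully. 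This algebra is the main place where an error could creep in, and it is the step I expect to be the real obstacle. The sign of $c_\lambda-c_\alpha$ is not fixed, which is why an absolute value survives in the formula.

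For the inequality, note that $c_\lambda-c_\alpha$ is a nonzero integer: the two boxes removed in the chain $\alpha\prec\beta\prec\lambda$ are distinct corners, so they have distinct contents. Hence $\big|\frac{1}{c_\lambda-c_\alpha}-\frac1d\big|\le 1+\frac1d$. The first sum is then bounded by
\begin{align}
\Big(1+\frac1d\Big)\sum_{\beta\prec\lambda}\frac{f_\beta}{f_\lambda}\sum_{\alpha\prec\beta}\frac{q_\alpha}{q_\beta}\le\Big(1+\frac1d\Big)\frac{m-1}{d-m+2},
\end{align}
using Lemma~\ref{lem:one-box-dimension-bounds}(2) together with $\sum_{\beta\prec\lambda}f_\beta=f_\lambda$. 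For the second sum I use $s_\beta\le dq_\beta$ and $|c(\lambda\setminus\beta)|\le m-1$, which bounds it by $\frac{m-1}{m}\sum_{\beta\prec\lambda}q_\beta/q_\lambda\le\frac{m-1}{d-m+1}$ via Lemma~\ref{lem:one-box-dimension-bounds}(1).

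Finally, since $m\le n$, the assumption $d>3n(n-1)$ makes $d-m+1$ and $d-m+2$ at least $d(1-o(1))$ in a quantitative sense. An elementary estimate then bounds $(1+\frac1d)\frac{d}{d-m+2}+\frac{d}{d-m+1}$ by $3$, which gives $\|X_\lambda\|_1\le 3(m-1)/d$.
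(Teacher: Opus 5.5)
Your proposal is correct and follows the paper's proof essentially step for step. You use the same two one-box ratio identities to rewrite the eigenvalues of Proposition~\ref{prop:exact-spectrum} as $\frac{1}{q_\beta f_\lambda}\bigl(\frac{1}{c(\lambda\setminus\beta)-c(\beta\setminus\alpha)}-\frac{1}{d}\bigr)$ and $-\frac{c(\lambda\setminus\beta)}{m d q_\lambda f_\beta}$, and the same estimates ($|c(\lambda\setminus\beta)-c(\beta\setminus\alpha)|\ge 1$, $s_\beta\le d q_\beta$, $|c(\lambda\setminus\beta)|\le m-1$, and Lemma~\ref{lem:one-box-dimension-bounds}) to get the inequality. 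One small slip: the non-polynomial block with $\alpha_d=-1$ occurs exactly when $\beta_d=0$, i.e.\ when $\beta$ has at most $d-1$ rows, not when $\beta$ has $d$ rows; this does not affect the formula, because you read its dimension off the dual Pieri count as $s_\beta$, which vanishes precisely when $\beta_d>0$.
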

\begin{proof}
    We can start by simplifying the form of the eigenvalues using the one-box identities. Recall that when $\beta \prec \lambda$, we have
    \begin{align}
        \frac{q_{\lambda}/f_{\lambda}}{q_{\beta} / f_{\beta}} = \frac{d+c(\lambda \setminus \beta)}{m} \Longleftrightarrow m q_{\lambda} f_{\beta} = (d+ c(\lambda \setminus \beta))q_{\beta} f_{\lambda}.
    \end{align}
    Similarly, for $\alpha \prec \beta$, we have
    \begin{align}
        \frac{q_{\beta} / f_{\beta}}{q_{\alpha}/ f_{\alpha}} = \frac{d+c(\beta \setminus \alpha)}{m-1} \Longleftrightarrow (m-1) \frac{q_{\beta} f_{\alpha}}{f_{\beta}q_{\alpha}} = d + c(\beta \setminus \alpha).
    \end{align}
    Using these two identities, we can re-write the eigenvalues for the polynomial sector as 
    \begin{align}
        x_{\beta, \alpha}^{\rm poly} = \frac{1}{q_{\beta} f_{\lambda}} \left( \frac{1}{c(\lambda \setminus \beta) - c(\beta \setminus \alpha)}-\frac{1}{d}\right), \quad \text{with multiplicities} \quad q_{\alpha} f_{\beta}.
    \end{align}
    We can do a similar thing for the non-polynomial eigenvalues. Using $m q_{\lambda} f_{\beta} =(d+c(\lambda \setminus \beta)) q_{\beta} f_{\lambda}$, we can write the non-polynomial eigenvalues as
    \begin{align}
        x_{\beta}^{\rm non} &= \frac{1}{m q_{\lambda} f_{\beta}} - \frac{1}{d f_{\lambda} q_{\beta}} = \frac{1}{(d+c(\lambda \setminus \beta))q_{\beta}f_{\lambda}} - \frac{1}{d q_{\beta} f_{\lambda}} = -\frac{c(\lambda \setminus \beta)}{d(d+c(\lambda \setminus \beta)) q_{\beta} f_{\lambda}},
    \end{align}
    which we can simplify further by again using $m q_{\lambda} f_{\beta} = (d+c(\lambda \setminus \beta))q_{\beta}f_{\lambda}$. This yields the simplified eigenvalues on the non-polynomial sector
    \begin{align}
         x_{\beta}^{\rm non}  &= - \frac{c(\lambda \setminus \beta)}{m d q_{\lambda} f_{\beta}}, \quad \text{with multiplicity} \quad f_{\beta} \left(d q_{\beta} - \sum_{\alpha \prec \beta} q_{\alpha}\right) \eqcolon f_{\beta}s_{\beta}.
    \end{align}
    Putting these together, we obtain the exact formula for the trace norm
    \begin{align}
         \|X_{\lambda}\|_1 &= \sum_{\beta \prec \lambda} \sum_{\alpha \prec \beta} \frac{q_{\alpha} f_{\beta}}{q_{\beta} f_{\lambda}} \abs{\frac{1}{c(\lambda \setminus \beta) - c(\beta \setminus \alpha)} - \frac{1}{d}} + \frac{1}{md q_{\lambda}} \sum_{\beta \prec \lambda} s_{\beta} \abs{c(\lambda \setminus \beta)}.
    \end{align}
    Now, let us derive the upper bound that makes the asymptotic scaling readily apparent. For starters, note that $c(\lambda \setminus \beta) - c(\beta \setminus \alpha)$ is a non-zero integer, so 
    \begin{align}
        \abs{\frac{1}{c(\lambda \setminus \beta)-c(\beta \setminus \alpha)}} \leq 1.
    \end{align}
    Applying this bound, we may simplify the polynomial terms further as
    \begin{align}
         \sum_{\beta \prec \lambda} \sum_{\alpha \prec \beta} \frac{q_{\alpha f_{\beta}}}{q_{\beta} f_{\lambda}} \abs{\frac{1}{c(\lambda \setminus \beta) - c(\beta \setminus \alpha)} - \frac{1}{d}} \leq \left(1+\frac{1}{d}\right) \sum_{\beta \prec \lambda} \frac{f_{\beta}}{f_{\lambda}} \sum_{\alpha \prec \beta} \frac{q_{\alpha}}{q_{\beta}} \leq \left(1+\frac{1}{d}\right) \frac{m-1}{d-m+2},
    \end{align}
    where we have used the fact that $\sum_{\beta \prec \lambda} f_{\beta} = f_{\lambda}$ along with the second part of Lemma~\ref{lem:one-box-dimension-bounds}. The non-polynomial terms also simplify very nicely. First, note
    \begin{align}
        s_{\beta}=dq_{\beta}-\sum_{\alpha \prec \beta} q_{\alpha} \leq dq_{\beta},
    \end{align}
    and $\abs{c(\lambda \setminus \beta)} \leq m-1$. Thus, we may write
    \begin{align}
        \frac{1}{md q_{\lambda}} \sum_{\beta \prec \lambda} s_{\beta} \abs{c(\lambda \setminus \beta)} &\leq \frac{m-1}{m q_{\lambda}} \sum_{\beta \prec \lambda} q_{\beta} \leq \frac{m-1}{d-m+1},
    \end{align}
    where in the last inequality we have applied the first part of Lemma~\ref{lem:one-box-dimension-bounds}. Putting all of this together, we obtain
    \begin{align}
    \|X_{\lambda}\|_1 \leq \left(1+\frac{1}{d}\right) \cdot \frac{m-1}{d-m+2} + \frac{m-1}{d-m+1} \leq \frac{3(m-1)}{d},
\end{align}
where the last inequality holds when $m\leq n$ and $d > 3n(n-1).$
\end{proof}
\subsection{Applications to Testing Unitarily Invariant Properties}
The preceding theorem also gives sample complexity lower bounds for
quantum property testing~\cite{montanaro2016Survey}. For this application, we interpret the factors
of $(\mathbb C^d)^{\otimes n}$ as \emph{copy registers} containing $n$
independent copies of an unknown state, rather than as spatially
separated parties. An adaptive single-copy protocol measures one
copy at a time, with the $k$-th measurement potentially depending on all $k-1$ previous outcomes. The resultant two-outcome measurement has separable effects
across the copy registers and therefore satisfies PPT-BOTH. Theorem~\ref{thm:symmetric-Werner-distinguishing} thus implies the following corollary.

\begin{corollary}[Testing Unitarily Invariant Properties] \label{cor:testing-unitarily-invariant-properties} Any PPT-BOTH tester $\{M,\Id-M\}$ on $(\mathbb{C}^d)^{\otimes n}$ for a non-trivial unitarily invariant property on $\PC \subseteq \BC(\mathbb{C}^d)$ requires $n \geq \Omega(\sqrt{d})$ samples to achieve constant bias.
\end{corollary}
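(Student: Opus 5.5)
The plan is to reduce testing to distinguishing two symmetric multipartite Werner states and then invoke Theorem~\ref{thm:symmetric-Werner-distinguishing}.

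\emph{Setup.} Formalize a property $\PC\subseteq\BC(\mathbb{C}^d)$ as unitarily invariant if $\rho\in\PC$ implies $U\rho U^\dagger\in\PC$ for all $U\in\UC_d$. Call it non-trivial if there are a state $\rho_{\rm yes}\in\PC$ and a state $\rho_{\rm no}$ that is $\eps$-far from $\PC$ for some constant $\eps>0$. By unitary invariance of $\PC$ and of the trace distance, every $U\rho_{\rm yes}U^\dagger$ lies in $\PC$, and every $U\rho_{\rm no}U^\dagger$ remains $\eps$-far from $\PC$.

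\emph{Averaged hypotheses.} A tester that succeeds with probability at least $2/3$ on every yes-instance and every no-instance also succeeds on the Haar-averaged hypotheses
\begin{align}
 \rho_0^{(n)}=\underset{U}{\mathbb{E}}\big[(U\rho_{\rm yes}U^\dagger)^{\otimes n}\big] \quad \text{and} \quad \rho_1^{(n)}=\underset{U}{\mathbb{E}}\big[(U\rho_{\rm no}U^\dagger)^{\otimes n}\big].
\end{align}
This follows by linearity of $\tr{M\,\cdot\,}$: we get $\tr{M\rho_0^{(n)}}\ge 2/3$ and $\tr{M\rho_1^{(n)}}\le 1/3$, so the bias $\abs{\tr{M(\rho_0^{(n)}-\rho_1^{(n)})}}$ is at least $1/3$. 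The same argument works for any constant bias $\delta>0$ in place of $1/3$.

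\emph{Symmetry.} Both averaged states commute with $V^{\otimes n}$ for every $V\in\UC_d$, by left invariance of the Haar measure. Each integrand is a tensor power, so each state also commutes with every $P_d(\pi)$. Hence $\rho_0^{(n)}$ and $\rho_1^{(n)}$ are symmetric multipartite Werner states in the sense of Definition~\ref{def:Werner-states}.

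\emph{Conclusion.} Apply Theorem~\ref{thm:symmetric-Werner-distinguishing}. If $3n(n-1)<2d$, then $\delta\le \frac{3}{2}\frac{n(n-1)}{d}$, which forces $n(n-1)\ge 2\delta d/3$. If instead $3n(n-1)\ge 2d$, then $n(n-1)\ge 2d/3$ directly. Either way $n^2\ge c\,\delta d$ for an absolute constant $c$, so $n=\Omega(\sqrt{\delta d})=\Omega(\sqrt d)$ for constant $\delta$.

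For adaptive single-copy testers, note as in the text that their effects are separable across the copy registers. Such effects are PPT-BOTH, so the corollary covers them.

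\emph{Main obstacle.} The analytic content is already in Theorem~\ref{thm:symmetric-Werner-distinguishing}, so the only delicate step is definitional. One must state "non-trivial" precisely so that some yes-state and some constant-far no-state exist. One must also check that the constant-bias requirement on worst-case inputs transfers to the averaged mixtures. This transfer uses only linearity of the acceptance probability and the unitary invariance of $\PC$ and of its $\eps$-far complement.
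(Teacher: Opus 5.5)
Your proposal is correct and follows essentially the same route as the paper: Haar-average a yes-instance and an $\eps$-far no-instance into two symmetric multipartite Werner states, transfer the $1/3$ bias by linearity of the acceptance probability, and invoke Theorem~\ref{thm:symmetric-Werner-distinguishing}. Your explicit treatment of the complementary regime $3n(n-1)\ge 2d$ tidies a step the paper leaves implicit.
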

\begin{proof}
Let $\mathcal P \subseteq \BC(\mathbb{C}^d)$ be a unitarily invariant property, and fix $\eps>0$. A tester must accept every $\rho\in\mathcal P$ and reject every state satisfying
\begin{align}
 \inf_{\sigma\in\mathcal P}\frac12\|\rho-\sigma\|_1
 \ge\eps,
\end{align}
with probability at least $2/3$. Choose any state $\rho_{\rm Y}\in\mathcal P$ and any state $\rho_{\rm N}$ at trace distance at least $\eps$ from $\mathcal P$. Drawing $U$ according to Haar measure on $\UC_d$ gives the ensembles
\begin{align}
 \mathcal E_{\mathrm Y}
 &=\{U\rho_{\mathrm Y}U^\dagger\}_U \quad \text{and} \quad
 \mathcal E_{\mathrm N}
 =\{U\rho_{\mathrm N}U^\dagger\}_U.
\end{align}
Unitary invariance of the property and the trace norm ensures that
every member remains in its respective promise class. For $n$ copies,
their averaged states are
\begin{align}
 \omega_b^{(n)}
 \coloneq
 \int_{\UC_d}
 \bigl(U\rho_bU^\dagger\bigr)^{\otimes n}\,dU,
 \qquad b\in\{\mathrm Y,\mathrm N\}.
\end{align}
Here the unitary is drawn once, and the resulting unknown state is
supplied in $n$ copies. Haar averaging makes $\Omega_b^{(n)}$
invariant under collective unitaries (by left-invariance of the Haar measure), while each tensor power is invariant under permutations of the copies. Thus both averaged states are symmetric multipartite Werner states. Now, let $M$ denote the tester's acceptance effect. Since the tester
succeeds on every promised input, averaging its acceptance
probabilities gives
\begin{align}
 \tr{M\omega_{\rm Y}^{(n)}}
 &\ge \frac{2}{3}, \quad \text{and} \quad
 \tr{M\omega_{\mathrm N}^{(n)}}
 \le \frac{1}{3}.
\end{align}
Consequently, the tester distinguishes the two averaged states with
bias at least $1/3$. If its measurement satisfies PPT-BOTH,
Theorem~\ref{thm:symmetric-Werner-distinguishing} implies
\begin{align}
\frac{1}{3}
 \le
 \operatorname{tr}\!\left[
 M\bigl(\omega_{\rm Y}^{(n)}-\omega_{\rm N}^{(n)}\bigr)
 \right]
 \le
 \frac{3}{2}\frac{n(n-1)}{d}
\end{align}
whenever $n \geq 2$ and $3n(n-1)<2d$. This forces
$n=\Omega(\sqrt d)$ and we are done.
\end{proof}
We therefore obtain an $\Omega(\sqrt d)$ copy lower bound for
PPT-BOTH, and thus adaptive single-copy, testers for any non-trivial unitarily invariant property. We illustrate the power of this result with two concrete examples: purity and low-rank testing.

\medskip
\noindent\textbf{Purity testing.}
Consider testing whether an unknown state on $\mathbb C^d$ is pure
or $\epsilon$-far in trace distance from every pure state.
To prove a sample complexity lower bound on this task for any fixed $0<\epsilon\leq 1/2$ and $d\geq2$, it suffices to consider distinguishing Haar-random pure states and the maximally mixed state $\tau_d$. Their averaged $n$-copy states are (cf. Lemma~\ref{lem:haar-twirl-symmetric-subspace})
\begin{align}
 \omega_{\rm Y}^{(n)} &= \underset{U \in \UC_d}{\mathbb{E}}\left[U^{\otimes n} \ketbra{\phi}{\phi}^{\otimes n}(U^\dagger)^{\otimes n}\right]
 = \frac{\Pi_n}{d[n]} \quad \text{and} \quad
 \omega_{\rm N}^{(n)}=\tau_d^{\otimes n}
 =\frac{\Id}{d^n},
\end{align}
where $\Pi_n$ projects onto the symmetric subspace $\vee^n \mathbb{C}^d$. Note $\tau_d$ is at trace distance $1-1/d\geq1/2$ from every pure state, thus a PPT-BOTH tester, should one exist, could be used to distinguish these ensemble-averaged states. Both averaged states are symmetric Werner states, so our
indistinguishability bound along with the assumption that a tester exists, yields
\begin{align}
 \Omega(1) \abs{\tr{M\bigl(\omega_{\mathrm Y}^{(n)}-\omega_{\mathrm N}^{(n)}\bigr)}}
 \leq O\left(\frac{n^2}{d}\right) \implies n \geq \Omega(\sqrt{d}).
\end{align}
This recovers the bound from Ref.~\cite{akresh2026Optimal}, in which the authors directly prove this specific result using only linear algebra and a few facts from the church of the symmetric subspace~\cite{harrow2013Church}. This is a strengthening of the original bound from Ref~\cite{chen2022Exponential} which uses the learning tree formalism to prove an adaptive single-copy lower bound on this task. While it is nice to recover a known result as a special case, let us demonstrate that our result is also capable of giving new bounds.

\medskip
\noindent\textbf{Low-rank testing.}
Consider testing whether an unknown state on $\mathbb C^d$ has
rank at most $r$ or is $\epsilon$-far in trace distance from
every such state~\cite{childs2007Weak,odonnell2015Quantum}. For $d\geq2r$ and any fixed
$0<\epsilon\leq1/2$, it suffices to distinguish states that are
maximally mixed on Haar-random subspaces of dimensions $r$ and
$2r$, respectively. Writing $P_s$ for a fixed rank-$s$
orthogonal projector, their averaged $n$-copy states are
\begin{align}
 \omega_{\rm Y}^{(n)}
 &= \underset{U\in\UC_d}{\mathbb E}
 \left[\left(\frac{UP_rU^\dagger}{r}\right)^{\otimes n}\right] \quad \text{and} \quad
 \omega_{\rm N}^{(n)}
 = \underset{U\in\UC_d}{\mathbb E}
 \left[\left(\frac{UP_{2r}U^\dagger}{2r}\right)^{\otimes n}\right].
\end{align}
Every state in the first ensemble has rank $r$, while every
state in the second is at trace distance $1/2$ from the set
of states of rank at most $r$. To see this, a state maximally
mixed on a $2r$-dimensional subspace assigns weight at most
$1/2$ to the support of any state of rank at most $r$.
This gives a trace-distance lower bound of $1/2$, attained
by the normalized projector onto any $r$-dimensional
subspace of its support. Both ensemble-averaged states
are symmetric Werner states. Consequently, in the regime of
Theorem~\ref{thm:symmetric-Werner-distinguishing}.
As in the general reduction above, this implies an
$\Omega(\sqrt d)$ sample complexity lower bound.

For comparison, Theorem~40 of Ref.~\cite{ye2025Exponential},
specialized to single-copy measurements and spectra uniform
on $r$ and $2r$ entries, implies a lower bound of
$\Omega(\sqrt d/[r\sqrt{\log(2r)}])$ for the same
distinguishing task. Our result removes this rank-dependent
denominator and extends the lower bound from adaptive
single-copy protocols to all PPT-BOTH measurements.
For fixed $r$, the dimension dependence agrees; the improvement
in rank dependence becomes relevant when $r$ grows with $d$.

These examples illustrate a general consequence of our result. In their review on quantum property testing, Montanaro and de Wolf show that any tester, with unrestricted measurement access, for a unitarily invariant property can be replaced by weak Schur sampling followed by classical post-processing, without worsening its completeness, soundness, or copy complexity~\cite[Lemma~20]{montanaro2016Survey}. Thus, for any such property, a dimension-independent weak Schur sampling algorithm, when taken together with Corollary~\ref{cor:testing-unitarily-invariant-properties}, yields a sample complexity separation: collective measurements suffice with $O(1)$ copies, while PPT-BOTH measurements, including adaptive single-copy protocols, require $\Omega(\sqrt d)$ copies. Here the testing accuracy, success probability, and any other property parameters are held fixed as $d$ grows.

\section{Distinguishing Multipartite Werner States with PPT-BOTH Measurements}
The proof of Theorem~\ref{thm:symmetric-Werner-distinguishing} relied heavily on the fact that symmetric Werner states have both collective unitary and permutation symmetry. Removing permutation symmetry makes directly computing the spectrum of the resultant operator highly non-trivial. Fortunately, our main result shows that there exists a method of upper-bounding the required trace norm that still yields the same bound on the bias.

\begin{theorem}[PPT-BOTH Indistinguishability of Multipartite Werner States] \label{thm:multipartite-werner-indistinguishability} 
    Consider density matrices $\rho_b$ on $(\mathbb{C}^d)^{\otimes n}$ such that $[\rho_b,U^{\otimes n}] = 0$ for all $U\in \UC_d$ and $b \in \{0,1\}$. If $n \geq 2$ and $3n(n-1) < 2d$, any PPT-BOTH measurement $\{M,\Id-M\}$ used to distinguish $\rho_0$ from $\rho_1$ will achieve bias at most
\begin{align}
    \abs{\tr{M(\rho_0-\rho_1)}} \leq \frac{3}{2}  \frac{n(n-1)}{d}.
\end{align}
\end{theorem}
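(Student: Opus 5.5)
The plan is to reuse the telescoping reduction from the proof of Theorem~\ref{thm:symmetric-Werner-distinguishing} verbatim, since that part never used permutation symmetry. It gives
\begin{align}
    \abs{\tr{M(\rho_0-\rho_1)}} \leq \sum_{b=0}^1\sum_{m=2}^n \frac12\norm{(\rho_b^{(m)})^{\Gamma_m} - \rho_b^{(m-1)}\otimes\tau_d}_1 ,
\end{align}
using Lemma~\ref{lem:overlap-to-one-norm-bound} and single-site PPT of $M$. It therefore suffices to prove that for every Werner state $\rho$ on $(\mathbb C^d)^{\otimes m}$,
\begin{align}
    \norm{Y}_1 \leq \frac{3(m-1)}{d}, \qquad Y \coloneq \rho^{\Gamma_m} - \operatorname{tr}_m\rho\otimes\tau_d .
\end{align}
By convexity of the trace norm and Definition~\ref{def:Werner-states}, it is enough to treat a single block $\rho = \Omega_{\lambda,\sigma} = \frac{\Id_{\QC_\lambda^d}}{q_\lambda}\otimes\sigma$ with $\lambda\vdash_d m$ and $\sigma$ an arbitrary density operator on $\PC_\lambda$. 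In fact, since $\sigma\mapsto Y$ is linear, we may further reduce to pure $\sigma = \ketbra{v}{v}$. The marginal $\operatorname{tr}_m\rho$ is given by Lemma~\ref{lem:general-marginal}.

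Next we decompose $Y$ in mixed Schur--Weyl duality. Because $\rho$ is only $U^{\otimes m}$-invariant, $Y$ is $U^{\otimes m-1}\otimes\overline U$-invariant but no longer $\SC_{m-1}$-invariant. By \eqref{eq:mixed-schur-weyl-decomposition} and Schur's lemma it therefore has the form
\begin{align}
    Y \cong \bigoplus_{\alpha}\Id_{\QC_\alpha^d}\otimes Y_\alpha ,
\end{align}
where $Y_\alpha$ acts on the multiplicity space $\bigoplus_{\beta:\,\alpha\text{ occurs in }\beta\otimes\overline{\square}}\PC_\beta$, so that $\norm{Y}_1=\sum_\alpha q_\alpha\norm{Y_\alpha}_1$.

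We split into two sectors. In the \emph{non-polynomial sector} ($\alpha$ not a partition, i.e.\ $\alpha_d=-1$), each $\alpha$ arises from a unique $\beta$. I expect $\rho^{\Gamma_m}$ to vanish on this sector, as the computation behind Proposition~\ref{prop:exact-spectrum}(ii) suggests: the $\Gamma_m$-images of permutations in the coset $(k,m)G$ are supported on the polynomial sector. If so, $Y_\alpha$ there equals the block of $\frac{1}{mq_\lambda}\sigma_{\beta}$-type terms minus $\frac{1}{dq_\beta}\sigma_\beta$. This is diagonal in $\beta$ and positive multiples of $\sigma_\beta$, so its trace norm is $\abs{\frac{1}{mq_\lambda}-\frac{1}{dq_\beta}}\cdot t_\beta$ times multiplicities. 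It can be summed exactly as in Corollary~\ref{cor:exact-trace-norm-bound}, with $t_\beta$ replacing $f_\beta/f_\lambda$, giving at most $\frac{m-1}{d-m+1}$ via part~1 of Lemma~\ref{lem:one-box-dimension-bounds}, because $\sum_\beta t_\beta=1$.

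The \emph{polynomial sector} ($\alpha\vdash m-2$) is the main obstacle. Here $Y_\alpha$ mixes different $\beta\succ\alpha$ through the off-diagonal branch blocks of $\sigma$, so an exact spectrum is unavailable. My plan is an operator-order inequality that controls all coherences at once. Write $Y_\alpha = A_\alpha - B_\alpha$ with $B_\alpha=\bigoplus_{\beta\succ\alpha}\frac{1}{dq_\beta}(\text{restriction of }\sigma_\beta)\succeq0$ and $A_\alpha$ the compression of $\rho^{\Gamma_m}$. Then bound
\begin{align}
    \norm{Y_\alpha}_1\leq\norm{A_\alpha}_1+\tr{B_\alpha}.
\end{align}
For $A_\alpha$, I would write $\rho^{\Gamma_m}=\sum_k(\cdot)$ using the character/coset decomposition and the fact that $P_d((k,m))^{\Gamma_m}=d\,\psi^+_{k,m}$. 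The aim is an inequality $-C\preceq A_\alpha\preceq C$, where $C$ is the $\SC_{m-1}$-twirled (block-diagonal) version. This would follow from $\abs{\langle x,Ay\rangle}^2\le\langle x,Cx\rangle\langle y,Cy\rangle$, i.e.\ a Cauchy--Schwarz for $X^\dagger X\succeq0$ with $X$ built from $\psi^+$ and $\sigma^{1/2}$. With that in hand, $\norm{A_\alpha}_1\le \tr{C}$, and $\tr{C}$ is computable, giving $\sum_\alpha q_\alpha\tr{C_\alpha}\lesssim \sum_\beta t_\beta\sum_{\alpha\prec\beta}\frac{q_\alpha}{q_\beta}\le\frac{m-1}{d-m+2}$ by part~2 of Lemma~\ref{lem:one-box-dimension-bounds}. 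The $B$ term contributes at most $\frac1d\sum_\beta t_\beta\sum_{\alpha\prec\beta}\frac{q_\alpha}{q_\beta}$, i.e.\ the $(1+\frac1d)$ factor of Corollary~\ref{cor:exact-trace-norm-bound}.

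Adding the two sectors gives $\norm{Y}_1\le(1+\frac1d)\frac{m-1}{d-m+2}+\frac{m-1}{d-m+1}\le\frac{3(m-1)}{d}$ under $3n(n-1)<2d$. Summing over $m$ and $b$ as in Theorem~\ref{thm:symmetric-Werner-distinguishing} then yields $\frac32\frac{n(n-1)}{d}$. The step I am least sure of is establishing the order inequality $\pm A_\alpha\preceq C_\alpha$ with the right constant. I would first try to verify it by expressing $\rho^{\Gamma_m}$ restricted to the polynomial sector as $\frac{1}{m q_\lambda}$ times a compression of $\sum_{k}\,\psi^+_{k,m}$-terms acting on $\Id\otimes\sigma$, and dominating that compression by its diagonal part.
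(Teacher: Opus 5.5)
Your architecture matches the paper's: telescoping, reduction to a single block $\Omega=\Omega_{\lambda,\sigma}$, the split $\mathcal{H}_M\oplus\mathcal{H}_S$, exact evaluation on $\mathcal{H}_S$, and an order inequality on $\mathcal{H}_M$ with trace $\sum_{\beta\prec\lambda}t_\beta\sum_{\alpha\prec\beta}q_\alpha/q_\beta$. However, the decisive step on the polynomial sector is left open, and the dominating operator you propose cannot work. Any $C$ with $-C\preceq A_\alpha\preceq C$ must itself be positive semidefinite (add the two inequalities). The $\mathcal{S}_{m-1}$-twirl, or $\beta$-block-diagonal part, of $A_\alpha$ need not be. Already for $m=2$ and $\lambda=(1,1)$ the twirl is trivial and $\mathcal{H}_M$ is the line spanned by the maximally entangled vector. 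On that line $\Omega^{\Gamma_2}=(\Id-\mathbb{F}^{\Gamma_2})/(d(d-1))$ has eigenvalue $-1/d$. So $A_\alpha$ is its own twirl and its own diagonal part, and $\pm A_\alpha\preceq A_\alpha$ fails. For the same reason, dominating a compression of the $\psi^+_{k,m}$-terms by its diagonal part is not a valid principle. It is false for indefinite operators, and even the pinching inequality for positive ones loses a factor equal to the number of blocks. Those terms also omit the permutations fixing $m$, which contribute on $\mathcal{H}_M$ too.

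The missing idea is to dominate by $D=\operatorname{tr}_m\Omega\otimes\Id_m$ instead. For \emph{every} $\Omega\succeq0$ one has $D\pm\Omega^{\Gamma_m}=(\mathrm{id}\otimes\mathcal{E}_\pm)(\Omega)\succeq0$, where $\mathcal{E}_\pm(X)=\operatorname{tr}(X)\Id\pm X^{T}$. This map is completely positive because its Choi matrix is $\Id\pm\mathbb{F}\succeq0$ (Claim~\ref{claim:partial-transpose-CP}). Compressing with the projector $M$ gives $\pm M\Omega^{\Gamma_m}M\preceq MDM$, hence $\norm{M\Omega^{\Gamma_m}M}_1\le\operatorname{tr}(MDM)=\sum_{\beta\prec\lambda}t_\beta\sum_{\alpha\prec\beta}q_\alpha/q_\beta$. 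Note that $MDM$ is block diagonal in $\beta$ but carries $\sigma_\beta$ itself rather than any twirl of $A_\alpha$. Your trace count is precisely $\operatorname{tr}(MDM)$, which is why your bookkeeping came out right.

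On $\mathcal{H}_S$ there is a smaller slip. $\Omega^{\Gamma_m}$ does not vanish there; only the terms with $\pi(m)\neq m$ do, because their partial transposes have range in $\mathcal{H}_M$. If all of $\Omega^{\Gamma_m}$ vanished, the block would be $-\frac{1}{dq_\beta}\Id\otimes\sigma_\beta$, whose trace norm $\frac{s_\beta}{dq_\beta}t_\beta\approx t_\beta$ is not small. The surviving terms with $\pi(m)=m$ give, via the branching rule and Schur orthogonality, $S_\beta\Omega^{\Gamma_m}S_\beta=\frac{f_\lambda}{mq_\lambda f_\beta}\Id\otimes\sigma_\beta=\frac{1}{(d+c(\lambda\setminus\beta))q_\beta}\Id\otimes\sigma_\beta$ on $\mathcal{Q}^d_{\beta^-}\otimes\mathcal{P}_\beta$. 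Your coefficient $\frac{1}{mq_\lambda}$ is missing the factor $f_\lambda/f_\beta$, and that factor is essential. For $\lambda=(2,1)$, $\beta=(2)$, your coefficient leaves a block of trace norm close to $t_\beta/2$. With the correct coefficient the block is $-\frac{c(\lambda\setminus\beta)}{d(d+c(\lambda\setminus\beta))q_\beta}\Id\otimes\sigma_\beta$. Then $s_\beta\le dq_\beta$, $\abs{c(\lambda\setminus\beta)}\le m-1$ and $\sum_\beta t_\beta=1$ give the bound $\frac{m-1}{d-m+1}$ you want.
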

\begin{proof}
    The proof of this result begins along identical lines to the proof of Thm.~\ref{thm:symmetric-Werner-distinguishing}. In fact, until the inequality
\begin{align}
    \abs{\tr{M(\rho_0-\rho_1)}} 
    &\leq \sum_{m=2}^n \|(\rho^{(m)})^{\Gamma_m} - \rho^{(m-1)}\otimes \tau_d \|_1,
\end{align}
we did not use the additional permutation symmetry that symmetric Werner states possess anywhere in the proof.\footnote{Note that $\rho$ represents an arbitrary Werner state here. We suppress the subscripts from the theorem statement to ease notation. See the proof of Theorem~\ref{thm:symmetric-Werner-distinguishing} for proper justification.} 
The first change in the proof is adjusting the decomposition of the $\rho^{(m)}$, as the input states and their marginals are still Werner states, but not necessarily symmetric. 
Thus, using Prop.~\ref{prop:Werner-marginals}, we may write
\begin{align}
 \rho^{(m)} =
 \bigoplus_{\lambda\vdash_d m}
 p_\lambda\frac{\Id_{\QC_\lambda^d}}{q_\lambda}
 \otimes\sigma_\lambda^{(m)} \eqcolon \bigoplus_{\lambda\vdash_d m}
 p_\lambda \Omega_{\lambda,\sigma}.
 \label{eq:werner-state-marginal}
\end{align}
Using linearity of both the partial trace and partial transpose, along with the convexity of the trace norm, as in the proof of Thm.~\ref{thm:symmetric-Werner-distinguishing}, we obtain an upper bound on the bias of the form
\begin{align}
     \abs{\tr{M(\rho_0-\rho_1)}} 
    &\leq \sum_{m=2}^n \sum_{\lambda \vdash_d m} p_{\lambda} \|X_{\lambda,\sigma}\|_1, \quad \text{with} \quad X_{\lambda,\sigma} \coloneq \Omega_{\lambda,\sigma}^{\Gamma_m} -  \trc_m \Omega_{\lambda,\sigma}\otimes \tau_d,
\end{align}
where we have suppressed the $m$ dependence to avoid over-cluttering the notation. Unlike the symmetric case, we will not exactly compute the spectrum of this operator. Rather, we will use the symmetries it satisfies to derive a direct sum decomposition of the operator into two distinct sectors. 

To be concrete, it will be helpful to make the decomposition of the $(\UC_d \times \SC_{m-1})$-representation space $(\mathbb{C}^d)^{\otimes m}$ in \eqref{eq:mixed-schur-weyl-decomposition} more specific, for which we will follow the analysis and notation of \cite{fei2023Efficient}, which is in turn based on a series of works on the algebra of partially transposed permutation operators \cite{studzinski2013Commutant,mozrzymas2014Structure} and port-based teleportation \cite{studzinski2017port,mozrzymas2018optimal,christandl2021Asymptotic,leditzky2022Optimality}. 

In \cite{fei2023Efficient}, the authors show that the decomposition \eqref{eq:mixed-schur-weyl-decomposition} of $(\mathbb{C}^d)^{\otimes m}$ can be rewritten as
\begin{align}
    (\mathbb{C}^d)^{\otimes m} \cong \cH_M \oplus \cH_S \label{eq:S-M-decomposition} \qquad
    \text{with}\qquad 
    \cH_M &= \bigoplus_{\alpha\vdash_d m-2} \cQ_{\alpha}^d \otimes \left( \bigoplus_{\beta\succ\alpha,\,\beta\vdash_{d} m-1} \cP_\beta \right)\\
    \cH_S &= \bigoplus_{\beta\vdash_{d-1} m-1} \cQ_{\beta^{-}}^d \otimes \cP_{\beta^0}, \label{eq:S-decomposition}
\end{align}
and for a partition $\beta\vdash_{d-1} m-1$ we set $\beta^- = (\beta_1,\dots,\beta_{d-1},-1)$ and $\beta^0 = (\beta_1,\dots,\beta_{d-1},0)$. Slightly overloading notation, we also denote by $S$ and $M$ the orthogonal projectors onto $\cH_S$ and $\cH_M$, respectively.
We write $S_\beta$ for the projector onto the summand $\cQ_{\beta^-}^d \otimes \cP_{\beta^0}$ in $\cH_S$, so that we have \begin{align} 
    S = \sum_{\beta\vdash_{d-1} m-1}S_\beta.
\end{align}

The decomposition \eqref{eq:S-M-decomposition} splits up the representation space $(\mathbb{C}^d)^{\otimes m}$ into a `polynomial' part $\cH_M$ comprising $\UC_d$-irreps associated with non-negative weights, and a `non-polynomial' part $\cH_S$ comprising those with a negative component.
The following claim further specifies the structure of the operator $X_{\lambda,\sigma}$ with respect to this decomposition.

\begin{claim}[Sector decomposition of $X_{\lambda,\sigma}$] \label{claim:sector-decomposition}
    With the notation above,
\begin{equation}
\label{eq:general-sector-split-sketch}
 X_{\lambda,\sigma}
 \cong MX_{\lambda,\sigma}M
 \oplus\bigoplus_{\beta\prec\lambda}
 S_\beta X_{\lambda,\sigma}S_\beta.
\end{equation}
The trace norm of $X_{\lambda,\sigma}$ decomposes accordingly as
\begin{align}
 \|X_{\lambda,\sigma}\|_1
 =\|MX_{\lambda,\sigma}M\|_1
 +\sum_{\beta\prec\lambda}\|S_\beta X_{\lambda,\sigma}S_\beta\|_1.
\end{align}
\end{claim}

\begin{proof}[Proof of Claim~\ref{claim:sector-decomposition}]
    Denote the operator of interest as $X_{\lambda,\sigma} = \Omega_{\lambda,\sigma}^{\Gamma_m} - \frac{1}{d} \trc_m \Omega_{\lambda,\sigma}\otimes \Id_m \eqcolon \Omega^{\Gamma_m} - \frac{1}{d}D$.
    Both $\Omega^{\Gamma_m}$ and $D$ are $U^{\otimes m-1}\otimes \overline{U}$-invariant so that
    \begin{align}
        X_{\lambda,\sigma} = MX_{\lambda,\sigma}M + S X_{\lambda,\sigma} S,
    \end{align}
    and according to \eqref{eq:S-decomposition} we further have
    \begin{align}
        SX_{\lambda,\sigma}S \cong \bigoplus_{\beta\vdash_{d-1}m-1} \Id_{\cQ_{\beta^-}^d}\otimes X_{\beta^0},
    \end{align}
    where the $X_{\beta^0}$ are some operators supported on $\cP_{\beta^0}$.

    To prove the claim of the lemma, we need to show that $X_{\beta^0}$ vanishes unless $\beta\prec \lambda$.
    We do this individually for $\Omega^{\Gamma_m}$ and $D$.
    To this end, we observe that $\cP_{\beta^0}$ in \eqref{eq:S-decomposition} can be identified with the irrep $\cP_\beta$ of $\SC_{m-1}$.
    Thus, 
    \begin{align}
        \trc\left( \left(\Pi_{\beta}^{(m-1)}\otimes \Id_m\right)\Omega^{\Gamma_m} \right) &= \trc\left( \left(\Pi_{\beta}^{(m-1)}\otimes \Id_m\right)^{\Gamma_m}\Omega \right)\\
        &= \trc\left( \left(\Pi_{\beta}^{(m-1)}\otimes \Id_m\right)\Omega \right)\\
        &= \trc\left( \left(\Pi_{\beta}^{(m-1)}\otimes \Pi_{(1)}^{(1)}\right) \Pi_\lambda^{(m)}\Omega\Pi_\lambda^{(m)} \right)\\
        &\neq 0 \quad\text{iff}\quad \beta\prec\lambda,
    \end{align}
    where we used $\trc(X^{\Gamma_m}Y^{\Gamma_m}) = \trc(XY)$ in the first equality, $\Pi_{(1)}^{(1)}=\Id$ in the third equality, and the Pieri rule for the product of projectors $\left(\Pi_{\beta}^{(m-1)}\otimes \Pi_{(1)}^{(1)}\right) \Pi_\lambda^{(m)}$ in the fourth line.

    On the other hand, for the operator $D=\trc_m \Omega\otimes \Id_m$ we recall Lem.~\ref{lem:general-marginal}, which says that 
    \begin{align} 
        \trc_m \Omega = \sum_{\alpha\prec\lambda} \Pi_\alpha^{(m-1)} \trc_m \Omega\Pi_\alpha^{(m-1)}.
    \end{align}
    Then,
    \begin{align}
        \trc\left( \left(\Pi_{\beta}^{(m-1)}\otimes \Id_m\right) \left(\trc_m \Omega\otimes \Id_m\right) \right) &= d\cdot  \trc\left( \Pi_{\beta}^{(m-1)} \trc_m \Omega\right)\\
        &= d \cdot \sum_{\alpha\prec\lambda} \trc\left(\Pi_{\beta}^{(m-1)} \Pi_\alpha^{(m-1)} \trc_m \Omega\Pi_\alpha^{(m-1)}\right)\\
        &\neq 0  \quad\text{iff}\quad \beta\prec\lambda,
    \end{align}
    which proves the claim.
\end{proof}
With this decomposition justified, we are only left to bound the trace norm of the polynomial and non-polynomial sectors individually.\\

\noindent \textbf{Bounding the  Polynomial Sector.} Despite the polynomial sector potentially retaining coherences between different branches, unlike the symmetric Werner case, we can still obtain a strong bound on its contribution to the overall trace norm. To obtain this bound, we will make use of the following observation.
\begin{claim}\label{claim:partial-transpose-CP}
    Let $X_{AB} \succeq 0$ be an operator on $\HC_A \otimes \HC_B$. Then,
    \begin{align}
        - \mathrm{tr}_B\left[X_{AB}\right] \otimes \Id_B \preceq X_{AB}^{\Gamma_B} \preceq \mathrm{tr}_B\left[X_{AB}\right] \otimes \Id_B.
    \end{align}
    Equivalently, $\mathrm{tr}_B\left[X_{AB}\right] \otimes \Id_B \pm X_{AB}^{\Gamma_{B}}$ are both positive semidefinite.
\end{claim}
A simple special case of this result provides some intuition. Consider the case when our bipartite operator is the projector onto the maximally entangled state $X_{AB} = \Phi_{AB}$. In this case, we know that $\mathrm{tr}_B\left[X_{AB}\right] = \Id_A/d$ and $X_{AB}^{\Gamma_B} = \swap_{AB}/d$, and the lemma states that
\begin{align}
    -\frac{\Id_{AB}}{d} \preceq \frac{\swap_{AB}}{d} \preceq \frac{\Id_{AB}}{d},
\end{align}
which is equivalent to saying that the swap operator's eigenvalues lie in the interval $[-1,1]$, which is easy to check. The general statement for arbitrary bipartite semidefinite operators follows from an application of Choi's Theorem~\cite[Thm.~2.22]{watrous2018Theory}.

\begin{proof}[Proof of Claim \ref{claim:partial-transpose-CP}]
   Let $d=\dim\HC_B$, and let $\HC_{B'}$ be an isomorphic copy of $\HC_B$.
   Define linear maps $\EC_{\pm}\colon \BC(\HC_B) \rightarrow \BC(\HC_B)$ by $\EC_{\pm}(X) = \tr{X} \Id_B \pm X^T$. These maps were chosen to ensure
   \begin{align}
       (\operatorname{id}_A \otimes \EC_{\pm})(X_{AB}) &= \mathrm{tr}_B\left[X_{AB}\right] \otimes \Id_B \pm X_{AB}^{\Gamma_B},
   \end{align}
   where $\operatorname{id}_A$ denotes the identity map on $\BC(\HC_A)$.
   Recall that Choi's Theorem tells us a linear map is completely positive if and only if the map's Choi matrix is positive semidefinite (PSD). Thus, to show that the right-hand side of the above equation is PSD, it suffices to show that the Choi matrix of $\EC_{\pm}$ is PSD. Recall that the \textit{Choi matrix} for a linear map $T : \BC(\HC_B) \rightarrow \BC(\HC_B)$ is defined as
   \begin{align}
       J(T) = \sum_{i,j=1}^d \ketbra{i}{j}_{B'} \otimes T(\ketbra{i}{j}_B).
   \end{align}
   To apply this to $\EC_{\pm}$, we note that $
       \EC_{\pm}(\ketbra{i}{j}_B) = \delta_{ij} \Id_B \pm \ketbra{j}{i}_B.$ Substituting this into the definition of the Choi matrix we find
   \begin{align}
       J(\EC_{\pm}) &= \sum_{i,j=1}^d \ketbra{i}{j}_{B'} \otimes (\delta_{ij} \Id_B \pm \ketbra{j}{i}_B) = \Id_{B'B} \pm \swap_{B'B}.
   \end{align}
   Noting that $\Id_{B'B} + \swap_{B'B} = 2~\Pi_{\rm sym}^{d,2} \succeq 0$ and $\Id_{B'B} - \swap_{B'B} = 2~\Pi_{\rm asym}^{d,2} \succeq 0$, we have that both Choi matrices are PSD. By Choi's Theorem, both maps $\EC_{\pm}$ are completely positive. Since $X_{AB}\succeq0$, it follows that
   \begin{align}
       (\operatorname{id}_A \otimes \EC_{\pm})(X_{AB})
       = \mathrm{tr}_B[X_{AB}] \otimes \Id_B
       \pm X_{AB}^{\Gamma_B} \succeq 0,
   \end{align}
   as required.
\end{proof}
Now, to apply this to bounding the polynomial sector, recall that $\Omega \succeq 0$ and $D \coloneq \operatorname{tr}_m[\Omega] \otimes \Id_m$. Considering the bipartition between the first $m-1$ sites and the last, the claim implies
\begin{align}
    D \pm \Omega^{\Gamma_m} \succeq 0 \implies -D \preceq \Omega^{\Gamma_m} \preceq D.
\end{align}
Projecting onto the polynomial irreps maintains the chain of inequalities yielding
\begin{align}
    -MDM \preceq M \Omega^{\Gamma_m} M \preceq MDM.
\end{align}
Moreover, because $M \Omega^{\Gamma_m} M$ is Hermitian and $MDM \succeq 0$, we have that\footnote{To see \eqref{eq:trace-norm-inequality}, let $X,Y$ be Hermitian with $-Y\preceq X \preceq Y$.
This assumption implies that $Y\succeq 0$. Let $P$ be the projector onto the support of the positive semidefinite part of $X$, and set $P^\perp = \Id-P$. Then we have $|X|= X_+ + X_- = PXP + P^\perp(-X)P^\perp$, and thus $$\|X\|_1 = \trc|X| = \trc(PX) + \trc(P^\perp(-X)) \leq \trc(PY) + \trc(P^\perp Y) = \trc((P+P^\perp)Y)=\trc Y = \|Y\|_1,$$ where we used the assumptions $X\preceq Y$ and $-X\preceq Y$, and $Y\succeq 0$.}  
\begin{align} 
    \|M\Omega^{\Gamma_m}M\|_1 \leq \tr{MDM} = \|MDM\|_1.
    \label{eq:trace-norm-inequality}
\end{align}
With this in mind, we are in place to bound the polynomial contribution to the trace norm as 
\begin{align}
    \|M X_{\lambda,\sigma} M\|_1 &= \left\| M \left(\Omega^{\Gamma_m} - \frac{1}{d} D \right) M\right\|_1,\\
    &\leq \|M \Omega^{\Gamma_m} M \|_1 + \frac{1}{d}\|MDM \|_1, \quad &\text{(by triangle inequality)}\\
    &\leq \left(1+ \frac{1}{d}\right) \|MDM\|_1, &\text{(by eq.~\eqref{eq:trace-norm-inequality})}\\
    &= \left(1+ \frac{1}{d}\right) \tr{MDM}.
\end{align}
To simplify this trace, note that Eq.~\eqref{eq:marginal-Schur-block} yields
\begin{align}
    \operatorname{tr}_m[\Omega_{\lambda,\sigma}] &= \bigoplus_{\beta \prec \lambda} \frac{1}{q_{\beta}}\Id_{\QC_\beta^d} \otimes \sigma_{\beta},
    \intertext{and thus}
    D &= \operatorname{tr}_m[\Omega_{\lambda,\sigma}]\otimes \Id_m = \bigoplus_{\beta \prec \lambda} \frac{1}{q_{\beta}}\Id_{\QC_\beta^d \otimes \mathbb{C}^d} \otimes \sigma_{\beta},
\end{align}
where we have grouped the appended identity on the $m$-th copy of $\mathbb{C}^d$ with the identity on the Weyl module. Now, under the dual-Pieri decomposition, $M$ retains only the polynomial summands $\QC_{\alpha}^d$ with $\alpha \prec \beta$ in each $\beta$-branch. Thus, in the mixed Schur basis used in \eqref{eq:mixed-schur-weyl-decomposition}, we have
\begin{align}
    (MDM)\rvert_{\HC_M} = \bigoplus_{\beta \prec \lambda} \left[\left(\bigoplus_{\alpha \prec \beta} \Id_{\QC_{\alpha}^d}\right) \otimes \frac{\sigma_{\beta}}{q_{\beta}}\right].
\end{align}
Taking the trace of this operator, then, yields
\begin{align}
    \tr{MDM} &= \sum_{\beta \prec \lambda} \left(\sum_{\alpha \prec \beta} q_{\alpha}\right) \frac{\tr{\sigma_{\beta}}}{q_{\beta}},\\
    &= \sum_{\beta \prec \lambda} t_{\beta}\sum_{\alpha \prec \beta} \frac{q_{\alpha}}{q_{\beta}}, \quad &\text{(with $t_{\beta} \coloneq \tr{\sigma_{\beta}}$)},\\
    &\leq \sum_{\beta \prec \lambda} t_{\beta} \frac{m-1}{d-m+2}, \quad & \text{(by Lemma~\ref{lem:one-box-dimension-bounds})},\\
    &= \frac{m-1}{d-m+2},
\end{align}
where the last equality uses the fact that $\sum_{\beta \prec \lambda} t_{\beta} =1$. Plugging this into the trace norm we set out to bound, we obtain
\begin{align}
    \|MX_{\lambda, \sigma} M\|_1 \leq \left(1+\frac{1}{d}\right) \frac{m-1}{d-m+2}.
\end{align}

\noindent \textbf{Bounding the  Non-polynomial Sector.} 
To bound the non-polynomial contribution to the trace norm, we begin by expanding the 
\begin{align} 
    \Omega \coloneq \Omega_{\lambda,\sigma}=\frac{\Id_{\QC_\lambda^d}}{q_\lambda}
 \otimes\sigma_\lambda^{(m)}
 \end{align}
 from \eqref{eq:werner-state-marginal} in terms of permutation operators according to Prop~\ref{prop:fourier-decomp-Werner-states}. Fix $\lambda \vdash_d m$ and write $\sigma\coloneqq \sigma_\lambda^{(m)}$. We may decompose each Schur block in our $m$-qudit Werner state as
\begin{align}
    \Omega = \frac{f_{\lambda}}{ q_{\lambda} m!} \sum_{\pi \in \SC_m} \tr{\sigma R_{\lambda}(\pi^{-1})} P_d (\pi).
\end{align}
As in the symmetric Werner case, we split the sum into two cases: $\pi(m)=m$ and $\pi(m)\neq m$. The latter are entirely supported on $\HC_M$ and thus vanish once we project onto the non-polynomial summands. Fix $\beta \prec \lambda$ and let $S_{\beta}$ denote the projector onto $\QC_{\beta^-}^d \otimes \PC_{\beta}$, so we may write 
\begin{align}
    S = \sum_{\beta \vdash_d m-1} S_{\beta}, \quad \text{where} \quad S_{\beta}S_{\beta'} = \delta_{\beta \beta'} S_{\beta}.
\end{align}
By virtue of the fact that the non-polynomial sector still decomposes fully as a direct sum, it suffices to bound the contribution from each $\beta$. To this end, write
\begin{align}
    S_{\beta} \Omega^{\Gamma_m}S_{\beta} = \frac{f_{\lambda}}{ q_{\lambda} m!} \sum_{\tau \in \SC_{m-1}} \tr{\sigma R_{\lambda}(\tau^{-1})} S_{\beta} P_d (\tau)S_{\beta}
\end{align}
Because the remaining permutations fix the $m$-th site, we may use the Specht branching rules to re-express our coefficients as 
\begin{align}
    \tr{\sigma R_{\lambda}(\tau^{-1})}= \sum_{\gamma \prec \lambda} \tr{\sigma_{\gamma} R_{\gamma}(\tau^{-1})}, \quad \text{where} \quad \sigma_{\gamma} \coloneq J_{\lambda \rightarrow \gamma} \sigma J_{\lambda \rightarrow \gamma}^{\dagger}.
\end{align}
Substituting this into our decomposition yields
\begin{align}
    S_{\beta} \Omega^{\Gamma_m}S_{\beta} = \frac{f_{\lambda}}{ q_{\lambda} m!} \sum_{\gamma \prec \lambda} \sum_{\tau \in \SC_{m-1}} \tr{\sigma_{\gamma} R_{\gamma}(\tau^{-1})} S_{\beta} P_d (\tau)S_{\beta}.
\end{align}
To simplify further, note that, restricted to the range of $S_{\beta}$, the permutation representation takes the form $S_{\beta}P_d(\pi) S_{\beta} = \Id_{\QC_{\beta^-}^d} \otimes R_{\beta}(\tau)$ in the Schur basis. Substituting this back into our decomposition and factoring out the identity on the Weyl module, we obtain
\begin{align}
    S_{\beta} \Omega^{\Gamma_m}S_{\beta} = \frac{f_{\lambda}}{ q_{\lambda} m!} \Id_{\QC_{\beta^-}}^d \otimes \sum_{\gamma \prec \lambda} \sum_{\tau \in \SC_{m-1}} \tr{\sigma_{\gamma} R_{\gamma}(\tau^{-1})} R_{\beta}(\tau).
\end{align}
Due to Schur orthogonality (see Appendix)
\begin{align}
    \sum_{\tau \in \SC_{m-1}} \tr{\sigma_{\gamma} R_{\gamma}(\tau^{-1})} R_{\beta}(\tau) = 
        \frac{(m-1)!}{f_{\beta}} \sigma_{\beta} \delta_{\gamma \beta},
\end{align}
the inner sum collapses to $\gamma = \beta$, yielding
\begin{align}
    S_{\beta} \Omega^{\Gamma_m}S_{\beta} = \frac{f_{\lambda}}{ q_{\lambda} m!} \frac{(m-1)!}{f_{\beta}}\Id_{\QC_{\beta^-}}^d \otimes \sigma_{\beta} = \frac{f_{\lambda}}{m q_{\lambda} f_{\beta}} \Id_{\QC_{\beta^-}}^d \otimes \sigma_{\beta}.
\end{align}
We now must determine the form of $S_{\beta}D S_{\beta}$. From Prop.~\ref{prop:Werner-marginals}, recalling the convention that $\sigma_{\gamma}=0$ if $\gamma \nprec \lambda$, we have
\begin{align}
    D = \left(\bigoplus_{\gamma \vdash_d m-1} \frac{1}{q_{\gamma}} \Id_{\QC_{\gamma}^d} \otimes \sigma_{\gamma} \right) \otimes \Id_{\mathbb{C}^d} \cong \bigoplus_{\gamma \vdash_d m-1} \frac{1}{q_{\gamma}} \Id_{\QC_{\gamma}^d \otimes \mathbb{C}^d} \otimes \sigma_{\gamma},
\end{align}
where we have regrouped factors so we can apply dual Pieri to the Weyl module. Doing so, we may write
\begin{align}
    D = \bigoplus_{\gamma \vdash_d m-1} \frac{1}{q_{\gamma}} \left(\left[\bigoplus_{\alpha \prec \gamma} \Id_{\QC_{\alpha}^d}\right] \oplus \Id_{\QC_{\gamma^-}^d}\right)\otimes \sigma_{\gamma}.
\end{align}
In the mixed Schur basis, $S_{\beta}$ is a block-diagonal matrix whose entries are zero everywhere except on $\QC_{\beta^-}\otimes \PC_{\beta}$, where it is the identity. Thus, projecting via $S_{\beta}$ yields
\begin{align}
    S_{\beta} D S_{\beta} = \frac{1}{q_{\beta}} \Id_{\QC_{\beta^-}^d} \otimes \sigma_{\beta}.
\end{align}
Finally, we can obtain a closed-form expression for the non-polynomial part of $X_{\lambda,\sigma}$ by subtracting these two contributions 
\begin{align}
    S_{\beta} X_{\lambda, \sigma} S_{\beta} &= S_{\beta}\left( \Omega^{\Gamma_m} - \frac{1}{d} D\right)S_{\beta} = \left(\frac{f_{\lambda}}{m q_{\lambda} f_{\beta}} - \frac{1}{d q_{\beta}}\right) \Id_{\QC_{\beta^-}^d} \otimes \sigma_{\beta}.
\end{align}
This completes the real work required to bound the non-polynomial contributions. All that remains is to simplify the coefficient and bound use a content bound as in the symmetric case. Using the one-box identity $m q_{\lambda} f_{\beta} = (d+c(\lambda \setminus \beta)) q_{\beta} f_{\lambda}$, we may write
\begin{align}
    S_{\beta} X_{\lambda, \sigma} S_{\beta} &= \left(\frac{1}{(d+c(\lambda \setminus \beta)) q_{\beta}} - \frac{1}{d q_{\beta}}\right) \Id_{\QC_{\beta^-}^d} \otimes \sigma_{\beta} = \left(\frac{-c(\lambda \setminus \beta)}{d(d+c(\lambda \setminus \beta)) q_{\beta}} \right) \Id_{\QC_{\beta^-}^d} \otimes \sigma_{\beta}.
\end{align}
Taking the trace norm, we obtain
\begin{align}
    \|S_{\beta} X_{\lambda, \sigma} S_{\beta} \|_1 &= \frac{\abs{c(\lambda \setminus \beta)}}{d(d+c(\lambda \setminus \beta)) q_{\beta}} \|\Id_{\QC_{\beta^-}^d} \otimes \sigma_{\beta} \|_1 = \frac{\abs{c(\lambda \setminus \beta)}}{d(d+c(\lambda \setminus \beta)) q_{\beta}} s_{\beta}t_{\beta}.
\end{align}
Putting everything together, we find
\begin{align}
    \|S X_{\lambda, \sigma} S \|_1 &= \sum_{\beta \prec \lambda} \|S_{\beta} X_{\lambda, \sigma} S_{\beta} \|_1,\\
    &= \sum_{\beta \prec \lambda} \frac{\abs{c(\lambda \setminus \beta)}}{d(d+c(\lambda \setminus \beta)) q_{\beta}} s_{\beta}t_{\beta},\\
    &\leq \sum_{\beta \prec \lambda} \frac{\abs{c(\lambda \setminus \beta)}}{(d+c(\lambda \setminus \beta))} t_{\beta}, \quad & s_{\beta} \leq d q_{\beta}\\
    &\leq \sum_{\beta \prec \lambda} \frac{m-1}{(d-m+1) } t_{\beta}, \quad & \abs{c(\lambda \setminus \beta)} \leq m-1\\
    &= \frac{m-1}{(d-m+1) },
\end{align}
where we have used $\sum_{\beta \prec \lambda} t_\beta = 1$ in the last line. Putting all of this together, the original trace norm we sought to bound is given as
\begin{align}
    \|X_{\lambda,\sigma}\|_1 \leq \left(1+\frac{1}{d}\right) \frac{m-1}{d-m+2} + \frac{m-1}{d-m+1}.
\end{align}
We can then insert this bound into our upper bound on the bias
\begin{align}
     \abs{\tr{M(\rho_0-\rho_1)}} 
    &\leq \sum_{m=2}^n \sum_{\lambda \vdash_d m} p_{\lambda}\cdot \left[ \left(1+\frac{1}{d}\right) \frac{m-1}{d-m+2} + \frac{m-1}{d-m+1} \right],\\
    &=  \sum_{m=2}^n  \left[ \left(1+\frac{1}{d}\right) \frac{m-1}{d-(m-2)} + \frac{m-1}{d-(m-1)} \right],\\
    &\leq  \sum_{m=2}^n  \left[ 3 \cdot  \frac{(m-1)}{d} \right], \quad &d \geq m \geq 2 \\
    &= \frac{3}{2}\cdot \frac{n(n-1)}{d},
\end{align}
and we are done.

\end{proof}

\subsection{Worst-case Optimality of Theorem~\ref{thm:multipartite-werner-indistinguishability}}
We now show that there exists a pair of Werner states, along with a simple non-adaptive local (and thus PPT-BOTH) measurement  which achieves a bias $\Omega(n^2/d)$, implying Theorem~\ref{thm:multipartite-werner-indistinguishability} is worst-case optimal in $n,d$. We emphasize that our matching lower bound is achieved by a nonadaptive local measurement followed by classical post-processing, which falls strictly within all allowable LOCC operations. Thus, the PPT-BOTH relaxation preserves the optimal worst-case dependence on the number of parties and local dimension: even restricting to LOCC cannot improve the $O(n^2/d)$ uniform bound beyond constant factors. This is surprising given the fact that the PPT relaxation can lead to drastically looser results, compared to LOCC, in various quantum information processing tasks~\cite{beigi2010Approximating,bennett1999Quantum,cheng2023Discrimination,liu2023Complexity}. 

\begin{proposition}[Non-adaptive, Local Distinguisher] \label{prop:haar-vs-max-mixed-distinguisher}
    There exists a non-adaptive local measurement $\{M,\Id-M\}$ that distinguishes the maximally mixed state on the symmetric subspace of $(\mathbb{C}^d)^{\otimes n}$ from the maximally mixed state on the full space with bias
    \begin{align}
        \tr{M(\sigma^{(n)} - \tau_d^{\otimes n}) } = \frac{(d)_n}{d^n} - \frac{(d)_n}{d(d+1)\dotsm (d+n-1)},
    \end{align}
    where $(d)_n \coloneq d(d-1)\dotsm (d-n+1)$ denotes the falling factorial. When $d \geq 2n(n-1)$, 
    \begin{align}
        \tr{M(\sigma^{(n)} - \tau_d^{\otimes n}) } \geq \Omega\left(\frac{n^2}{d}\right).
    \end{align}
\end{proposition}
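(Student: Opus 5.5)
The plan is to measure every qudit in the computational basis and accept ``$\rho_0$'' exactly when all $n$ outcomes are distinct. Concretely, we set
\begin{align}
    M \coloneq \sum_{\substack{\boldsymbol{i}\in[d]^n\\ i_k\neq i_l\ \forall k\neq l}} \ketbra{\boldsymbol{i}}{\boldsymbol{i}},
\end{align}
which is diagonal in a product basis. It is therefore implemented by non-adaptive local measurements with classical post-processing, and by Proposition~\ref{prop:partial-transpose-properties} it satisfies $M^{\Gamma_S}=M$ and $(\Id-M)^{\Gamma_S}=\Id-M$, so it is PPT-BOTH. Note that we should guess the symmetric state on the ``distinct'' outcome. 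A Haar-random pure state is much more likely to produce collisions: the probability of a repeated outcome is a birthday-type quantity, and it is larger for $\ketbra{\phi}{\phi}^{\otimes n}$ than for uniform sampling. Since the sign of the bias in the statement is positive, we will simply compute $\tr{M\tau_d^{\otimes n}}-\tr{M\sigma^{(n)}}$ and take $M$ or $\Id-M$ as appropriate to match the displayed orientation.

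The first step is $\tr{M\tau_d^{\otimes n}}=(d)_n/d^n$, which is just the number of injective strings divided by $d^n$. The second step uses Lemma~\ref{lem:haar-twirl-symmetric-subspace}: $\sigma^{(n)}=\Pi_n/d[n]$. For a string $\boldsymbol{i}$ with distinct entries, $\bra{\boldsymbol{i}}\Pi_n\ket{\boldsymbol{i}}=\frac{1}{n!}\sum_\pi\braket{\boldsymbol{i}|P_d(\pi)|\boldsymbol{i}}=1/n!$, because only $\pi=e$ fixes an injective string. Summing over the $(d)_n$ injective strings gives
\begin{align}
\tr{M\sigma^{(n)}}=\frac{(d)_n}{n!\,d[n]}=\frac{(d)_n}{d(d+1)\cdots(d+n-1)}.
\end{align}
This yields the exact bias formula, the difference of these two terms.

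The main work is the asymptotic lower bound. We factor the bias as
\begin{align}
\frac{(d)_n}{d^n}\Bigl(1-\prod_{k=0}^{n-1}\frac{d}{d+k}\Bigr).
\end{align}
For the first factor, we use $\prod_k(1-k/d)\geq 1-\binom n2/d\geq 3/4$ when $d\geq 2n(n-1)$. For the second factor, we use $\prod_k (1+k/d)^{-1}\leq \exp\!\bigl(-\sum_k \log(1+k/d)\bigr)$. Since $\sum_k\log(1+k/d)\geq \sum_k (k/d-k^2/(2d^2))\geq \binom n2/d\cdot(1-n/(2d))$, we get $1-\prod\geq 1-e^{-x}$ with $x\asymp n(n-1)/(2d)\leq 1/4$. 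Then $1-e^{-x}\geq x/2$ gives $\Omega(n^2/d)$. The only delicate point is tracking these elementary constants under $d\geq 2n(n-1)$, and they are comfortably satisfied.
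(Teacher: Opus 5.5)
Your proposal is correct and follows the paper's proof essentially step for step. It uses the same computational-basis collision tester, the same counting of injective strings against $\Pi_n/d[n]$, the same factorization of the bias, and the same Weierstrass bound giving $3/4$ for the first factor; the only difference is cosmetic, in the second factor, where you use $\log(1+x)\geq x-x^2/2$ and the paper uses $k/(d+k)\geq 2k/(3d)$ together with $1-x\leq e^{-x}$.

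One small slip: the symmetric state should be guessed on the collision outcome, not the distinct one. The effect achieving the displayed bias is therefore $\Id-M$ (the paper's $M_{\rm col}$), which your final orientation choice already handles.
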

\begin{proof}
    Fix an orthonormal basis $\{\ket{i}\}_{i=1}^d$ of $\mathbb{C}^d$. We will output ``pure" if at least 2 of the $n$ outcomes coincide. To analyze this so-called \textit{collision tester}, let us first define a set containing pairwise distinct entries
    \begin{align}
        \IC_{n,d} \coloneq \{(i_1,\dots,i_n) \in [d]^n : i_a \neq i_b \text{ whenever } a\neq b\}.
    \end{align}
    Note that the cardinality of this set is the falling factorial $\abs{\IC_{n,d}} = (d)_n = d(d-1) \dotsm (d-n+1)$. This notation allows us to define a projector corresponding to no collisions as
    \begin{align}
        M_{\rm nc} \coloneq \sum_{\boldsymbol{i} \in \IC_{n,d}} \ketbra{\boldsymbol{i}}{\boldsymbol{i}}.
    \end{align}
    There are either no collisions or at least one collision, thus we may define the projector corresponding to collisions as $ M_{\rm col} \coloneq \Id - M_{\rm nc}.$ Both of these operators are diagonal, thus invariant under partial transpositions. Thus, the measurement $\{M_{\rm col}, M_{\rm nc}\}$ is a PPT-BOTH measurement. It is really a member of a strict subset of all PPT-BOTH measurements, which we might call non-adaptive single-copy measurements.

    It is easiest to think through the probability of no collisions and then translate the results at the end. The probability of no collisions given the maximally mixed state is computed as
    \begin{align}
        \tr{M_{\rm nc} \tau_d^{\otimes n}} &= \frac{1}{d^n}\sum_{\boldsymbol{i} \in \IC_{n,d}} \tr{\ketbra{\boldsymbol{i}}{\boldsymbol{i}} \Id_d^{\otimes n}} = \frac{\abs{\IC_{d,n}}}{d^n} = \frac{(d)_n}{d^n}.
    \end{align}
    If instead we are actually given the Haar random pure state, the probability of no collisions can be computed as follows. Recall that 
    \begin{align}
        \sigma^{(n)} = \frac{\Pi_n}{D_n}= \frac{1}{D_n n!} \sum_{\pi \in \SC_n} P_d(\pi).
    \end{align}
    But, by definition, $\IC_{n,d}$ contains pairwise distinct strings, thus only $\pi = e$ yields a non-zero contribution. We obtain
    \begin{align}
        \tr{M_{\rm nc} \sigma^{(n)}} &= \frac{1}{D_n n!} \sum_{\boldsymbol{i} \in \IC_{n,d}} \sum_{\pi \in \SC_n}\tr{P_d (\pi) \ketbra{\boldsymbol{i}}{\boldsymbol{i}}} = \frac{\abs{\IC_{n,d}}}{D_n n!}= \frac{(d)_n}{d(d+1)\dotsm (d+n-1)}.
    \end{align}
    Thus, the bias for our collision tester is given as 
    \begin{align}
        \tr{M_{\rm col} (\sigma^{(n)} - \tau_d^{\otimes n})} =  \tr{M_{\rm nc} (\tau_d^{\otimes n} - \sigma^{(n)})} = \frac{(d)_n}{d^n} - \frac{(d)_n}{d(d+1)\dotsm (d+n-1)}.
    \end{align}
    Let us call the bias $b$. We can factor out the common term to write \begin{align}
        b = \frac{(d)_n}{d^n}\left(1- \frac{d^n}{d(d+1)\cdots(d+n-1)}\right) =  \left[\prod_{k=0}^{n-1}\left(1-\frac{k}{d}\right)\right]\left[1-\prod_{k=0}^{n-1}\left(1-\frac{k}{d+k}\right)\right]
    \end{align} 
    For simplicity, Weierstrass's product inequality---$\prod_{k=0}^{n-1} (1-x_k) \geq 1 - \sum_{k=0}^{n-1}x_k$ for any $x_0,\dots,x_{n-1} \in [0,1]$---allows us to lower bound the first term with $3/4$. For the second term, the classical inequality $1-x \leq e^{-x}$ bounds the product in the second term \begin{align}
        1-\prod_{k=0}^{n-1}\left(1-\frac{k}{d+k}\right) \geq 1- \exp\left(-\sum_{k=0}^{n-1} \frac{k}{d+k}\right) \geq 1-\exp\left(-\frac{n(n-1)}{3d}\right),
    \end{align}
    where in the last step we use $\frac{k}{d+k} \geq \frac{k}{\frac{3}{2}d} = \frac{2k}{3d}$ since $k \leq n-1 \leq n(n-1) \leq \frac{d}{2}$. Finally, we can use $1 -e^{-y} \geq y/2$ which is valid for $y = \frac{n(n-1)}{3d} \leq \frac{1}{6}$. This completes the bound on the bias \begin{align}
        b \geq \frac{3}{4}\left(1-\exp\left(-\frac{n(n-1)}{3d}\right)\right) \geq  \Omega(n^2/d).
    \end{align}
\end{proof}
With the tightness of Theorem~\ref{thm:multipartite-werner-indistinguishability} established, we turn to the implications this result has for multipartite data hiding.

\subsection{Data Hiding with Multipartite Werner States}
Because our bound holds uniformly for all Werner states, it allows many data-hiding pairs to be constructed simultaneously. Recall that every Werner state has Schur-basis form
\begin{align}
 \rho = \bigoplus_{\lambda\vdash_d n} p_\lambda\frac{\Id_{\QC_\lambda^d}}{q_\lambda} \otimes\sigma_\lambda,
\end{align}
where each $\sigma_\lambda$ is an arbitrary density operator on $\PC_\lambda$. To construct mutually orthogonal hiding states, we choose an orthonormal basis $\{\ket{a}\}_{a=1}^{f_\lambda}$ of each Specht module and take $\Omega_{\lambda,\sigma}$ with $\sigma=\ketbra{a}{a}$. These states have mutually orthogonal supports, both within each $\lambda$-block and between different blocks. A single global measurement can therefore distinguish all
\begin{align}
 N=\sum_{\lambda\vdash_d n}f_\lambda
\end{align}
states perfectly, giving $\binom{N}{2}$ pairs within the same family, which is maximal because orthogonal Werner states must occupy mutually orthogonal supports in the Specht spaces, whose total dimension is $N$. For $d\ge n$, every partition of $n$ occurs, and $\sum_{\lambda\vdash_d n}f_\lambda^2=n!$. Consequently,
\begin{align}
 \sqrt{n!}\le N\le n!
 \implies N=2^{\Theta(n\log n)}.
\end{align}
This straightforward counting argument is unchanged from the multipartite data hiding section in Harrow's work on approximate orthogonality~\cite{harrow2023Approximate}; our stronger bound improves the security guarantee for every pair in this family. To make this precise, fix $0<\epsilon<1$. Theorem~\ref{thm:multipartite-werner-indistinguishability} guarantees PPT-BOTH distinguishing bias at most $\epsilon$ whenever
\begin{align}
 n(n-1)\le \frac{2\epsilon d}{3}.
\end{align}
Thus, we may take $n=\Theta(\sqrt{\epsilon d})$. By comparison, Harrow's bound $6n^2/\sqrt d$ guarantees the same security when
\begin{align}
 n^2\le \frac{\epsilon\sqrt d}{6},
\end{align}
allowing $n=\Theta(\sqrt{\epsilon}\,d^{1/4})$. At fixed $\epsilon$, our result therefore increases the size of the certified family from $2^{\Theta(d^{1/4}\log d)}$ to $2^{\Theta(\sqrt d\log d)}$ perfectly distinguishable messages, with every pair having PPT-BOTH bias at most $\epsilon$. Proposition~\ref{prop:haar-vs-max-mixed-distinguisher} implies worst-case optimality of Theorem~\ref{thm:multipartite-werner-indistinguishability} and thus precludes any improvement on this uniform security guarantee; however, particular pairs of Werner states may remain secure well beyond this regime.

Indeed, particular Werner-state pairs can hide a classical bit substantially better. For even $n=2r$ and $d\ge2$, Lancien and Winter~\cite[Thm.~9]{lancien2013Distinguishing} consider the permutation
\begin{align}
 \pi=(1,r+1)(2,r+2)\cdots(r,2r).
\end{align}
The operator $P_d(\pi)$ swaps the first and last $r$ registers. Its normalized positive and negative eigenspace projectors,
\begin{align}
 \rho_0=\frac{\Id+P_d(\pi)}{d^{2r}+d^r},
 \qquad
 \rho_1=\frac{\Id-P_d(\pi)}{d^{2r}-d^r},
\end{align}
are orthogonal Werner states because $P_d(\pi)$ commutes with $U^{\otimes n}$ for every $U\in\UC_d$. To see why they are difficult to distinguish locally, group the registers into two subsystems of dimension $D=d^r$.Writing $A=\{1,\ldots,r\}$ and $B=\{r+1,\ldots,2r\}$, partial transpose on $A$ sends $P_d(\pi)$ to $\psi^+_{AB}$, the unnormalized maximally entangled operator on the two $D$-dimensional subsystems. Thus, taking the partial transpose of the first $r$ registers yields
\begin{align}
 (\rho_0-\rho_1)^{\Gamma_A}
 =
 \frac{2}{D^2-1}
 \left(\psi^+_{AB}-\frac{\Id}{D}\right).
\end{align}
Consequently, any PPT-BOTH measurement $\{M,\Id-M\}$ satisfies
\begin{align}
 \abs{\tr{M(\rho_0-\rho_1)}}
 &\le \frac12
 \left\|(\rho_0-\rho_1)^{\Gamma_S}\right\|_1
 =\frac{2}{d^r}
 =O(d^{-n/2}).
\end{align}
This construction illustrates the substantial variation in hiding strength within the Werner-state family: our bound determines the optimal uniform security scaling, while additional structure can make particular pairs far harder to distinguish.

\section{Discussion and Future Directions}
In this work, we have established the optimal dependence of the uniform security guarantee for multipartite Werner states on the number of parties and local dimension, resolving a longstanding question about the Eggeling–Werner data-hiding scheme. Our $O(n^2/d)$ upper bound on the PPT-BOTH distinguishing bias is matched by an explicit pair and a simple nonadaptive local measurement followed by classical post-processing. Thus, despite allowing a substantially broader class of measurements than LOCC, the PPT-BOTH relaxation preserves the optimal worst-case scaling. A direct analysis of LOCC could improve the constants, but cannot improve the dependence on $n$ and $d$ in a bound valid for every pair of Werner states. At fixed security, our result extends the certified hiding regime from $n=O(d^{1/4})$ to $n=O(\sqrt d)$.

The proof also establishes a connection between restricted-measurement discrimination and representation-theoretic methods developed for port-based teleportation. A telescoping argument reduces the distinguishing bias to trace norms of partially transposed operators, whose structure is then determined by mixed Schur–Weyl duality. For symmetric Werner states, this structure permits an exact spectral calculation; for arbitrary Werner states, an operator-order inequality controls the coherences that remain between equivalent irreducible representations. Interpreting the tensor factors as copy registers translates the same indistinguishability bound into sample complexity lower bounds for testing unitarily invariant properties, recovering the purity-testing lower bound and strengthening the known low-rank-testing bound.

We hope that the combination of telescoping reductions, partial transposition, and mixed Schur–Weyl duality provides a useful route to analyzing other restricted-measurement protocols. The connection to port-based teleportation suggests that tools developed for one quantum information task may have considerable value in data hiding, property testing, and the many other settings where measurement restrictions determine what can be learned about a quantum system.

\medskip
\noindent \textit{Acknowledgments.}
OpenAI's GPT-6 Astra was used in the development of the paper's main results. In addition to helping check results, it suggested the use of the telescoping sum as well as the use of Claim~\ref{claim:partial-transpose-CP}, two crucial ingredients in our proofs. The authors take full responsibility for the correctness, exposition, and attribution in the final manuscript. The authors thank Louis Schatzki, Luke Coffman, and Graeme Smith for input on the results. JB was supported by a National Science Foundation Mathematical Sciences Postdoctoral Research Fellowship under Award No.~2402287 as well as an IQUIST Postdoctoral Fellowship. FL was supported by National Science Foundation Grant No.~2442410.

\bibliography{main.bib}

\appendix

\section{Additional Proof Details}
\subsection{Proofs of Technical Lemmas}
\label{app:technical-lemma-proofs}
\begin{proposition}[Fourier Expansion of Werner States, Prop.~\ref{prop:fourier-decomp-Werner-states} Resetated] 
    Let $\rho$ be a multipartite Werner state on $(\mathbb{C}^d)^{\otimes n}$. Then, $\rho$ lies in the linear span of permutation operators and admits the expansion
    \begin{align}
        \rho &= \frac{1}{n!} \sum_{\pi \in \SC_n} \left(\sum_{\lambda \vdash_d n} \frac{p_{\lambda} f_{\lambda}}{q_{\lambda}} \tr{\sigma_{\lambda} R_{\lambda}(\pi^{-1})}\right)P_d(\pi).
    \end{align}
    If $\rho$ is a symmetric multipartite Werner state, this expression simplifies to 
    \begin{align}
        \rho &= \frac{1}{n!} \sum_{\pi \in \SC_n} \left(\sum_{\lambda \vdash_d n}\frac{p_{\lambda}}{q_{\lambda}} \chi_{\lambda}(\pi^{-1})\right) P_d(\pi),
    \end{align}
    where $\chi_{\lambda}(\pi) = \tr{R_{\lambda}(\pi)}$ is the character value of $\pi$.
\end{proposition}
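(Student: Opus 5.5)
We prove the expansion by inverting the Schur--Weyl decomposition using orthogonality of matrix coefficients (the Fourier transform on $\SC_n$). By Definition~\ref{def:Werner-states}, $\rho \cong \bigoplus_{\lambda\vdash_d n} p_\lambda \frac{\Id_{\QC_\lambda^d}}{q_\lambda}\otimes \sigma_\lambda$. Hence it suffices to show that each block operator $\Id_{\QC_\lambda^d}\otimes A$, with $A\in\LC(\PC_\lambda)$ and all other blocks zero, can be written as a combination of the $P_d(\pi)$. The candidate is
\begin{align}
    \frac{f_\lambda}{n!}\sum_{\pi\in\SC_n} \tr{A\,R_\lambda(\pi^{-1})}\, P_d(\pi).
\end{align}
Once this identity holds, we apply it with $A = p_\lambda\sigma_\lambda/q_\lambda$ and sum over $\lambda$. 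This gives exactly the claimed coefficient $\sum_\lambda \frac{p_\lambda f_\lambda}{q_\lambda}\tr{\sigma_\lambda R_\lambda(\pi^{-1})}$. In particular, $\rho$ lies in the span of permutation operators.

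To verify the block identity, we substitute $P_d(\pi)\cong\bigoplus_{\mu\vdash_d n}\Id_{\QC_\mu^d}\otimes R_\mu(\pi)$. The $\mu$-block of the candidate is then $\Id_{\QC_\mu^d}\otimes \frac{f_\lambda}{n!}\sum_\pi \tr{A R_\lambda(\pi^{-1})}R_\mu(\pi)$. We expand $\tr{AR_\lambda(\pi^{-1})}=\sum_{j,k}A_{kj}R_\lambda(\pi^{-1})_{jk}$ in an orthonormal basis of $\PC_\lambda$. We choose $R_\lambda$ unitary, so that $R_\lambda(\pi^{-1})_{jk}=\overline{R_\lambda(\pi)_{kj}}$. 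The Schur orthogonality relations then give
\begin{align}
    \frac{1}{n!}\sum_{\pi}\overline{R_\lambda(\pi)_{kj}}\,R_\mu(\pi)_{ab}=\frac{\delta_{\lambda\mu}\delta_{ak}\delta_{bj}}{f_\lambda}.
\end{align}
Hence the $\mu$-block vanishes for $\mu\neq\lambda$, and the $\lambda$-block equals $\Id_{\QC_\lambda^d}\otimes\sum_{k,j}A_{kj}\ketbra{k}{j}=\Id_{\QC_\lambda^d}\otimes A$. This proves the first expansion.

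In the symmetric case, Definition~\ref{def:Werner-states} gives $\sigma_\lambda=\Id_{\PC_\lambda}/f_\lambda$. Therefore $f_\lambda\tr{\sigma_\lambda R_\lambda(\pi^{-1})}=\chi_\lambda(\pi^{-1})$, and the coefficient reduces to $\sum_\lambda \frac{p_\lambda}{q_\lambda}\chi_\lambda(\pi^{-1})$. As a consistency check, this is the character formula for $\Pi_\lambda$ divided by $q_\lambda f_\lambda$.

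The main obstacle is purely a matter of conventions. We must make sure the Specht representations appearing in the Schur--Weyl isomorphism are unitary, so that the conjugate matrix coefficients are the inverse ones. We must also confirm that only $\lambda$ with at most $d$ rows appear. If some $\mu$ with more than $d$ rows is absent from $(\mathbb C^d)^{\otimes n}$, the orthogonality argument is unaffected, since it only tests against the $\mu$ that actually occur. Finally, the trace ordering $\tr{AR(\pi^{-1})}$, as opposed to $\tr{R(\pi^{-1})A}$, is immaterial by cyclicity.
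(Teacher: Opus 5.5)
Your proof is correct and follows essentially the same route as the paper. You substitute the Schur--Weyl block form of $P_d(\pi)$, use matrix-coefficient orthogonality to check that the $\mu$-block of the right-hand side equals $\frac{p_\mu}{q_\mu}\sigma_\mu$, and then set $\sigma_\lambda=\Id_{\PC_\lambda}/f_\lambda$ for the symmetric case. The paper instead uses the orthogonality relation in the form $\frac{f_\lambda}{n!}\sum_{\pi}R_\mu(\pi)_{ij}R_\lambda(\pi^{-1})_{lk}=\delta_{\mu\lambda}\delta_{ik}\delta_{jl}$, which holds without assuming $R_\lambda$ is unitary, so the convention issue you flag never arises there.
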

\begin{proof}
        The action of $P_d(\pi)$ can be split using Schur-Weyl duality as \begin{align}
        P_d(\pi) = \bigoplus_{\lambda \vdash_d n} \Id_{\QC_\lambda^d} \otimes R_\lambda(\pi).
    \end{align}
    The irreducible representations $R_\lambda(\pi)$ satisfy the matrix-coefficient orthogonality relation \begin{align}
        \frac{f_\lambda}{n!}\sum_{\pi \in \SC_n}R_\mu(\pi)_{ij}R_\lambda(\pi^{-1})_{lk} = \delta_{\mu\lambda}\delta_{ik}\delta_{jl}.
    \end{align}
    Recall from Definition~\ref{def:Werner-states} that \begin{align}
        \rho = \bigoplus_{\mu \vdash_d n}\Id_{\QC_\mu^d}\otimes \frac{p_\mu}{q_\mu}\sigma_\mu.
    \end{align}
    We claim $\rho$ is given by \begin{align}
        \rho = \frac{1}{n!}\sum_{\pi \in \SC_n}\left(\sum_{\lambda\vdash_d n}\frac{p_\lambda f_\lambda}{q_\lambda}\tr{\sigma_\lambda R_\lambda(\pi^{-1})}\right)P_d(\pi)
    \end{align}
    To verify this, we can decompose the right-hand side with Schur-Weyl duality: \begin{align}
        \frac{1}{n!}\sum_{\pi \in \SC_n}\left(\sum_{\lambda\vdash_d n}\frac{p_\lambda f_\lambda}{q_\lambda}\tr{\sigma_\lambda R_\lambda(\pi^{-1})}\right)P_d(\pi) &= \bigoplus_{\mu \vdash_d n}\Id_{\QC_\mu^d}\otimes \frac{1}{n!}\sum_{\pi \in \SC_n}\left(\sum_{\lambda\vdash_d n}\frac{p_\lambda f_\lambda}{q_\lambda}\tr{\sigma_\lambda R_\lambda(\pi^{-1})}\right)R_\mu(\pi)
    \end{align}
    It therefore suffices to show each block with respect to $\mu$ is equivalent to $\frac{p_\mu}{q_\mu}\sigma_\mu$. For the $(i,j)$-th coefficient, we have  \begin{align*}
         \frac{1}{n!}\sum_{\pi \in \SC_n}\left(\sum_{\lambda\vdash_d n}\frac{p_\lambda f_\lambda}{q_\lambda}\tr{\sigma_\lambda R_\lambda(\pi^{-1})}\right)(R_\mu(\pi))_{ij} & = \sum_{\lambda \vdash_d n} \frac{p_\lambda}{q_\lambda}\frac{f_\lambda}{n!}\sum_{\pi\in \SC_n}\left(\sum_{k,l}(\sigma_\lambda)_{kl}R_\lambda(\pi^{-1})_{lk}R_\mu(\pi)_{ij}\right), \\
         & =\sum_{\lambda \vdash_d n} \frac{p_\lambda}{q_\lambda}\sum_{k,l}(\sigma_\lambda)_{kl}\left(\frac{f_\lambda}{n!}\sum_{\pi\in \SC_n}R_\mu(\pi)_{ij}R_\lambda(\pi^{-1})_{lk} \right), \\
         & = \sum_{\lambda \vdash_d n} \frac{p_\lambda}{q_\lambda}\sum_{k,l}(\sigma_\lambda)_{kl}\left(\delta_{\mu\lambda}\delta_{ik}\delta_{jl} \right), \\
         & = \frac{p_\mu}{q_\mu}(\sigma_\mu)_{ij}.
    \end{align*}
    For the second part of the proposition, assume that $\rho$ is a symmetric multipartite Werner state. Schur-Weyl duality tells us that $\sigma_\lambda = \frac{\Id_{\PC_\lambda}}{f_\lambda}$. After substituting, we have \begin{align}
        \rho & = \frac{1}{n!}\sum_{\pi \in \SC_n}\left(\sum_{\lambda\vdash_d n}\frac{p_\lambda f_\lambda}{q_\lambda}\tr{\frac{\Id_{\PC_\lambda}}{f_\lambda} R_\lambda(\pi^{-1})}\right)P_d(\pi), \\
        & = \frac{1}{n!}\sum_{\pi \in \SC_n}\left(\sum_{\lambda\vdash_d n}\frac{p_\lambda}{q_\lambda}\tr{R_\lambda(\pi^{-1})}\right)P_d(\pi),
    \end{align} 
    and denoting $\chi_\lambda(\pi) = \tr{R_\lambda(\pi)}$ completes the proof.
\end{proof}

\subsection{Detailed Proof of Theorem~\ref{thm:symmetric-Werner-distinguishing}}\label{app:proofs-of-main-results}
\begin{proof}
	Throughout the proof, we write $\Pi_\lambda^{(m)}$ for the isotypical projector associated to the Young diagram $\lambda\vdash_d m$ in the tensor representation $\pi\mapsto P_d(\pi)$ of $\SC_m$ on $(\mathbb{C}^d)^{\otimes m}$, adding the super script $(m)$ to keep track of how many systems the projectors acts on.
	
    Our goal is to compute the coefficients $x_{\beta,i}$ of the operator $X_\lambda=\frac{1}{q_\lambda f_\lambda} \left(\Pi_\lambda^{\Gamma_m} - \trc_m \Pi_\lambda \otimes \tau_d\right)$ defined via the diagonalization in \eqref{eq:X_lambda-decomposition}.
    Both $\Pi_\lambda^{\Gamma_m}$ and $\trc_m \Pi_\lambda \otimes \tau_d$ are diagonal themselves in this basis:
    \begin{align}
    	\Pi_\lambda^{\Gamma_m} &= \bigoplus_{\beta\vdash_d m-1} \,\bigoplus_{i\colon \beta_i>\beta_{i+1}} y_{\beta,i} \Id_{\cQ_{\alpha^{(i)}}^d}\otimes \Id_{\cP_\beta} \label{eq:y-coefficients}\\
    	\trc_m \Pi_\lambda \otimes \tau_d &= \bigoplus_{\beta\vdash_d m-1} \,\bigoplus_{i\colon \beta_i>\beta_{i+1}} z_{\beta,i} \Id_{\cQ_{\alpha^{(i)}}^d}\otimes \Id_{\cP_\beta}. \label{eq:z-coefficients}
    \end{align}
    Once the $y_{\beta,i}$ and $z_{\beta,i}$ are determined, the coefficients $x_{\beta,i}$ can be computed as $x_{\beta,i} = (y_{\beta,i}-z_{\beta,i})/(q_\lambda f_\lambda)$.
    
    We start with computing the coefficients $y_{\beta,i}$.
    To this end, we recall the character formula for the isotypical projector $\Pi_\lambda^{(m)}$,
    \begin{align}
        \Pi_\lambda^{(m)} = \frac{f_\lambda}{m!} \sum_{\pi\in S_m} \chi_\lambda(\pi) P_d(\pi),
    \end{align}
    where $\chi_\lambda(\pi)$ is the irreducible character associated with the $S_m$-irrep $\cP_\lambda$.
    Since we will take the partial transpose with respect to the last system, it is useful to split up the sum over $\pi\in S_m$ into two parts: a summation over the subgroup $G=\lbrace \pi\in S_m:\pi(m)=m\rbrace\cong \SC_{m-1}$ of permutations fixing $m$, and a summation over the $m-1$ cosets of $G$ in $S_m$ with transversal $\lbrace (1,m),\dots,(m-1,m)\rbrace$:
    \begin{align}
        \Pi_\lambda^{(m)} = \frac{f_\lambda}{m!} \sum_{\pi \in G} \chi_\lambda(\pi) P_d'(\pi)\otimes \Id + \frac{f_\lambda}{m!} \sum_{k=1}^{m-1} \sum_{\pi\in G} \chi_\lambda((k,m)\pi) P_d(k,m) P_d'(\pi)\otimes \Id_m
    \end{align}
    where for $\pi\in G$ we denote by $P_d'(\pi)$ the effective permutation operator acting only on the first $m-1$ systems such that $P_d(\pi) = P_d'(\pi)\otimes \Id_m$ for such permutations.
    Taking the partial transpose over the last system,
    \begin{align}
        \left(\Pi_\lambda^{(m)}\right)^{\Gamma_m} &= \frac{f_\lambda}{m!} \sum_{\pi \in G} \chi_\lambda(\pi) P_d'(\pi)\otimes \Id_m^T + \frac{f_\lambda}{m!} \sum_{k=1}^{m-1} \sum_{\pi\in G} \chi_\lambda((k,m)\pi) \left(P_d(k,m)\right)^{\Gamma_m} \left(P_d'(\pi)\otimes \Id_m\right)\\
        &= \frac{f_\lambda}{m!} \sum_{\pi \in G} \chi_\lambda(\pi) P_d'(\pi)\otimes \Id_m + \frac{f_\lambda}{m!} \sum_{k=1}^{m-1} \sum_{\pi\in G} \chi_\lambda((k,m)\pi) \psi^+_{km} \left( P_d'(\pi)\otimes \Id_m \right),
        \label{eq:expanded-isotypical-projector}
    \end{align}
    where we used that $\left(P_d(k,m)\right)^{\Gamma_m}=\psi^+_{km}=\sum_{i,j} |i\rangle\langle j|_k\otimes |i\rangle\langle j|_m$ is the unnormalized maximally entangled state on systems $k$ and $m$, and we are omitting identity operators acting on the remaining $m-2$ systems.

    Using the branching rule for symmetric groups,
    \begin{align}
        \chi_\lambda\big|^{\SC_m}_{\SC_{m-1}} = \sum_{\mu\prec \lambda} \chi_\mu,
        \label{eq:branching-rule2}
    \end{align}
    the first sum in \eqref{eq:expanded-isotypical-projector} can be rewritten as
    \begin{align}
        \frac{f_\lambda}{m!} \sum_{\pi \in G} \chi_\lambda(\pi) P_d'(\pi)\otimes \Id_m &= \frac{f_\lambda}{m!} \sum_{\pi \in G} \sum_{\mu\prec\lambda} \chi_\mu(\pi) P_d'(\pi)\otimes \Id_m\\
        &= \frac{f_\lambda}{m} \sum_{\mu\prec \lambda} \frac{1}{f_\mu} \frac{f_\mu}{(m-1)!} \sum_{\pi\in G} \chi_\mu(\pi) P_d'(\pi)\otimes \Id_m\\
        &= \frac{f_\lambda}{m} \sum_{\mu\prec \lambda} \frac{1}{f_\mu} \Pi^{(m-1)}_\mu \otimes \Id_m. \label{eq:sum-over-G-rewritten}
    \end{align}
    
    \newcommand{\ch}[1]{\mathbf{1}_{#1}}
    We now proceed with computing the coefficients $y_{\beta,i}$ in \eqref{eq:y-coefficients}.
    First, we recall that we write 
    \begin{align} 
    	\alpha^{(i)} = (\beta_1,\dots,\beta_{i-1},\beta_i-1,\beta_{i+1},\dots,\beta_d)
    \end{align}  
    for the weight obtained from the partition $\beta=(\beta_1,\dots,\beta_d)\vdash_d m-1$ by removing 1 in position $i$, provided $\beta_i>\beta_{i+1}$.
    We denote by $\Theta_{\alpha^{(i)}}^d$ the isotypical projectors corresponding to the weight $\alpha^{(i)}$ for the representation $U\mapsto U^{\otimes (m-1)}\otimes \overline{U}$ of $U(d)$, and we write $\ch{\mathsf{P}}$ for the indicator function of the logical statement $\mathsf{P}$.
    Then on the one hand, we get from \eqref{eq:y-coefficients} that
    \begin{align}
    	\trc\left(\left(\Pi^{(m)}_\lambda\right)^{\Gamma_m} \left(\Pi^{(m-1)}_\beta\otimes \Id_m\right) \Theta_{\alpha^{(i)}}^d\right) = y_{\beta,i} f_\beta \,q_{\alpha^{(i)}}.
    	\label{eq:coefficients-times-dimensions}
    \end{align}
    On the other hand, using \eqref{eq:expanded-isotypical-projector} and \eqref{eq:sum-over-G-rewritten},
    \begin{multline}
    	\trc\left(\left(\Pi^{(m)}_\lambda\right)^{\Gamma_m} \left(\Pi^{(m-1)}_\beta\otimes \Id_m\right) \Theta_{\alpha^{(i)}}^d\right)
    	= \frac{f_\lambda}{m} \sum_{\mu\prec \lambda} \frac{1}{f_\mu}  \trc\left(\left(\Pi^{(m-1)}_\mu \otimes \Id_m\right) \left( \Pi^{(m-1)}_\beta\otimes \Id_m\right)  \Theta_{\alpha^{(i)}}^d\right)\\
    	{} + \frac{f_\lambda}{m!} \sum_{k=1}^{m-1} \sum_{\pi\in G} \chi_\lambda((k,m)\pi) \trc\left( \psi^+_{km} \left( P_d'(\pi)\otimes \Id_m \right) \left(\Pi^{(m-1)}_\beta\otimes \Id_m\right) \Theta_{\alpha^{(i)}}^d\right). \label{eq:y-coefficients-expanded}
    \end{multline}
    For the first sum in \eqref{eq:y-coefficients-expanded}, orthogonality of the projectors $\Pi^{(m-1)}_*$ gives
    \begin{align} 
    	\frac{f_\lambda}{m} \sum_{\mu\prec \lambda} \frac{1}{f_\mu}  \trc\left(\left(\Pi^{(m-1)}_\mu \otimes \Id_m\right) \left( \Pi^{(m-1)}_\beta\otimes \Id_m\right)  \Theta_{\alpha^{(i)}}^d\right) &= \ch{\beta\prec\lambda}\frac{f_\lambda}{mf_\beta} \trc\left( \left( \Pi_{\beta}^{(m-1)}\otimes \Id_m\right) \Theta_{\alpha^{(i)}}^d\right)\\
    	&= \ch{\beta\prec\lambda}\frac{f_\lambda q_{\alpha^{(i)}} }{m}.
    	\label{eq:S_(m-1)-sum}
    \end{align}
    
    For the second sum in \eqref{eq:y-coefficients-expanded}, we use the same argument as in \cite[App.~A]{christandl2021Asymptotic}: The operator $\psi^+_{km}$ is invariant under $U\otimes \overline{U}$, and hence on the support of $\psi^+$ the action of $U(d)$ via $U^{\otimes (m-1)}\otimes \overline{U}$ coincides with the action via $U^{\otimes (m-2)}$, and hence
    \begin{align}
    	\psi^+_{km} \Theta_{\alpha^{(i)}}^d = \ch{\alpha^{(i)}\vdash_d m-2} \psi^+_{km} \left(\Pi^{(m-2)}_{\alpha^{(i)}} \otimes \Id_{km}\right),\label{eq:effective-action-on-psi}
    \end{align}
    where $\ch{\alpha^{(i)}\vdash_d m-2}$ reflects the fact that this is non-zero only if $\alpha^{(i)}$ is a valid partition satisfying $(\alpha^{(i)})_{d}\geq 0$.
    Using \eqref{eq:effective-action-on-psi}, the second sum in \eqref{eq:y-coefficients-expanded} becomes
    \begin{align}
    	& \frac{f_\lambda}{m!} \sum_{k=1}^{m-1} \sum_{\pi\in G} \chi_\lambda((k,m)\pi) \trc\left(\left( \psi^+_{km} \left( P_d'(\pi)\otimes \Id_m \right) \right) \left(\Pi^{(m-1)}_\beta\otimes \Id_m\right) \Theta_{\alpha^{(i)}}^d\right) \notag\\
    	&\qquad {} = \ch{\alpha^{(i)}\vdash_d m-2} \frac{f_\lambda}{m!} \sum_{k=1}^{m-1} \sum_{\pi\in G} \chi_\lambda((k,m)\pi) \trc\left( \psi^+_{km} \left( P_d'(\pi)\otimes \Id_m \right) \left(\Pi^{(m-1)}_\beta\otimes \Id_m\right) \left(\Pi^{(m-2)}_{\alpha^{(i)}} \otimes \Id_{km}\right)\right) \label{eq:take-partial-trace}\\
    	&\qquad {} = \ch{\alpha^{(i)}\vdash_d m-2} \frac{f_\lambda}{m!} \sum_{k=1}^{m-1} \sum_{\pi\in G} \chi_\lambda((k,m)\pi) \trc\left(  P_d'(\pi) \Pi^{(m-1)}_\beta \left(\Pi^{(m-2)}_{\alpha^{(i)}} \otimes \Id_{k}\right)\right)\\
        &\qquad {} = \ch{\alpha^{(i)}\vdash_d m-2} \frac{f_\lambda}{m!} \sum_{k=1}^{m-1} \sum_{\pi\in G} \chi_\lambda((k,m)\pi) \trc\left(  P_d'(\pi) \Pi^{(m-1)}_\beta P_d'(k,m-1) \left(\Pi^{(m-2)}_{\alpha^{(i)}} \otimes \Id_{m-1}\right)P_d'(k,m-1)\right)\\
        &\qquad {} = \ch{\alpha^{(i)}\vdash_d m-2} \frac{f_\lambda}{m!} \sum_{k=1}^{m-1} \sum_{\pi\in G} \chi_\lambda((k,m)\pi) \trc\left(  P_d'((k,m-1)\pi(k,m-1)) \Pi^{(m-1)}_\beta \left(\Pi^{(m-2)}_{\alpha^{(i)}} \otimes \Id_{m-1}\right)\right) \\
        &\qquad {} = \ch{\alpha^{(i)}\vdash_d m-2} \frac{f_\lambda}{m!} \sum_{k=1}^{m-1} \sum_{\pi\in G} \chi_\lambda((k,m)(k,m-1)\pi(k,m-1)) \trc\left(  P_d'(\pi) \Pi^{(m-1)}_\beta  \left(\Pi^{(m-2)}_{\alpha^{(i)}} \otimes \Id_{m-1}\right)\right) \label{eq:conj-iso}\\
        &\qquad {} = \ch{\alpha^{(i)}\vdash_d m-2} \frac{f_\lambda (m-1)}{m!} \sum_{\pi\in G} \chi_\lambda((m-1,m)\pi) \trc\left(  P_d'(\pi) \Pi^{(m-1)}_\beta  \left(\Pi^{(m-2)}_{\alpha^{(i)}} \otimes \Id_{m-1}\right)\right) \label{eq:transposition-product}\\
        &\qquad {} = \ch{\alpha^{(i)}\vdash_d m-2} \frac{f_\lambda (m-1)}{m} \sum_{\mu\prec\lambda}\frac{1}{f_\mu}\sum_{\nu\prec\mu} \frac{1}{c(\lambda\setminus\mu)-c(\mu\setminus\nu)}  \trc\left(  \left(\Pi^{(m-2)}_\nu \Pi^{(m-2)}_{\alpha^{(i)}}\otimes \Id_{m-1}\right)\Pi^{(m-1)}_\mu \Pi^{(m-1)}_\beta \right) \label{eq:apply-lemma}\\
        &\qquad {} = \ch{\beta\prec\lambda \wedge \alpha^{(i)}\vdash_d m-2} \frac{m-1}{m} \frac{f_\lambda f_{\alpha^{(i)}}}{f_\beta} \frac{q_\beta}{c(\lambda\setminus\beta)-c(\beta\setminus\alpha^{(i)})},
        \label{eq:other-sum}
    \end{align}
    where in \eqref{eq:take-partial-trace} we took the partial trace over the $m$-th system, in \eqref{eq:conj-iso} we used the fact that $\pi\mapsto (k,m-1)\pi (k,m-1)$ is an isomorphism (and hence a bijection) on $G$, in \eqref{eq:transposition-product} we used cyclicity of $\chi_\lambda$ and the identity $(k,m-1)(k,m)(k,m-1)=(m-1,m)$, and in \eqref{eq:apply-lemma} we applied Lemma~\ref{lem:character-formula}.
    
    Using the expressions \eqref{eq:S_(m-1)-sum} and \eqref{eq:other-sum} for the two sums in \eqref{eq:y-coefficients-expanded} and dividing by $f_\beta q_{\alpha^{(i)}}$ according to \eqref{eq:coefficients-times-dimensions} determines the coefficients $y_{\beta,i}$ of $(\Pi_\lambda)^{\Gamma_m}$ in \eqref{eq:y-coefficients} as
    \begin{align}
    	y_{\beta,i} &= \ch{\beta\prec\lambda}\frac{f_\lambda }{m f_\beta} + \ch{\beta\prec\lambda \wedge \alpha^{(i)}\vdash_d m-2} \frac{m-1}{m} \frac{f_\lambda f_{\alpha^{(i)}}}{f_\beta^2 } \frac{q_\beta}{q_{\alpha^{(i)}}} \frac{1}{c(\lambda\setminus\beta)-c(\beta\setminus\alpha^{(i)})}. \label{eq:y-coefficients-formula}
    \end{align}
    
    It remains to determine the coefficients $z_{\beta,i}$ for the operator $\trc_m\Pi_\lambda\otimes \tau_d$ defined in \eqref{eq:z-coefficients}.
    To this end, we use a formula for the partial trace over an isotypical projector $\Pi_\lambda$ derived in \cite{christandl2007oneandahalf}:
    \begin{align}
    	\trc_m\Pi^{(m)}_\lambda = q_\lambda \sum_{\mu\prec\lambda}\frac{1}{q_\mu} \Pi^{(m-1)}_\mu.
    \end{align}
    Using this identity, we compute:
    \begin{align}
    	f_\beta q_{\alpha^{(i)}} z_{\beta,i} &= \trc\left(\left(\trc_m\Pi^{(m)}_\lambda\otimes \tau_d\right)\left(\Pi^{(m-1)}_\beta\otimes\Id_m\right)\Theta_{\alpha^{(i)}}^d\right)\\
    	&= \frac{q_\lambda}{d} \sum_{\mu\prec\lambda} \frac{1}{q_\mu} \trc\left(\left(\Pi^{(m-1)}_\mu\Pi^{(m-1)}_\beta\otimes\Id_m\right) \Theta_{\alpha^{(i)}}^d\right)\\
    	&= \ch{\beta\prec\lambda} \frac{q_\lambda}{dq_\mu}f_\beta q_{\alpha^{(i)}},\\
    	\intertext{and thus}
    	z_{\beta,i} &	= \ch{\beta\prec\lambda} \frac{q_\lambda}{dq_\mu}.\label{eq:z-coefficients-formula}
    \end{align}
    The final result is obtained by using \eqref{eq:y-coefficients-formula} and \eqref{eq:z-coefficients-formula} in $x_{\beta,i} = (y_{\beta,i}-z_{\beta,i})/(q_\lambda f_\lambda)$, concluding the proof.
\end{proof}

In the proof above we make use of the following technical lemma.

\begin{lemma}\label{lem:character-formula}
	Let $\lambda\vdash_d m$ and $G = \lbrace\pi\in S_m : \pi(m)=m\rbrace \cong \SC_{m-1}$.
    Let $P_d'(\pi)$ be the operator such that $P_d(\pi) = P_d'(\pi)\otimes \Id_m$ on $(\mathbb{C}^d)^{\otimes m}$.
	Then,
	\begin{align}
		\sum_{\pi\in G} \chi_\lambda((m-1,m)\pi) P_d'(\pi) = (m-1)! \sum_{\mu\prec\lambda}\frac{1}{f_\mu}\sum_{\nu\prec\mu} \frac{1}{c(\lambda\setminus\mu)-c(\mu\setminus\nu)} \left(\Pi^{(m-2)}_\nu\otimes \Id_{m-1}\right)\Pi^{(m-1)}_\mu.
	\end{align}
\end{lemma}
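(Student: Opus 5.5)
The plan is to evaluate the character $\chi_\lambda((m-1,m)\pi)$ for $\pi\in G$ in Young's seminormal (Gelfand--Tsetlin) basis of $\PC_\lambda$. In that basis, restricting along the chain $\SC_{m-2}\subset\SC_{m-1}\subset\SC_m$ is block diagonal. Then I will resum over $\pi$ using matrix-coefficient orthogonality, exactly as in the proof of Prop.~\ref{prop:fourier-decomp-Werner-states}.

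\textbf{Step 1: fix the basis and the needed facts.} The basis vectors $\ket{T}$ of $\PC_\lambda$ are indexed by standard Young tableaux, i.e.\ by chains $\emptyset\prec\cdots\prec\nu\prec\mu\prec\lambda$, where $\mu$ is the shape after removing box $m$ and $\nu$ the shape after also removing box $m-1$. By the branching rule \eqref{eq:branching-rule}, every $\pi\in G\cong\SC_{m-1}$ acts as $R_\lambda(\pi)=\bigoplus_{\mu\prec\lambda}R_\mu(\pi)$. In particular $R_\lambda(\pi)$ preserves the label $\mu$, and inside $\PC_\mu$ the basis is again the Gelfand--Tsetlin basis. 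Young's orthogonal form describes the adjacent transposition $s=(m-1,m)$: it maps $\ket{T}$ into the span of $\ket{T}$ and $\ket{T'}$, where $T'$ is $T$ with $m-1$ and $m$ swapped (if standard). Its diagonal entry is $\bra{T}R_\lambda(s)\ket{T}=1/(c(\lambda\setminus\mu)-c(\mu\setminus\nu))$, the inverse axial distance. Since $T'$ has a different intermediate shape $\mu'\neq\mu$, the off-diagonal entry connects different $\mu$-blocks.

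\textbf{Step 2: compute the character.} Write
\begin{align}
\chi_\lambda(s\pi)=\sum_{T,T'}\bra{T}R_\lambda(s)\ket{T'}\bra{T'}R_\lambda(\pi)\ket{T}.
\end{align}
Because $R_\lambda(\pi)$ is block diagonal in $\mu$ while the off-diagonal part of $R_\lambda(s)$ changes $\mu$, only the $T'=T$ terms survive. Grouping by $(\mu,\nu)$ gives
\begin{align}
\chi_\lambda(s\pi)=\sum_{\mu\prec\lambda}\sum_{\nu\prec\mu}\frac{\tr{R_\mu(\pi)Q^\mu_\nu}}{c(\lambda\setminus\mu)-c(\mu\setminus\nu)},
\end{align}
where $Q^\mu_\nu$ is the orthogonal projector onto the $\nu$-block of $\PC_\mu$ under $\SC_{m-2}\subset\SC_{m-1}$. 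Two symmetries let me replace $\pi$ by $\pi^{-1}$ inside the character. Characters of $\SC_m$ are real and class functions, so $\chi_\lambda(s\pi)=\chi_\lambda(\pi^{-1}s)=\chi_\lambda(s\pi^{-1})$. The sum over $G$ is invariant under $\pi\mapsto\pi^{-1}$. Together these let me write the coefficient as $\tr{Q^\mu_\nu R_\mu(\pi^{-1})}$, which matches the convention of Prop.~\ref{prop:fourier-decomp-Werner-states}.

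\textbf{Step 3: resum over $G$.} Applying Schur--Weyl duality on $m-1$ sites, $P_d'(\pi)\cong\bigoplus_{\kappa}\Id_{\QC_\kappa^d}\otimes R_\kappa(\pi)$. The matrix-coefficient orthogonality relation, used as in the appendix proof of Prop.~\ref{prop:fourier-decomp-Werner-states}, then gives
\begin{align}
\sum_{\pi\in G}\tr{Q^\mu_\nu R_\mu(\pi^{-1})}P_d'(\pi)\cong\frac{(m-1)!}{f_\mu}\,\Id_{\QC_\mu^d}\otimes Q^\mu_\nu.
\end{align}
All other isotypic blocks $\kappa\neq\mu$ vanish.

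\textbf{Step 4: identify the projector (main obstacle).} It remains to show that $\Id_{\QC_\mu^d}\otimes Q^\mu_\nu$ equals $(\Pi^{(m-2)}_\nu\otimes\Id_{m-1})\Pi^{(m-1)}_\mu$. This product is a genuine projector because the two factors commute: $\Pi^{(m-2)}_\nu\otimes\Id$ lies in the span of $P_d'(\SC_{m-2})$, which commutes with the central element $\Pi^{(m-1)}_\mu$. On the $\mu$-isotypic block, $\Pi^{(m-2)}_\nu\otimes\Id$ acts as $\Id_{\QC_\mu^d}$ tensored with the $\nu$-isotypic projector of $\mathrm{Res}\,\PC_\mu$. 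Since the restriction $\SC_{m-1}\to\SC_{m-2}$ is multiplicity free, that isotypic projector is exactly $Q^\mu_\nu$. The care needed is in checking that the Gelfand--Tsetlin basis used for Young's form is compatible with the fixed Schur-basis identification, so that the two descriptions of $Q^\mu_\nu$ agree.

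Multiplying by the axial-distance coefficients from Step 2 and summing over $\mu\prec\lambda$ and $\nu\prec\mu$ then gives the stated formula.
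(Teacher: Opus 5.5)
Your proposal is correct and follows essentially the paper's route. Both arguments work in the Gelfand--Tsetlin basis and use Young's orthogonal form (Vershik--Okounkov) to show that the compression of $R_\lambda((m-1,m))$ onto each $\mu$-block is diagonal with entries $1/(c(\lambda\setminus\mu)-c(\mu\setminus\nu))$, then resum over $G$ with Schur orthogonality. The differences are cosmetic: you evaluate the character before resumming and handle $\pi\mapsto\pi^{-1}$ via reality of characters, whereas the paper transposes the real symmetric block. Your Step~4 observation (isotypic projectors are basis-independent and the restriction is multiplicity-free) supplies the justification the paper leaves implicit for identifying the $\nu$-block projector tensored with $\Id_{\QC_\mu^d}$ with $(\Pi^{(m-2)}_\nu\otimes\Id_{m-1})\Pi^{(m-1)}_\mu$.
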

\begin{proof}
	Throughout the proof we abbreviate $\tau=(m-1,m)$.
	The subgroup $G$ is isomorphic to $\SC_{m-1}$, and hence the branching rule \eqref{eq:branching-rule2} implies that $p_\lambda(\pi) = \oplus_{\mu\prec\lambda} p_\mu(\pi)$.
	The character value $\chi_\lambda(\tau \pi)$ can thus be computed as
	\begin{align}
		\chi_\lambda(\tau \pi) = \sum_{\mu\prec\lambda} \trc_{\cP_\mu}\left(\tau^\lambda_\mu p_\mu(\pi)\right) = \sum_{\mu\prec\lambda} \sum_{i,j} (\tau_\mu^\lambda)_{ij}p_\mu(\pi)_{ji},
	\end{align}
	where $\tau_\mu^\lambda$ is the projection of $p_\lambda(\tau)$ onto the irrep $\cP_\mu$ (which we will compute explicitly below), and for each $\mu\prec\lambda$ the sum in the second equality is over some basis of $\cP_\mu$.
	We also consider the isotpyical decomposition of $(\mathbb{C}^d)^{\otimes m-1}$ with respect to the tensor representation of $\SC_{m-1}$, 
	\begin{align}
		(\mathbb{C}^d)^{\otimes m-1} &\cong \bigoplus_{\omega\vdash_d m-1} \cP_\omega \otimes \cQ_\omega^d\\
        \intertext{and the corresponding action of $\pi\in \SC_{m-1}$ as}
		P_d'(\pi) &\cong \bigoplus_{\omega\vdash_d m-1} p_\omega(\pi) \otimes \Id_{\cQ_\omega^d}.
	\end{align}
	
	Substituting these expressions, we compute:
	\begin{align}
		\sum_{\pi\in G} \chi_\lambda(\tau\pi) P_d'(\pi) &= \sum_{\pi\in \SC_{m-1}} \sum_{\mu\prec\lambda} \sum_{i,j} (\tau_\mu^\lambda)_{ij}p_\mu(\pi)_{ji} \bigoplus_{\omega\vdash_d m-1} p_\omega(\pi) \otimes \Id_{\cQ_\omega^d}\\
		&= \sum_{\mu\prec\lambda} \sum_{i,j}(\tau_\mu^\lambda)_{ij} \bigoplus_{\omega\vdash_d m-1} \sum_{\pi\in \SC_{m-1}}p_\mu(\pi)_{ji} p_\omega(\pi)_{kl} |k\rangle \langle l|_{\cP_\omega} \otimes \Id_{\cQ_\omega^d}\\
		&= \sum_{\mu\prec\lambda} \frac{(m-1)!}{f_\mu} \sum_{i,j} (\tau_\mu^\lambda)_{ij}  |j\rangle\langle i| \otimes \Id_{Q_\mu^d}\label{eq:schur-orth}\\
		&= \sum_{\mu\prec\lambda} \frac{(m-1)!}{f_\mu} (\tau_\mu^\lambda)^T \otimes \Id_{Q_\mu^d}\\
		&= \sum_{\mu\prec\lambda} \frac{(m-1)!}{f_\mu} \tau_\mu^\lambda\otimes \Id_{Q_\mu^d} \label{eq:mu-component-reduction}
	\end{align}
	where we used Schur's orthogonality relation $\sum_{\pi\in \SC_{m-1}}p_\mu(\pi)_{ji} p_\omega(\pi)_{kl} = \frac{(m-1)!}{f_\mu} \delta_{\mu\omega}\delta_{jk}\delta_{il}$ in \eqref{eq:schur-orth}.
	
	To compute $\tau_\mu^\lambda$, we follow \cite{vershik2005new} and consider the Gelfand-Tsetlin basis $\lbrace |T\rangle : T\in\SYT(\lambda)\rbrace$ for the $S_m$-irrep $\cP_\lambda$, where $\SYT(\lambda)$ is the set of standard Young tableaux (SYT) of shape $\lambda$.
	In any $T\in \SYT(\lambda)$ the label \texttt{m} can only appear in a box to the very right of a column and such that there is no box below it; in other words, the possible locations of \texttt{m} are the boxes that can be removed from $\lambda$ to obtain another Young diagram with $m-1$ boxes.
	Fixing one of these positions for \texttt{m}, the subset of SYTs with \texttt{m} in that position is an orthonormal basis for an isomorphic copy of the irreducible representation $\cP_{\mu}$ inside $\cP_\lambda$, where $\mu$ is obtained from $\lambda$ by removing the box with the label \texttt{m} in $T$.
	Likewise, the subset of SYTS with the labels \texttt{m-1} and \texttt{m} in fixed positions is an orthonormal basis for an isomorphic copy of $\cP_\nu$ inside $\cP_\mu$, where now $\nu$ is obtained from $\mu$ by removing the box with the label \texttt{m-1} in $T$.
	Denoting by $T(\lambda\setminus\mu)$ the label of the box in $\lambda\setminus\mu$, the map $P_\mu^\lambda$ projecting down onto a particular copy of $\cP_\mu$ in $\cP_\lambda$ is therefore given by 
	\begin{align}
		P_\mu^\lambda = \sum_{\substack{T\in\SYT(\lambda)\colon\\ T(\lambda\setminus\mu)=\texttt{m}}} |T'\rangle\langle T|,
        \label{eq:project-down}
	\end{align}
	where we denote by $T'$ the standard Young tableau of shape $\mu$ obtained from $T$ be removing the box labeled with \texttt{m}.
	We can then write
	\begin{align}
		\tau_\mu^\lambda = P_\mu^\lambda p_\lambda(\tau) (P_\mu^\lambda)^\dagger = \sum_{\substack{S\in\SYT(\lambda)\colon\\ S(\lambda\setminus\mu)=\texttt{m}}}\sum_{\substack{T\in\SYT(\lambda)\colon\\ T(\lambda\setminus\mu)=\texttt{m}}} |S'\rangle\langle S| p_\lambda(\tau) |T\rangle\langle T'|
		\label{eq:projected-transposition}
	\end{align}
	
	According to \cite[Prop.~6.2]{vershik2005new}, the transposition $\tau=(m-1,m)$ acts on a basis vector $|T\rangle$ of $\cP_\lambda$ in two possible ways: 
	\begin{align}
		p_\lambda(\tau)|T\rangle = \begin{cases}
			r^{-1}|T\rangle & \text{if \texttt{m-1} and \texttt{m} are in the same row or column;}\\
			r^{-1} |T\rangle + \sqrt{1-r^{-2}} |\tau T\rangle & \text{otherwise,}
		\end{cases}
	\end{align}
	where $r=c_T(\texttt{m})-c_T(\texttt{m-1})$, with $c_T(\texttt{j})$ denoting the content of the box holding the label~\texttt{j}, and $\tau T$ is the SYT obtained from $T$ by switching \texttt{m-1} and \texttt{m}.
	Note that we have $\langle S| \tau T\rangle=0$ for any SYT $S$ with \texttt{m} in the same position as in $T$.
	Hence, \eqref{eq:projected-transposition} simplifies to
	\begin{align}
		\tau_\mu^\lambda &= \sum_{\substack{S\in\SYT(\lambda)\colon\\ S(\lambda\setminus\mu)=\texttt{m}}}\sum_{\substack{T\in\SYT(\lambda)\colon\\ T(\lambda\setminus\mu)=\texttt{m}}} |S'\rangle\langle S| p_\lambda(\tau) |T\rangle\langle T'|\\
		&= \sum_{\substack{S\in\SYT(\lambda)\colon\\ S(\lambda\setminus\mu)=\texttt{m}}}\sum_{\substack{T\in\SYT(\lambda)\colon\\ T(\lambda\setminus\mu)=\texttt{m}}} \frac{1}{c(\lambda\setminus\mu) - c_T(\texttt{m-1})} |S'\rangle\langle S | T\rangle\langle T'|\\
		&= \sum_{\substack{T\in\SYT(\lambda)\colon\\ T(\lambda\setminus\mu)=\texttt{m}}} \frac{1}{c(\lambda\setminus\mu) - c_T(\texttt{m-1})} | T'\rangle\langle T'|\\
		&= \sum_{\nu\prec\mu} \frac{1}{c(\lambda\setminus\mu) - c(\mu\setminus\nu))} \sum_{\substack{T\in\SYT(\lambda)\colon\\ T(\lambda\setminus\mu)=\texttt{m}\text{ and }T(\mu\setminus\nu)=\texttt{m-1}}} |T'\rangle \langle T'|\\
		&= \sum_{\nu\prec\mu} \frac{1}{c(\lambda\setminus\mu) - c(\mu\setminus\nu)} P_\nu^\mu,
	\end{align}
    where $P_\nu^\mu$ is defined in analogy to \eqref{eq:project-down}.
	
	Substituting this expression for $\tau_\mu^\lambda$ in \eqref{eq:mu-component-reduction} gives
	\begin{align}
		\sum_{\pi\in G} \chi_\lambda(\tau\pi) P_d'(\pi) &= \sum_{\mu\prec\lambda} \frac{(m-1)!}{f_\mu} \tau_\mu^\lambda\otimes \Id_{Q_\mu^d}\\
		&= \sum_{\mu\prec\lambda} \frac{(m-1)!}{f_\mu} \sum_{\nu\prec\mu} \frac{1}{c(\lambda\setminus\mu) - c(\mu\setminus\nu)} P_\nu^\mu \otimes \Id_{Q_\mu^d}\\
		&= \sum_{\mu\prec\lambda} \frac{(m-1)!}{f_\mu} \sum_{\nu\prec\mu} \frac{1}{c(\lambda\setminus\mu) - c(\mu\setminus\nu)} \left(\Pi^{(m-2)}_\nu\otimes \Id_{m-1}\right)\Pi^{(m-1)}_\mu,
	\end{align}
	which proves the claim.
\end{proof}

\subsection{Counterexample to Tightening Harrow's Coefficient Bound}
Harrow's coefficient estimate cannot generally be strengthened from $O(d^{-\lvert\pi\rvert/2})$ to $O(d^{-\lvert\pi\rvert})$. To see this, let $\tau\coloneq(123)\in\SC_3$ and, for $d\geq3$, define
\begin{align}
    M
    \coloneq
    \frac{1}{2}\Id
    +
    \frac{1}{4d}
    \left(
        P_d(\tau)+P_d(\tau^{-1})
    \right).
\end{align}
For every $S\subseteq\{1,2,3\}$, a three-cycle crosses the cut $S:S^c$ at most once in either direction. The singular-value formula for partially transposed permutation operators from Ref.~\cite[Lem.~4]{harrow2023Approximate} therefore gives
\begin{align}
    \norm{P_d(\tau)^{\Gamma_S}}_\infty
    \leq d,
    \qquad
    \norm{P_d(\tau^{-1})^{\Gamma_S}}_\infty
    \leq d.
\end{align}
Consequently,
\begin{align}
    \norm{
        M^{\Gamma_S}-\frac{1}{2}\Id
    }_\infty
    &\leq
    \frac{1}{4d}
    \left(
        \norm{P_d(\tau)^{\Gamma_S}}_\infty
        +
        \norm{P_d(\tau^{-1})^{\Gamma_S}}_\infty
    \right) \leq
    \frac{1}{2}.
\end{align}
Since partial transposition preserves Hermiticity, this implies
\begin{align}
    0\preceq M^{\Gamma_S}\preceq\Id
    \quad
    \text{for every} \quad S\subseteq\{1,2,3\}.
\end{align}
Thus, $\{M,\Id-M\}$ is PPT-BOTH. In the stable range $d\geq3$, the permutation operators are linearly
independent, so the above expansion is unique and
\begin{align}
    m_\tau=m_{\tau^{-1}}=\frac{1}{4d}.
\end{align}
However, $\lvert\tau\rvert=2$. Hence, a uniform coefficient estimate
of the form
\begin{align}
    \abs{m_\pi}
    \leq
    C_n d^{-\lvert\pi\rvert},
\end{align}
with $C_n$ independent of $d$, would require
\begin{align}
    \frac{1}{4d}
    \leq
    \frac{C_3}{d^2},
\end{align}
which fails for arbitrarily large $d$. Therefore, the
$O(n^2/d)$ data-hiding bound cannot be obtained merely by improving
Harrow's individual coefficient estimates to
$d^{-\lvert\pi\rvert}$; instead, one must control the distinguishing
functional more collectively.

\end{document}